\documentclass[journal,onecolumn]{IEEEtran}

\newcommand{\LF}{\ensuremath{\mathit{LF}}}
\newcommand{\BWT}{\ensuremath{\mathit{BWT}}}

\newcommand{\ISA}{\ensuremath{\mathit{ISA}}}
\newcommand{\SA}{\ensuremath{\mathit{SA}}}

\newcommand{\LCP}{\ensuremath{\mathit{LCP}}}

\newcommand{\rank}{\ensuremath{\mathrm{rank}}}

\newcommand{\dd}{\mathinner{.\,.}}

\usepackage{graphicx}
\usepackage{color}
\usepackage{amsmath}
\usepackage{amssymb}
\usepackage{amsthm}
\usepackage{todonotes}

\newtheorem{definition}{Definition}
\newtheorem{theorem}{Theorem}
\newtheorem{lemma}[theorem]{Lemma}
\newenvironment{example}{{\medskip \noindent \em Example:\ }}{}

\begin{document}

\title{On the Approximation Ratio of Ordered Parsings}
\author{%
%Travis Gagie\inst{1,2} \and Gonzalo Navarro\inst{2,3} \and Nicola Prezza\inst{4}}
%EIT, Diego Portales University, Chile 
Gonzalo Navarro, Carlos Ochoa, and Nicola Prezza%
\thanks{An early version of this paper appeared in {\em Proc. LATIN'18} \cite{GNP18}.}
\thanks{Gonzalo Navarro and Carlos Ochoa are with
Center for Biotechnology and Bioengineering (CeBiB), 
Department of Computer Science, University of Chile, Chile.
Gonzalo Navarro is also with Millennium Institute for Foundational Research on Data
(IMFD), Chile. Nicola Prezza is with 
Department of Computer Science, University of Pisa, Italy.}
}%

\maketitle

\begin{abstract}
Shannon's entropy is a clear lower bound for statistical compression. The
situation is not so well understood for dictionary-based compression. A 
plausible lower bound is $b$, the least number of phrases of a general 
bidirectional parse of a text, where phrases can be copied from anywhere 
else in the text. Since computing $b$ is NP-complete, a popular gold standard
is $z$, the number of phrases in the Lempel-Ziv parse of the text, which is
the optimal one when phrases can be copied only from the left. While $z$ can
be computed in linear time with a greedy algorithm,
almost nothing has been known for decades about its approximation 
ratio with respect to $b$. In this paper we prove that $z=O(b\log(n/b))$, where
$n$ is the text length. We also show that the bound is tight as a function of
$n$, by exhibiting a text family where $z = \Omega(b\log n)$. 
Our upper bound is obtained by building a run-length context-free grammar based on a locally
consistent parsing of the text. Our lower bound is obtained by relating $b$ 
with $r$, the number of equal-letter runs in the Burrows-Wheeler transform of 
the text.
We proceed by observing that Lempel-Ziv is just one particular case of
{\em greedy} parses, meaning that the optimal value of $z$ is obtained by 
scanning the text and maximizing the phrase length at each step, and of 
{\em ordered} parses, meaning that there is an increasing order between phrases
and their sources. As a new example of 
ordered greedy parses, we introduce {\em lexicographical} parses, where phrases can 
only be copied from lexicographically smaller text locations. We prove 
that the size $v$ of the optimal lexicographical parse is also obtained greedily
in $O(n)$ time, that $v=O(b\log(n/b))$, and that there exists a text family 
where $v = \Omega(b\log n)$. Interestingly, we also show that $v = O(r)$ 
because $r$ also induces a lexicographical parse, 
whereas $z = \Omega(r\log n)$ holds on some text families. 
We obtain some results on parsing complexity and size that hold on some general
classes of greedy ordered parses.
In our way, we also prove other relevant bounds between compressibility measures, 
especially with those related to smallest grammars of various types generating (only) the text.
\end{abstract}

\begin{IEEEkeywords}
Lempel-Ziv complexity; Repetitive sequences; Optimal bidirectional
parsing; Greedy parsing; Ordered parsing; Lexicographic parsing; 
Run-length compressed Burrows-Wheeler Transform; Context-free grammars;
Collage systems
\end{IEEEkeywords}

%!TEX root = paper.tex

\section{Introduction}

Shannon \cite{Sha48} defined a measure of entropy that serves as a lower bound 
to the attainable compression ratio on any source that emits symbols according
to a certain probabilistic model. An attempt to measure the compressibility of
finite texts $T[1..n]$, other than the non-computable Kolmogorov complexity
\cite{Kol65}, is the notion of empirical entropy \cite{CT06}, where some
probabilistic model is assumed and its parameters are estimated from the 
frequencies observed in the text.
Other measures that, if the text is generated from a probabilistic source, 
converge to its Shannon entropy, are derived from the Lempel-Ziv parsing 
\cite{LZ76} or the grammar-compression \cite{KY00} of the text.

Some text families, however, are not well modeled as coming from a 
probabilistic source. A very recent case is that of {\em highly repetitive
texts}, where most of the text can be obtained by copying long blocks from 
elsewhere in the
same text. Huge highly repetitive text collections are arising from the 
sequencing of myriads of genomes of the same species, from versioned document
repositories like Wikipedia, from source code repositories like GitHub,
etc. Their growth is outpacing Moore's Law by a wide margin \cite{Plos15}.
Understanding the compressibility of highly repetitive texts is important to
properly compress those huge collections.

Lempel-Ziv and grammar compression are particular cases of so-called {\em
dictionary techniques}, where a set of strings is defined and the text is
parsed as a concatenation of those strings. On repetitive collections, the empirical
entropy ceases to be a relevant compressibility measure; for example the
$k$th order per-symbol entropy of $TT$ is the same as that of $T$,
if $k \ll n$ \cite[Lem.~2.6]{KN13}, whereas this entropy measure is generally
meaningless for $k > \log n$ \cite{Gag06}. Some dictionary 
measures, instead, capture much better the compressibility of repetitive texts.
For example, while an individual genome can rarely be compressed to much less
than 2 bits per symbol, Lempel-Ziv has been reported to compress collections of 
human genomes to less than 1\% \cite{FLCB11}. Similar compression ratios are 
reported in Wikipedia.%
\footnote{{\tt https://en.wikipedia.org/wiki/Wikipedia:Size\_of\_Wikipedia}}

Despite the obvious practical relevance of these compression methods,
there is not a clear entropy measure useful for highly repetitive texts. The 
number $z$ of phrases generated by the Lempel-Ziv parse \cite{LZ76} is often 
used as a gold standard, possibly because it can be implemented in linear time
\cite{RPE81} and is never larger than $g$, the size of the smallest 
context-free grammar that generates the text \cite{Ryt03,CLLPPSS05}. However, 
$z$ is not
so satisfactory as an entropy measure: the value may change if we reverse the 
text, for example. 
A much more robust lower bound on compressibility is $b$,
the size of the smallest {\em bidirectional (macro) scheme} \cite{SS82}. Such
a scheme parses the text into phrases such that each phrase appears somewhere
else in the text (or it is a single explicit symbol), in a way that makes it 
possible
to recover the text by copying source to target positions in an appropriate
order. This is arguably the strongest possible dictionary method, but finding
the smallest bidirectional scheme is NP-complete \cite{Gal82}. A relevant
question is then how good is the Lempel-Ziv parse as an efficiently 
implementable approximation to the smallest bidirectional scheme.
Almost nothing is known in this respect, except that there are string 
families where $z$ is nearly $2b$ \cite{SS82}.\footnote{An article implying
$z=\Omega(b\log n)$ \cite[corollary in 3rd page]{HC97} has a mistake: their string $D$ is also parsed
in $\Theta(N)$ phrases by LZ76, not $\Theta(N\log N)$.}

In this paper we finally give a tight approximation ratio for $z$, showing that
the gap is larger than what was previously known. We prove that 
$z = O(b\log(n/b))$, and that this bound is tight as a function of $n$, by
exhibiting a string family where $z = \Omega(b\log n)$. To prove the upper 
bound, we show how to build a run-length context-free grammar 
\cite{NIIBT16} (i.e., allowing constant-size rules of the form $X \rightarrow Y^t$) 
of size $g_{rl} = O(b\log(n/b))$.
This is done by carrying out several rounds of locally consistent parsing 
\cite{Jez15} on top of $T$, reducing the resulting blocks to nonterminals in
each round, and showing that new nonterminals appear only in the boundaries
of the phrases of the bidirectional scheme. We then further prove that $z \le
2g_{rl}$, by extending a classical proof \cite{CLLPPSS05} that relates grammar
with Lempel-Ziv compression. To prove the lower bound, we consider another
plausible compressibility measure: the number $r$ of equal-symbol runs in the 
Burrows-Wheeler transform (BWT) of the text \cite{BW94}. We prove that the BWT 
induces a valid bidirectional scheme, and thus $r = \Omega(b)$. Then the bound
follows from the family of Fibonacci words, where $z = \Theta(\log n)$ and $r=O(1)$. 
The latter result, however, assumes that lexicographical comparisons regard the strings
as cyclic, instead of the more natural notion we use here. We then study the Fibonacci 
words under our model, to show that $r=O(1)$ still holds in the even members of the family.

We then show that Lempel-Ziv is just one valid example of interesting parses holding
(i) that they impose an increasing order between sources and targets, and (ii) can be 
efficiently computed with a greedy algorithm. We define a weak and a strong notion of 
order, which coincide in the case of the text-precedence order used by Lempel-Ziv.
We design a greedy polynomial-time algorithm that always finds the optimum parse that 
strongly satisfies a given order. We also prove that the optimum parse weakly satisfying
a given order is of size $O(g)$, and even $O(g_{rl}) \subseteq O(b\log(n/b))$ if sources can overlap targets.

We then give a concrete parsing arising from our generalization. We define $v$, the
size of the optimal {\em lexicographic parse} of the text, where each phrase
must point to a lexicographically smaller one (both seen as text suffixes). In such
an order, the weak and strong versions are also equivalent. Thus, it holds that
$v = O(g_{rl}) \subseteq O(b\log(n/b))$. Further, we show that $v$ can be computed in 
linear time, with a very practical algorithm. We also show that $r$ induces a 
lexicographical parse, thus $v = O(r)$. Since, instead, $z$ can be $\Omega(r\log n)$,
our new greedy parse asymptotically improves the Lempel-Ziv parse on some string families.
We also show that $b=O(1)$ and $v=\Theta(\log n)$ (and thus $v=\Omega(b\log n)$) on the
odd Fibonacci words, but we have not found a family where $z = o(v)$. We show that $v$ and $z$ perform comparably on a set of benchmark repetitive texts.

Finally, we consider the size $c$ of the smallest {\em collage system} \cite{KMSTSA03}, which
adds to run-length context-free grammars the power to truncate a prefix or a suffix of a nonterminal. Little was known about this measure, except that $c = O(\min\{g_{rl},z\log z\})$. By extending the ideas of the article, we prove that $b = O(c)$, $c = O(z)$, and that there exists string families where $c = \Omega(b\log n)$, for a subclass we call {\em internal collage systems} where all the productions appear in $T$.

%!TEX root = paper.tex

\section{Basic Concepts} \label{sec:basics}

\begin{table}[t]
\caption{Notation assumed all along the paper, including theorems.}
\begin{center}
\begin{tabular}{l|l}
Symbol & Meaning \\
\hline
$T$      & Text to be parsed or compressed \\
$n$      & Text length \\
$\sigma$ & Text alphabet size \\
$f$      & Target-to-source mapping in a parsing of $T$ \\
$H_k$    & Per-symbol $k$th order empirical entropy of $T$ \\
\hline
$b$      & Size of smallest bidirectional scheme for $T$ \\
$z$      & Size of Lempel-Ziv parse for $T$ \\
$z_{no}$ & Size of Lempel-Ziv parse for $T$ not allowing overlaps \\
$g$      & Smallest size (number of rules) of an SLP generating $T$ \\
$g_{rl}$ & Smallest size (number of rules) of an RLSLP generating $T$ \\
$c$      & Smallest size (number of rules) of an internal collage system generating $T$\\
$r$      & Number of runs in the BWT of $T$ \\
$u$      & Smallest size of an ordered parse for $T$ \\
$v$      & Size of the lex-parse for $T$ \\
\hline
$f_k$    & $k$th Fibonacci number \\
$F_k$    & $k$th Fibonacci word \\
\hline
\end{tabular}
\end{center}
\label{tab:notation}
\end{table}

We review basic concepts about strings, compression measures, and others. 
Table~\ref{tab:notation} summarizes our notation along the article.

\subsection{Strings and String Families}

A {\em string} (or {\em word}) is a sequence $S[1\dd\ell] = S[1] S[2] \cdots S[\ell]$ of symbols, of length $|S|=\ell$. 
A {\em substring} (or {\em factor}) $S[i] \cdots S[j]$ of $S$ is denoted $S[i\dd j]$. A {\em suffix} of $S$ is 
a substring of the form $S[i\dd\ell] = S[i\dd]$, and a {\em prefix} is a substring of the form $S[1\dd i] = S[\dd i]$. 
The juxtaposition of strings and/or symbols represents their concatenation; the explicit infix
 operator ``$\cdot$'' can also be used.

We will consider parsing or compressing a string $T[1\dd n]$, called the 
{\em text}, over alphabet $[1\dd\sigma]$.
We assume that $T$ is terminated by the special symbol $T[n] = \$$, which is 
lexicographically smaller than all the others. This makes any lexicographic 
comparison between suffixes well-defined: when a suffix is a prefix of another, the prefix is lexicographically smaller.

We use various infinite families of strings along the article, to prove lower and upper bounds. An important family we use are the {\em Fibonacci words}, defined as follows.

\begin{definition} \label{def:fibonacci}
The Fibonacci word family is defined as $F_1=b$, $F_2=a$, and for all 
$k>2$, $F_k = F_{k-1} \cdot F_{k-2}$. The length of $F_k$ is $f_k$, the
$k$th Fibonacci number, defined as $f_1=f_2=1$ and, for $k>2$, $f_k = 
f_{k-1}+f_{k-2}$.
\end{definition}

To obtain results compatible with the usual convention that a prefix of a suffix is
lexicographically smaller than it, we will use a variant of the family that has
the terminator \$ (virtually) appended.  

Another family we will use is the {\em de Bruijn sequence} of order $k$ on an
alphabet of size $\sigma$. It contains all the distinct substrings of length 
$k$ over $[1\dd \sigma]$, and it is of the minimum possible length,
$\sigma^k+\sigma-1$.

\subsection{Bidirectional Schemes $(b)$} \label{sec:bidir}

A {\em bidirectional scheme} \cite{SS82} partitions $T[1\dd n]$ into $b$ {\em phrases} 
$B_1, \ldots, B_b$, such that each phrase $B_i = T[t_i\dd t_i+\ell_i-1]$ is either 
(1) copied from another substring $T[s_i\dd s_i+\ell_i-1]$ 
(called the phrase {\em source}) with $s_i \not= t_i$ and $\ell_i \ge 1$, which may overlap 
$T[t_i\dd t_i+\ell_i-1]$, or (2) formed by $\ell_i=1$ explicit symbol. The
phrases of type (1) are also called {\em targets} of the copies.
The bidirectional scheme is {\em valid} if there is an order in which the
sources $s_i+j$ can be copied onto the targets $t_i+j$ so that all the positions
of $T$ can be inferred.

A bidirectional scheme implicitly defines a function $f:[1\dd n] \rightarrow [1\dd n] \cup \{-1\}$ so that, 
  \[
    \begin{cases} 
      f(t_i+j) = s_i+j, & \textrm{if~} T[t_i\dd t_i+\ell_i-1] \textrm{~is copied from~} T[s_i\dd s_i+\ell_i-1] \textrm{~and~} 0\le j<\ell_i \textrm{~(case 1),} \\
      f(t_i) = -1, & \textrm{if~} T[t_i] \textrm{~is an explicit symbol (case 2).}
    \end{cases}
  \]

Being a valid scheme is equivalent to saying that $f$ has no cycles, that is,
there is no $k>0$ and $p$ such that $f^k(p)=p$. Which is the same, for each
$p$ there exists $k \ge 0$ such that $f^k(p) = -1$. We can then recover each
non-explicit text position $p$ from the explicit symbol $T[f^{k-1}(p)]$.

We use $b$ to denote the smallest bidirectional scheme, which is NP-complete
to compute \cite{Gal82}.

\begin{example} 
Consider the text $T=alabaralalabarda\$$. A bidirectional scheme
of $b=10$ phrases is $ala|\underline{b}|a|\underline{r}|\underline{a}|\underline{l}|alabar|\underline{d}|a|\underline{\$}$, where we have underlined the explicit symbols. For example, the source of phrase $B_1 = T[1\dd 3] = ala$ is $T[7\dd 9]$, and the source of phrase $B_7 = T[9\dd 14] = alabar$ is $T[1\dd 6]$. To extract $T[11]$, we follow the chain $f(11)=3$, $f(3)=9$,
$f(9) = 1$, $f(1)=7$, and $f(7)=-1$ because it is an explicit symbol. We then learn
that $T[11]=T[3]=T[9]=T[1]=T[7]=a$.
\end{example}

\subsection{Lempel-Ziv Parsing $(z,z_{no})$}
\label{sec:lz}

Lempel and Ziv \cite{LZ76} define a parsing of $T$ into the fewest possible
phrases $T = Z_1 \cdots Z_z$, so that each phrase $Z_i$ occurs as a substring (but 
not a suffix) of $Z_1 \cdots Z_i$, or an explicit symbol. This means that
the source $T[s_i\dd s_i+\ell_i-1]$ of the target $Z_i = T[t_i\dd t_i+\ell_i-1]$
must satisfy $s_i < t_i$, but sources and targets may overlap. A parsing where sources precede targets in $T$ is called {\em left-to-right}. It turns out
that the {\em greedy} left-to-right parsing, which creates the phrases from
$Z_1$ to $Z_z$ and at each step $i$ maximizes $\ell_i$ (and inserts an
explicit symbol if $\ell_i=0$), indeed produces the 
optimal number $z$ of phrases \cite[Thm.~1]{LZ76}. Further, the parsing can 
be obtained in $O(n)$ time \cite{RPE81,SS82}. This is what we call the {\em Lempel-Ziv parse} of $T$.
%SS82 says RPE81 builds LZ, and he calls LZ = z_no, but RPE81 builds actual z

If we disallow that a phrase overlaps its source, that is, $Z_i$ must be a 
substring of $Z_1 \cdots Z_{i-1}$ or a single symbol, then we call $z_{no}$ 
the number of phrases obtained. In this case it is also true that the
greedy left-to-right parsing produces the optimal number $z_{no}$ of phrases
\cite[Thm.~10 with $p=1$]{SS82}. Since the Lempel-Ziv parsing allowing overlaps
is optimal among all left-to-right parses, we also have that $z_{no} \ge z$.
This parsing can also be computed in $O(n)$ time \cite{CIKRW12}.
Note that, on a text family like $T=a^n$, it holds that $z=2$ and
$z_{no} = \Theta(\log n)$, and thus it holds that $z_{no} = \Omega(z\log n)$.

Little is known about the relation between $b$ and $z$ except that $z \ge b$
by definition ($z$ is the smallest {\em left-to-right} parsing) and that, for
any constant $\epsilon>0$, there is an infinite family of strings for which
$b < (\frac{1}{2}+\epsilon)\cdot\min(z,z^R)$ \cite[Cor.~7.1]{SS82}, where
$z^R$ is the $z$ value of the reversed string.

Apart from being used as a gold standard to measure repetitiveness, the size
of the Lempel-Ziv parse is bounded by the statistical entropy \cite{LZ76}. In 
particular, if $H_k$ denotes the per-symbol $k$-th order empirical entropy of 
the text \cite{Man01}, then it 
holds that $z_{no} \log_2 n \le nH_k + o(n\log_\sigma n)$ whenever 
$k=o(\log_\sigma n)$ \cite{KM00} (thus, in particular, $z_{no} =
O(n/\log_\sigma n)$).

\begin{example}
Consider again the text $T=alabaralalabarda\$$. The Lempel-Ziv parse (with or withour overlaps) has $z=z_{no}=11$ phrases,
$\underline{a}|\underline{l}|a|\underline{b}|a|\underline{r}|ala|labar|\underline{d}|a|\underline{\$}$, where we have underlined the explicit symbols. For example, the source of phrase $Z_7 = T[7\dd 9] = ala$ is $T[1\dd 3]$, and the source of phrase $B_8 = T[10\dd 14]$ is $T[2\dd 6]$.
\end{example}

\subsection{Grammar Compression $(g,g_{rl})$} \label{sec:gram}

Consider a context-free grammar that generates $T$ and only $T$ \cite{KY00}. 
For simplicity we stick to the particular case of {\em straight-line programs (SLPs)}, which are sequences of
rules of the form $A \rightarrow a$ and $A \rightarrow BC$, where $a$ is a 
terminal and $A,B,C$ are nonterminals. Each nonterminal is defined as the 
left-hand side of exactly one rule, and the right-hand nonterminals must have 
been defined before in the sequence. The {\em expansion} of each nonterminal is the string it generates, that is, $exp(A)=a$ if $A \rightarrow a$ and $exp(A) = exp(B) \cdot exp(C)$ if $A \rightarrow BC$. The {\em initial symbol} of the SLP is the last nonterminal $S$ in the sequence, so that the SLP generates the text $T=exp(S)$. The {\em size} of the SLP is its number of rules; it is assumed that every nonterminal is reachable from the initial symbol. We can stick
to SLPs to obtain asymptotic results because any context-free grammar with
size $g$ (sum of lengths of right-hands of rules) can be easily converted into
an SLP of size $O(g)$.
In general, we will use $g$ to denote the minimum possible size of an SLP
that generates $T$, which is NP-complete to compute \cite{Ryt03,CLLPPSS05}. 

If we allow, in addition, rules of the form $X \rightarrow Y^t$ for an integer
$t > 0$, the result is a {\em run-length SLP (RLSLP)} \cite{NIIBT16}. The rule, assumed
to be of
size 2, means that $X$ expands to $t$ copies $Y$, $exp(X) = exp(Y)^t$.
We will use $g_{rl}$ to denote the size of the smallest RLSLP that generates
$T$, that is, its number of rules. Thus, it is clear that $g_{rl} \le g$. Further, on the string family 
$T=a^n$ it holds that $g_{rl}=2$ and $g=\Theta(\log n)$, and thus it holds that
$g = \Omega(g_{rl}\log n)$ (as well as $z_{no} = \Omega(g_{rl}\log n))$.

A well-known relation between $z_{no}$ and $g$ is $z_{no} \le g = 
O(z_{no}\log(n/z_{no}))$ \cite{Ryt03,CLLPPSS05}. Further, it is known that
$g = O(z\log(n/z))$ \cite[Lem.~8]{Gaw11}. Those papers exhibit
$O(\log n)$-approximations to the smallest grammar, as well as several
others \cite{Sak05,Jez15,Jez16}. A negative result about the approximation are
string families where $g = \Omega(z_{no}\log n/\log\log n)$ 
\cite{CLLPPSS05,HLR16} %HLR16 on a binary alphabet, prev with unbounded
and even $g_{rl} = \Omega(z_{no}\log n/\log\log n)$ \cite{BGLP17}.
The size $g$ is also bounded in terms of the statistical entropy \cite{KY00}
and of the empirical entropy \cite{ON18}, thus it always holds
$g = O(n/\log_\sigma n)$.

The {\em parse tree} of an SLP has a root labeled with the initial symbol and leaves labeled with terminals, which spell out $T$ when read left to right. Each internal node labeled $A$ has a single leaf child labeled $a$ if $A \rightarrow a$, or two internal children labeled $B$ and $C$ if $A \rightarrow BC$.
The {\em grammar tree} prunes
the parse tree by leaving only one internal node labeled $X$ for each
nonterminal $X$; all the others are pruned and converted to leaves. Then, for
an SLP of size $g$, the grammar tree has exactly $g$ internal nodes. 
Since the right-hand sides of the rules are of size 1 or 2, each internal 
node has 1 or 2 children, and thus the total number of nodes is at most $2g+1$.
Therefore, the grammar tree has at most $g+1$ leaves.

\begin{example}
We can generate the text $T=alabaralalabarda\$$ with an SLP
of $g=16$ rules: $A \rightarrow a$, $B \rightarrow b$, $D \rightarrow d$, $L \rightarrow l$, 
$R \rightarrow r$, $Z \rightarrow \$$,
$C \rightarrow AL$, $E \rightarrow AB$, $F \rightarrow AR$, $G \rightarrow DA$, 
$H \rightarrow CE$, $I \rightarrow HF$, $J \rightarrow IC$, $K \rightarrow IG$,
$M \rightarrow JK$, $N \rightarrow MZ$. The nonterminal $N$ is the initial symbol.
Figure~\ref{fig:grammar} illustrates the parse and the grammar trees.
\end{example}

\begin{figure}[t]
\begin{center}
\includegraphics[width=0.7\textwidth]{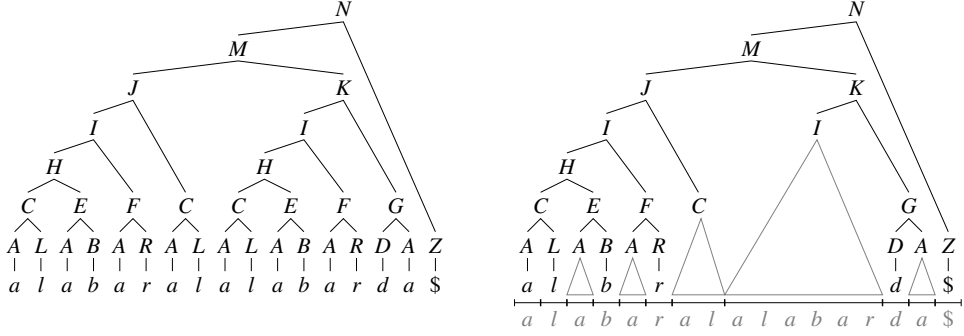}
\end{center}
\caption{The parse tree (left) and the grammar tree (right) of an example text. Only the black elements on the right form the grammar tree; the text coverage is conceptual.}
\label{fig:grammar}
\end{figure}

\subsection{Collage Systems ($c$)} \label{sec:collage}

{\em Collage systems} \cite{KMSTSA03} are a generalization of RLSLPs that also
support {\em truncation}. Specifically, nonterminals can be of the form
$A \rightarrow a$ for a terminal $a$, $A \rightarrow BC$ for previous
nonterminals $B$ and $C$, and $A \rightarrow B^k$, $A \rightarrow B^{[t]}$
and $A \rightarrow ^{[t]}\!\!\!B$ for a previous nonterminal $B$ and positive
integers $k$ and $t$. The last two rules mean that $exp(A) = exp(B)[1\dd t]$ and $exp(A) = exp(B)[|exp(B)|-t+1\dd |exp(B)|]$, respectively (it must hold that $t \le |exp(B)|)$. We denote by $c$ the number of rules
of the smallest collage system generating (only) a text $T$.

Few relations are known between $c$ and other repetitiveness measures, other
than $c \le g_{rl}$ and $c = O(z\log z)$ \cite{KMSTSA03}.

\begin{example}
The following collage system to generate the text $T=alabaralalabarda\$$
is actually less efficient than the SLP (it uses $c=17$ rules), but it illustrates all the operations:
$A \rightarrow a$, $B \rightarrow b$, $D \rightarrow d$, $L \rightarrow l$, $R \rightarrow r$, 
$Z \rightarrow \$$, $C \rightarrow AL$, $E \rightarrow C^3$, $F \rightarrow BA$, $G \rightarrow FR$, 
$H \rightarrow DA$, $I \rightarrow HZ$, $J \rightarrow E^{[5]}$, $K \rightarrow JG$,
$M \rightarrow \,^{[6]} \!K$, $N \rightarrow MK$, $O \rightarrow NI$. The nonterminal $O$ is the initial symbol. This example also illustrates that the concept of parse and grammar tree do 
not apply to collage systems; for example the nonterminal $E$ expands to $alalal$, which does not
exist in the text.
\end{example}\medskip

In this article we will be 
interested in a subclass we call {\em internal collage systems}, where there is a path of non-truncation rules from the initial symbol to every nonterminal. This implies that, every time we use a truncation rule on a nonterminal $A$, the whole $exp(A)$ appears somewhere else in $T$. Since it is not obvious that, in internal collage systems, we can use a prefix plus a suffix truncation to extract a substring of another rule, we explicitly allow in internal collage systems for {\em substring truncation} rules $A \rightarrow B^{[t,t']}$, with $1\le t\le t' \le |exp(B)|$, meaning that $exp(A) = exp(B)[t\dd t']$.

Note that any upper bound we prove for the size $c$ of the smallest internal collage system also holds for the smallest general collage system. In particular, we prove $c=O(z)$, which is an improvement upon the previous result $c=O(z\log z)$ that holds for the smallest general collage system \cite{KMSTSA03}. Instead, an existing lower bound on $c$ of the form  $\gamma=O(c)$, where $\gamma$ is the size of the smallest ``attractor'' for $T$ \cite[Thm.~3.5]{KP18}, holds in fact only for internal collage systems, because it assumes, precisely, that the expansion of every nonterminal appears in $T$.\footnote{For example, with the collage system $A \rightarrow a$, $B \rightarrow b$, $A' \rightarrow A^5$, $B' \rightarrow B^5$, $C \rightarrow AB$, $D \rightarrow C^{[9]}$, and the
initial symbol $E \rightarrow ^{[8]} \!\! D$, we generate the text $T=a^4 b^4$. However, because $C$ does not appear in $T$, they fail to place an attractor element inside the substring $ab$.}
We also prove that $b=O(c)$ for internal collage systems, which improves upon that result because they also prove that $\gamma=O(b)$ \cite{KP18}.

\subsection{Suffix Arrays and Runs in the Burrows-Wheeler Transform $(r)$} \label{sec:bwt}

The {\em suffix array} \cite{MM93} of $T[1\dd n]$ is an array $\SA[1\dd n]$ storing
a permutation of $[1\dd n]$ so that, for all $1 \le i < n$, the suffix
$T[\SA[i]\dd ]$ is lexicographically smaller than the suffix
$T[\SA[i+1]\dd ]$. Thus $\SA[i]$ is the starting position in $T$ of the
$i$th smallest suffix of $T$ in lexicographic order. The suffix array can be
built in $O(n)$ time \cite{KSPP05,KA05,KSB06}.

The inverse permutation of $\SA$, $\ISA[1\dd n]$, is called the {\em inverse 
suffix array}, so that $\ISA[j]$ is the lexicographical position of the 
suffix $T[j\dd n]$ among the suffixes of $T$. It can be built in linear time
by inverting the permutation $\SA$.

The {\em longest common prefix array}, $\LCP[1\dd n]$, stores at $\LCP[i]$ the length
of the longest common prefix between $T[\SA[i]\dd ]$ and $T[\SA[i-1]\dd ]$, with
$\LCP[1]=0$. It can be built in linear time from $T$ and $\ISA$ \cite{KLAAP01}. 
%The length of the longest common prefix between $T[\SA[i]\dd]$ and $T[\SA[j]\dd]$, for $i<j$, is $lcp(i,j) = \min_{i<k\le j} \LCP[k]$.

The {\em Burrows-Wheeler Transform} of $T$, $\BWT[1\dd n]$ \cite{BW94}, is a
string defined as $\BWT[i] = T[\SA[i]-1]$ if $\SA[i]>1$, and $\BWT[i]=T[n]=\$$ 
if $\SA[i]=1$. That is, $\BWT$ has the same symbols of $T$ in a different order,
and is a reversible transform.

The array $\BWT$ can be easily obtained from $T$ and $\SA$, and thus also be
built in linear time.
To obtain $T$ from $\BWT$ in linear time \cite{BW94}, one considers two arrays,
$L[1\dd n] = \BWT$ and $F[1\dd n]$, which contains all the symbols of $L$ (or $T$) 
in ascending order. Alternatively, $F[i]=T[\SA[i]]$, so $F[i]$ follows $L[i]$ 
in $T$. We need a function that maps any $L[i]$ to the position $j$ of that 
same symbol in $F$. The function is $$\LF(i) ~=~ C[c] + \rank[i],$$ 
where
$c=L[i]$, $C[c]$ is the number of occurrences of symbols less than $c$ in $L$,
and $\rank[i]$ is the number of occurrences of symbol $L[i]$ in $L[1\dd i]$.
Once $C$ and $\rank$ are computed, we reconstruct $T[n]=\$$ and $T[n-k] 
\leftarrow L[\LF^{k-1}(1)]$ for $k=1,\ldots, n-1$.
Note that, if $L[i-1]=L[i]$, then $\LF(i-1)=LF(i)-1$; this
result will be relevant later.

The number $r$ of equal-symbol runs in the BWT of $T$ can be bounded in terms 
of the empirical entropy, $r \le nH_k+\sigma^k$ \cite{MN05}. However, the 
measure
is also interesting on highly repetitive collections (where, in particular,
$z$ and $z_{no}$ are small). For example, it holds $z = \Omega(r\log n)$ 
on the Fibonacci words \cite{Pre16}. However, this result assumes that $T$ is 
not \$-terminated, but that lexicographical 
comparisons take $T$ as a circular string. We will obtain similar results on
our \$-terminated model, which is compatible with the use of $r$ in compressed text indexes. 
On the de Bruijn sequences on binary alphabets, instead, it holds 
$r = \Omega(z_{no}\log n)$ \cite{BCGPR15,Pre16}: we have $r=\Theta(n)$, 
whereas $z_{no}$ is always $O(n/\log n)$ on binary strings.

\begin{example}
The BWT of $T=alabaralalabarda\$$ is 
$L = adll\$lrbbaaraaaaa$, which has $r=10$ runs.
\end{example}

\subsection{Locally consistent parsing} \label{sec:jez}

A string can be parsed in a {\em locally consistent} way,
which means that equal substrings are largely parsed in the same form. 
We use a variant of locally consistent parsing due to Je\.z
\cite{Jez15,I17}.

\begin{definition}
A {\em repetitive area} in a string is a maximal run of the same symbol, of
length 2 or more.
\end{definition}

\begin{definition}
Two intervals contained in $[1\dd n]$ {\em overlap} if they are not disjoint nor 
one contained in the other.
\end{definition}

\begin{definition} \label{def:jez}
A parsing of a string into {\em blocks} is obtained by, first, creating new symbols that represent the
repetitive areas. On the resulting sequence, the alphabet (which contains 
original symbols and created ones) is partitioned into two
subsets, left- and right-symbols. Then, every left-symbol followed by a
right-symbol are paired in a block. The other isolated symbols form a block on their own.
\end{definition}

Je\.z \cite{Jez15} shows that there is a way to choose left- and right-symbols so that the partition into blocks enjoys useful properties, including a form of local consistency.

\begin{lemma}[\cite{Jez15}] \label{lem:lcp}
We can partition a string $S[1\dd\ell]$ into at most $(3/4)\ell$ blocks so that,
for every pair of identical substrings $S[i\dd j] = S[i'\dd j']$, if neither 
$S[i+1\dd j-1]$ nor $S[i'+1\dd j'-1]$ overlap a repetitive area, then the sequence
of blocks covering $S[i+1\dd j-1]$ and $S[i'+1\dd j'-1]$ are identical.
\end{lemma}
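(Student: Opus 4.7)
The plan is to follow Je\.z's two-phase construction realising Definition~\ref{def:jez}. In the first phase, each repetitive area $S[p\dd p+m-1]$ (with $m\ge 2$) is replaced by one fresh symbol encoding the pair (letter, $m$), producing a reduced string $S'$ in which no two consecutive positions carry the same symbol. In the second phase, I would invoke Je\.z's deterministic alphabet colouring to partition the alphabet of $S'$ (original letters together with freshly created run-symbols) into classes $L$ and $R$. Every $L$-symbol immediately followed by an $R$-symbol in $S'$ is paired into a block; all other positions of $S'$ become singleton blocks. Each block of $S'$ translates to one block of $S$: a singleton over a run-symbol covers the entire corresponding repetitive area, and a pair involving a run-symbol covers that area together with its neighbour.

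For the block count, the number of blocks of $S$ equals $|S'|-p'$, where $p'$ is the number of $LR$-pairs produced in the second phase. Let $t$ be the number of positions of $S$ lying inside repetitive areas, collapsed into $s$ run-symbols of $S'$; then $t \ge 2s$, whence $|S'|=s+(\ell-t)\le t/2+(\ell-t)=\ell-t/2 \le \ell$. Je\.z's colouring lemma, applied to the run-less string $S'$, guarantees $p' \ge |S'|/4$, so the number of blocks is at most $(3/4)|S'|\le (3/4)\ell$, as required.

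For the local-consistency property, I would argue that the entire construction is context-local: whether a position of $S$ lies in a repetitive area is determined by its two immediate neighbours, and the $L/R$ classification in the second phase depends only on symbol identities. Consider equal occurrences $S[i\dd j]=S[i'\dd j']$ such that neither $S[i+1\dd j-1]$ nor $S[i'+1\dd j'-1]$ overlaps a repetitive area. For every position $p\in[i+1,j-1]$, whether $S[p]$ starts or lies inside a repetitive area is decided by $S[p-1], S[p], S[p+1]$, all of which lie strictly inside $S[i\dd j]$; so the first-phase reduction produces identical symbol sequences over $S[i+1\dd j-1]$ and $S[i'+1\dd j'-1]$. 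The second-phase pairing, depending only on symbol identities, then produces identical block boundaries on these two reduced subsequences, and the claim follows.

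The main obstacle is the deterministic $p' \ge |S'|/4$ guarantee: a random two-colouring achieves it only in expectation, but we need a colouring that is \emph{deterministic} and depends \emph{only on symbol identities}, for otherwise the local-consistency argument would break at boundaries shared between two distinct occurrences. Je\.z's iterated bit-based colouring delivers exactly this: in $O(\log^*\sigma)$ local rounds every symbol becomes labelled $L$ or $R$, with the guarantee that across any run-less string at least a quarter of the positions are absorbed into $LR$-pairs. Once this ingredient is cited, the rest of the argument is the combinatorial bookkeeping carried out above.
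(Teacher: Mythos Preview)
Your proposal is correct and follows essentially the same approach as the paper: both argue locality of the two-phase construction (run collapsing followed by $L$/$R$ pairing that depends only on symbol identities) to obtain the consistency over $S[i{+}1\dd j{-}1]$, and both defer to Je\.z for the $(3/4)\ell$ guarantee. Your version is more explicit in deriving the block count (separating the contribution of run compression from that of pairing via the inequality $|S'|\le \ell$ and $p'\ge |S'|/4$), whereas the paper simply cites Je\.z for the final bound; but the substance is the same.
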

\begin{proof}
It is clear that, if $S[i+1\dd j-1]$ and
$S[i'+1\dd j'-1]$ do not overlap repetitive areas, then the parsing of $S[i\dd j]$
and $S[i'\dd j']$ may differ only in their first position (if it is part of a 
repetitive area ending there, or if it is a right-symbol that becomes paired
with the preceding one) and in their last position (if it is part of a 
repetitive area starting there, or if it is a left-symbol that becomes paired
with the following one). Je\.z \cite{Jez15} shows how to choose the pairs so that
$S$ contains at most $(3/4)\ell$ blocks.
\end{proof}

\begin{example}
A locally-consistent parsing of $T=alabaralalabarda\$$ can be 
obtained by considering $a$ to be a left-symbol and all the othes to be right-symbols. The
resulting parsing into blocks is then $T=al|ab|ar|al|al|ab|ar|d|a\$$, where for example in the two
occurrences of $alabar$, the sequence of blocks covering $laba$ are identical, $al|ab|ar$.
\end{example}

%!TEX root = paper.tex

\section{Upper Bounds} \label{sec:upper}

In this section we obtain our main upper bound, $z = O(b\log(n/b))$, along with
other byproducts. To this end, we first prove that $g_{rl} = O(b\log(n/b))$, 
and then that $z \le g_{rl}+1$. To prove the first bound, we build an RLSLP
on top of a bidirectional scheme. The grammar is built in several rounds of
locally consistent parsing on the text. In each round, the blocks of the
locally consistent parsing are converted into nonterminals and fed to the next
round. The key is to prove that distinct nonterminals are produced only near the
boundaries of the phrases of the bidirectional scheme.
The second bound is an easy extension of the known result $z_{no} \le g+1$.

\begin{theorem} \label{thm:rlcfg}
There always exists an RLSLP of size $g_{rl} = O(b\log(n/b))$ that
generates $T$.
\end{theorem}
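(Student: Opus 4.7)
The plan is to construct the RLSLP iteratively by applying Je\.z's locally consistent parsing (Lemma~\ref{lem:lcp}) on top of a fixed smallest bidirectional scheme $\mathcal{B} = \{B_1, \ldots, B_b\}$ of $T$. I set $T_0 = T$, and at round $i \ge 0$ transform $T_i$ into a shorter $T_{i+1}$ in two substeps. First, each maximal run $a^t$ in $T_i$ (repetitive area) is collapsed into a single fresh symbol via a run-length rule $X \to Y^t$, where $Y$ represents $a$. Second, the left/right-symbol partition of Lemma~\ref{lem:lcp} is applied to the resulting string; each pair block $YZ$ becomes a new nonterminal via $X \to YZ$, and singleton blocks keep their symbol. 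Since $|T_{i+1}| \le (3/4)|T_i|$, after $k = O(\log(n/b))$ rounds we reach $|T_k| = O(b)$, and I encode $T_k$ by $O(b)$ extra rules (e.g., as a balanced binary combining sequence), taking the last introduced nonterminal as the initial symbol.

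The grammar size equals the total number of \emph{distinct} nonterminals introduced, so it suffices to bound by $O(b)$ the number of fresh distinct symbols created per round. To do so I track how the phrase boundaries of $\mathcal{B}$ propagate across rounds, and call a position of $T_i$ \emph{internal at round $i$} if its underlying $T_0$-interval is separated from every $\mathcal{B}$-boundary by at least a constant $d$ of $T_i$-positions on each side. The key invariant, proved by induction on $i$, is that the $T_i$-symbol at any internal position equals the $T_i$-symbol at the corresponding position in the source of the surrounding phrase; hence blocks formed at internal positions are determined by patterns already appearing at non-internal positions elsewhere. This uses two facts in tandem: Lemma~\ref{lem:lcp} applied inside $T_i$ says that equal $T_i$-substrings not overlapping repetitive areas are parsed identically, and each phrase $B_j$ of $T_0$ being a copy of its source $S_j$ ensures that the needed $T_i$-substrings around matching positions are indeed equal. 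Consequently, every new distinct nonterminal created at round $i$ is realized at some non-internal position of $T_i$, of which there are only $O(b)$ (a constant-width neighborhood of each of the $2b$ phrase boundaries). Summing over all $O(\log(n/b))$ rounds yields the claimed bound.

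The main obstacle is justifying the ``inheritance from source'' step, because bidirectional schemes are not left-to-right and a source may itself lie inside other phrases, so there is no obvious causal order in which one can propagate nonterminals in waves. The resolution is that Je\.z's parsing is purely content-local: once one knows that the $T_i$-substring near an internal position equals the $T_i$-substring near its source-image, Lemma~\ref{lem:lcp} yields identical blocks (and, likewise, the run-length substep yields the same collapsed symbol) without any appeal to a temporal order. A secondary technicality is to calibrate the margin $d$ so that crossing one round enlarges the non-internal zone around each boundary by only an additive constant (in $T_i$-positions, which is what Lemma~\ref{lem:lcp} sees), which is possible because each Je\.z-round pairs only adjacent symbols; this keeps the count of non-internal positions at $O(b)$ throughout all rounds, even for very short phrases whose boundaries may eventually merge together in later rounds.
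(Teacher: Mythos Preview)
Your approach is the same as the paper's: iterate Je\.z's locally consistent parsing and argue that new distinct nonterminals arise only near the phrase boundaries of the bidirectional scheme. However, your accounting has a genuine gap. You write that ``crossing one round enlarges the non-internal zone around each boundary by only an additive constant (in $T_i$-positions)'' and then conclude that ``this keeps the count of non-internal positions at $O(b)$ throughout all rounds.'' These two statements are inconsistent: if the zone width grows by an additive constant each round, then after $i$ rounds the zone has width $\Theta(i)$ per boundary, giving $\Theta(bi)$ non-internal positions at round $i$ and only an $O(b\log^2(n/b))$ bound overall. And the zone really can grow: the bordering blocks near a boundary become explicit symbols in $T_{i+1}$, new bordering blocks are added around them in the next round, and nothing in Lemma~\ref{lem:lcp} forces the parsing to compress this accumulated region of explicit symbols by a constant factor.

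The paper closes this gap with an amortization argument you are missing. It explicitly rebuilds a bidirectional scheme on each $T_i$, making the bordering blocks explicit; letting $C_k$ be the total number of explicit symbols in $T_k$ and $n_k$ the number of distinct blocks produced when parsing those explicit regions, one has $C_{k+1}\le C_k+4b-n_k$ (each distinct block formed inside a region shortens that region by at least one symbol), hence $\sum_{i<k} n_i \le 4bk$. Together with the $\le 4b$ genuinely new bordering blocks per round this yields $\sum_{i<k} N_i \le 8bk$, i.e., $O(b)$ new nonterminals per round \emph{on average}, which is what gives $O(b\log(n/b))$. A smaller point: your claim that an internal block is ``determined by patterns already appearing at non-internal positions elsewhere'' needs more than one application of the source map; you must iterate $f$ and use that a valid bidirectional scheme is acyclic, as the paper does explicitly.
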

\begin{proof}
Consider the locally consistent parsing of Def.~\ref{def:jez} cutting $W = T$
into blocks. We will count the number of {\em different} blocks we form, as
this corresponds to the number of nonterminals produced in the first round. 

Recall from Section~\ref{sec:bidir} that our bidirectional scheme represents
$T$ as a sequence of {\em phrases}, by means of a function $f$. 
To count the number of different blocks produced, we will pessimistically 
assume that the first two and the last two blocks intersecting each phrase are 
all different. The number of such {\em bordering} blocks is then at most $4b$.
On the other hand, we will show that non-bordering blocks do not need to be 
considered, because they will be counted somewhere else, when they appear 
near the extreme of a phrase. 

\begin{example} 
Let us show how this works on the 
bidirectional scheme example of Section~\ref{sec:bidir},
$al{\bf a|b}|{\bf a|r}|{\bf a|l}|alabar|{\bf d|a|\$}$.
We have selected (in bold) one different block from those created in the example of Section~\ref{sec:jez}. Note that we do not need to select any block that is completely inside a phrase.
We now prove that the general case is only slightly worse.
\end{example} \medskip

We consider both types of non-bordering blocks resulting from Def.~\ref{def:jez}. Figure~\ref{fig:W} illustrates both cases.

\begin{figure}[t]
\includegraphics[width=\textwidth]{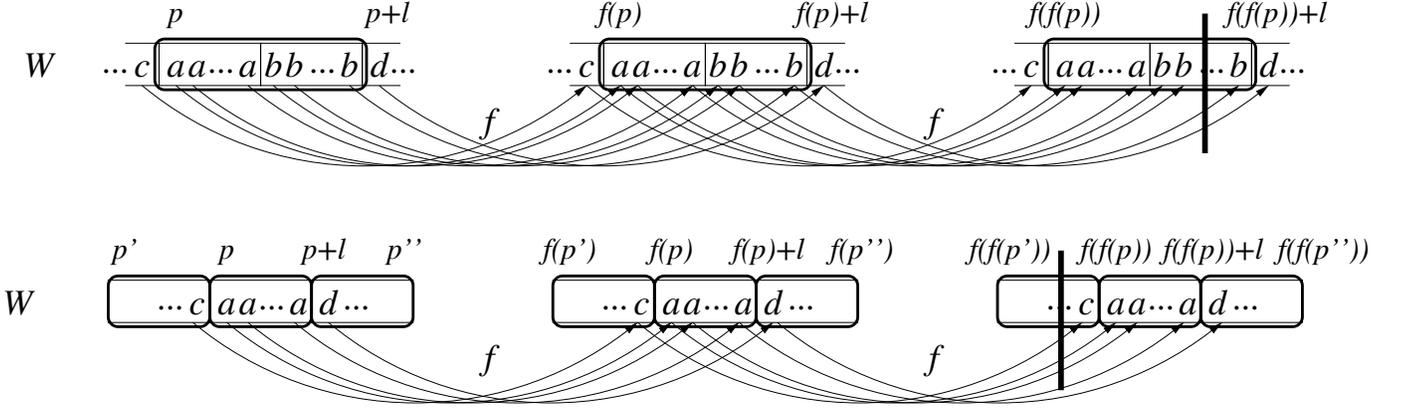}
\caption{The two cases of Theorem~\ref{thm:rlcfg}. On top, case 1, where a block $X= W[p\dd p+\ell-1] =  a^{\ell_a} b^{\ell_b}$ is formed because they are left- and right-symbols surrounded by $c \not= a$ and $d \not= b$. Since all the symbols  are strictly inside a phrase because $X$ is non-bordering, function $f$ maps them together elsewhere in $W$ preserving their contents, so the same block is formed at $W[f(p)\dd f(p)+\ell-1] = X$. This is repeated until a phrase boundary (thick vertical line) appears near $X$ (so that occurrence of $X$ is bordering). On the bottom, case 2, where $X = W[p\dd p+\ell-1] = a^\ell$ is not paired and thus forms a single block surrounded by $c,d \not= a$. Again, the same contents are found, and the same blocks are formed, at $W[f(p)\dd f(p)+\ell-1] = X$ because the blocks $Y=W[p'\dd p-1]$ and $Z=W[p+\ell\dd p'']$ are strictly inside a phrase. Again, this is repeated until hitting a phrase boundary nearby.}
\label{fig:W}
\end{figure}

\begin{enumerate}
\item The block is a pair of left- and right-alphabet symbols.%
\footnote{For this case, we could have defined {\em bordering} in a stricter 
way, as the first or last block of a phrase.} 
As these symbols
can be an original symbol or a repetitive area, let us write the pair generically 
as $X = a^{\ell_a} b^{\ell_b}$, for some $\ell_a,\ell_b \ge 1$, and let 
$\ell=\ell_a +\ell_b$ be the length of the block $X$. If $W[p\dd p+\ell-1]=X$ 
is not bordering, then it is strictly contained in a phrase. Thus, by the 
definition of a phrase, it 
holds that $[f(p-1)\dd f(p+\ell)]=[f(p)-1\dd f(p)+\ell]$, and that
$W[f(p)-1\dd f(p)+\ell] = W[p-1\dd p+\ell]$. That is, the block appears again at 
$[f(p)\dd f(p)+\ell-1]$, surrounded by the same symbols. Since Def.~\ref{def:jez}
first compacts repetitive areas, it must be $W[f(p)-1] = W[p-1] \not= a$ and 
$W[f(p)+\ell]=W[p+\ell] \not= b$. Further, since Def.~\ref{def:jez} pairs left- with
right-symbols, $a^{\ell_a}$ must be a left-symbol and
$b^{\ell_b}$ must be a right-symbol. The locally consistent parsing must then also form 
a block $W[f(p)\dd f(p)+\ell-1] = X$. If this block is bordering, then it will 
be counted. Otherwise, by the same 
argument, $W[f(p)-1\dd f(p)+\ell]$ will be equal to $W[f^2(p)-1\dd f^2(p)+\ell]$ 
and a block will be formed with $W[f^2(p)\dd f^2(p)+\ell-1]$. 
Since $f$ has no cycles, there is a $k>0$ for which $f^k(p)=-1$. Thus for some 
$l\le k$ it must be that $X=W[f^l(p)\dd f^l(p)+\ell-1]$ is bordering.
At the smallest such $l$, the block $W[f^l(p)\dd f^l(p)+\ell-1]$ will 
be counted. Therefore, $X = W[p\dd p+\ell-1]$ is already counted somewhere else.

\item The block is a single (original or maximal-run) symbol 
$W[p\dd p+\ell-1] = a^\ell = X$, for some $\ell \ge 1$. It also holds that 
$[f(p-1)\dd f(p+\ell)]=[f(p)-1\dd f(p)+\ell]$ and 
$W[f(p)-1\dd f(p)+\ell]=W[p-1\dd p+\ell]$, because $a^\ell$ is strictly inside a
phrase. Since $W[f(p)-1]=W[p-1] \not= a$ and $W[f(p)+\ell]=W[p+\ell]\not=a$,
the parsing forms the same maximal run $X = a^\ell = W[f(p)\dd f(p)+\ell-1]$.
Moreover, since $W[p\dd p+\ell-1]$ is not bordering, the previous and next
blocks produced by the parsing, $Y=W[p'\dd p-1]$ and $Z=[p+\ell\dd p'']$, are also 
strictly inside the same phrase, and therefore they also appear preceding and
following $W[f(p)\dd f(p)+\ell-1]$, at $Y=W[f(p')\dd f(p)-1]$ and 
$Z=[f(p)+\ell\dd f(p'')]$. Since $a^\ell$ was not paired with $Y$ nor $Z$ at
$W[p\dd p+\ell-1]$, the parsing will also not pair them at $W[f(p)\dd f(p)+\ell-1]$.
Therefore, the parsing will leave $a^\ell$ as a block also in 
$[f(p)\dd f(p)+\ell-1]$. If $W[f(p)\dd f(p+\ell-1)]$ is bordering, then it will be 
counted, otherwise we can repeat the argument with $W[f^2(p)-1\dd f^2(p)+\ell]$
and so on, as before.
\end{enumerate}

Therefore, we produce at most $4b$ distinct blocks, and the RLSLP has at most
$12b$ nonterminals (for $X = a^{\ell_a} b^{\ell_b}$ we may need 3 nonterminals,
$A \rightarrow a^{\ell_a}$, $B \rightarrow b^{\ell_b}$, and $C \rightarrow AB$).

For the second round, we create a reduced sequence $W'$ from $W$ by replacing
all the blocks of length $2$ or more by their corresponding nonterminals. The 
new sequence is guaranteed to have length at most $(3/4)n$ by 
Lemma~\ref{lem:lcp}. 

We then define a new bidirectional scheme (recall Section~\ref{sec:bidir}) on $W'$, 
as follows:
\begin{enumerate}
\item For each bordering block in $W$, its nonterminal symbol position in $W'$
is made explicit in the bidirectional scheme of $W'$. Note that this includes
the blocks covering the explicit symbols in the bidirectional scheme of $W$.
\item For the phrases $B_i=W[t_i\dd t_i+\ell_i-1]$ of $W$ containing non-bordering
blocks (note $B_i$ cannot be an explicit symbol), let $B'_i$ be obtained by 
trimming from $B_i$ the bordering blocks near the boundaries of $B_i$.
Then $B'_i$ appears inside $W[s_i\dd s_i+\ell_i-1]$ (with $s_i=f(t_i)$), where 
the same sequence of blocks is formed by our arguments above. We then form a
phrase in $W'$ with the sequence of nonterminals associated with the blocks of 
$B_i'$ (all of which are non-bordering), pointing to the identical sequence of 
nonterminals that appear as blocks inside $W[s_i\dd s_i+\ell_i-1]$.
\end{enumerate}

\begin{example} 
On our preceding example,
$al{\bf a|b}|{\bf a|r}|{\bf a|l}|alabar|{\bf d|a|\$}$, we define the nonterminals
$A \rightarrow ab$, $B \rightarrow ar$, $C \rightarrow al$, and $D \rightarrow a\$$.
We then create $W' = C|\underline{A}|\underline{B}|\underline{C}|CAB|\underline{d}|\underline{D}$,
where we show the derived bidirectional scheme and underline the explicit symbols. Recall that, to make this
small example interesting, we have been stricter about which blocks are bordering.
\end{example}\medskip

To bound the total number of nonterminals generated, let us call $W_k$ the 
sequence $W$ after $k$ iterations (so $T=W_0$) and $N_k$ the number of 
distinct blocks created when converting $W_k$ into $W_{k+1}$. 

In the first iteration, since there may be up to $4$ bordering blocks around 
each phrase limit, we may create $N_1 \le 4b$ distinct blocks. Those blocks
become new explicit symbols in the bidirectional scheme of $W' = W_1$. Note that
those explicit symbols are grouped into $b$ {\em regions} of up to $4$ 
consecutive symbols. In each new iteration, $W_k$ is parsed into blocks again. 
We have shown that the non-bordering blocks formed are not distinct, so we 
can focus on the number of new blocks produced to parse each of the $b$ regions and
near their surrounding phrase boundaries. The parsing produces at most $4$ new 
distinct blocks extending each region (i.e., at the phrase boundaries surrounding the
region). 
However, the parsing of the regions themselves may also produce new distinct 
blocks. Our aim is to show that the number of those blocks is also
bounded because they decrease the length of the regions, which only grow by
$4b$ (explicit symbols) per iteration. Intuitively, each new nonterminal created
inside a region decreases its length, and thus both numbers cancel out. We now make 
the argument more precise.

\begin{figure}[t]
\centering
\includegraphics[width=0.6\textwidth]{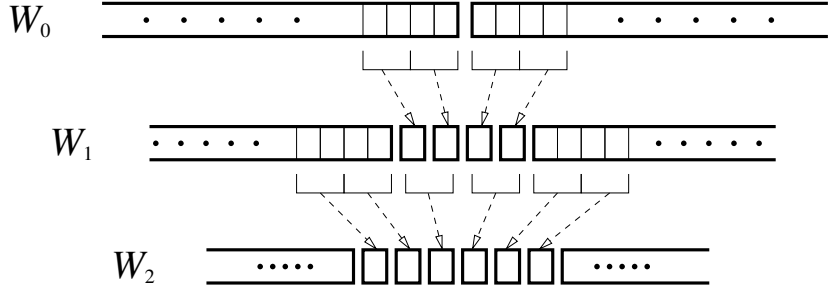}
\caption{Illustration of Theorem~\ref{thm:rlcfg}. On top we see the limit
between two long phrases of $W_0$. In this example, the blocking always pairs 
two symbols. We show below $W_0$ the $4$ bordering blocks formed with the
symbols nearby the boundary. Below, in $W_1$, those blocks are converted into $4$ 
explicit symmbols. This region of $4$ symbols is then parsed into $2$ blocks.
The parsing also creates $4$ new bordering blocks from the boundaries of the long
phrases. In $W_2$, below, we have now a region of $6$ explicit symbols. They 
would have been $8$, but we created $2$ distinct blocks that reduced 
their number to $6$.}
\label{fig:parsing}
\end{figure}

Let $n_k$ be the number of new distinct blocks produced when parsing the 
regions themselves. Therefore it holds that the number $N_k$ of distinct blocks 
produced in the $k$th iteration is at most $4b + n_k$, and the total 
number of distinct blocks created up to building $W_k$ is
$$\sum_{i = 0}^{k - 1} N_i ~\le~ 4 b k + \sum_{i = 0}^{k-1} n_i.$$

On the other hand, for each of the $n_k$ blocks created when parsing a region, 
the length of the region decreases at least by $1$ in $W_{k+1}$, that is, there is one
explicit symbol less in $W_{k+1}$. Let us call 
$C_k$ the number of explicit symbols in $W_k$. Since only the $4$ new bordering 
blocks surrounding each region are converted into explicit symbols when creating $W_k$, it holds that 
$C_k \le 4bk$ for all $k>0$. Moreover, it holds that $C_{k+1} \le C_k + 4b - n_k$, 
and thus $$0 ~\le~ C_k ~\le~ 4bk - \sum_{i=0}^{k-1} n_i.$$ 
It follows that  
$\sum_{i=0}^{k-1} n_i \le 4bk$, and thus $$\sum_{i=0}^{k-1} N_i ~\le~ 8bk.$$
Since each nonterminal may need $3$ rules to represent a block, a bound on 
the number of nonterminals created is $24bk$.
The idea is illustrated in Figure~\ref{fig:parsing}.

After $k$ rounds, the sequence is of length at most $(3/4)^k n$ and we have
generated at most $24bk$ nonterminals. Therefore, if we choose to perform 
$k = \log_{4/3}(n/b)$ rounds, the sequence will be of length at most $b$ and 
the RLSLP size will be $O(b\log(n/b))$. To complete the process, we add 
$O(b)$ nonterminals to reduce the sequence to a single initial symbol.
\end{proof}

With Theorem~\ref{thm:rlcfg}, we can also bound the size $z$ of the Lempel-Ziv
parse \cite{LZ76} that allows overlaps. The size without allowing overlaps is 
known to be bounded by the size of the smallest SLP, $z_{no} \le g+1$
\cite{Ryt03,CLLPPSS05}. We can easily see that $z \le g_{rl}+1$ also holds by 
extending an existing proof \cite[Lem.~9]{CLLPPSS05} to handle the run-length 
rules. 
We call any parsing of $T$ where every new phrase is a symbol or it occurs previously in $T$ a
\emph{left-to-right parse}.

\begin{theorem} \label{thm:grlz}
The Lempel-Ziv parse 
(allowing overlaps) of $T$ always produces $z \le g_{rl}+1$ phrases.
\end{theorem}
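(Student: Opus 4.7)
The plan is to extend the classical argument of Charikar et al.~\cite{CLLPPSS05} that $z_{no}\le g+1$, the new ingredient being a treatment of run-length rules that exploits the overlap-capability of Lempel-Ziv parsing. First I would construct a modified grammar tree for the given RLSLP: as in the classical case, for every nonterminal $X$ I retain the leftmost occurrence in the parse tree as a fully expanded internal node and prune every subsequent occurrence to a leaf labeled $X$; rules of the form $X\rightarrow a$ and $X\rightarrow BC$ are handled exactly as before. For a run-length rule $X\rightarrow Y^t$, which would naively contribute $t$ children to $X$, I would collapse the $t-1$ rightmost copies into a single ``super-leaf'' representing $exp(Y)^{t-1}$, leaving $X$ with only two children: the leftmost $Y$ (internal if this is $Y$'s first occurrence in the parse tree, else a pruned leaf) and the super-leaf.

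With this construction each of the $g_{rl}$ internal nodes has at most $2$ children, so the tree has at most $2g_{rl}$ edges and hence at most $g_{rl}+1$ leaves. I would then argue that reading the leaves in left-to-right DFS order yields a left-to-right parse of $T$ of that size. Terminal leaves are explicit symbols; a pruned nonterminal leaf $X$ sitting at position $q$ copies $exp(X)$ from the (earlier) position of $X$'s leftmost internal occurrence, exactly as in the classical proof. For the super-leaf of a rule $X\rightarrow Y^t$, if the leftmost $Y$-child of $X$ expands to $T[q\dd q+|exp(Y)|-1]$, then the super-leaf covers $T[q+|exp(Y)|\dd q+t|exp(Y)|-1]=exp(Y)^{t-1}$, and I would encode it as a single phrase of length $(t-1)|exp(Y)|$ with source $q$. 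The source precedes the target by exactly $|exp(Y)|$ positions, so the copy uses overlap; correctness is immediate from the identity $T[q+j]=T[q+|exp(Y)|+j]=exp(Y)[j\bmod |exp(Y)|]$ for $j\in[0,(t-1)|exp(Y)|)$, which is in turn forced by the fact that $T[q\dd q+t|exp(Y)|-1]=exp(Y)^t$.

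Since $z$ is optimal among all left-to-right parses of $T$ with overlap allowed, the parse just constructed yields $z\le g_{rl}+1$. The main obstacle I anticipate is the correct handling of the run-length rules: the construction must simultaneously cap the leaf count at $g_{rl}+1$ and exhibit a valid LZ source for each super-leaf. This is precisely where the overlap-capability of $z$ (as opposed to $z_{no}$) becomes essential, since in general there need not be any earlier non-overlapping occurrence of $exp(Y)^{t-1}$ in $T$ to which one could point.
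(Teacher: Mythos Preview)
Your proposal is correct and essentially identical to the paper's proof: both build the grammar tree, treat a run-length rule $X\rightarrow Y^t$ as a binary node with a left child $Y$ and a single ``super-leaf'' for $Y^{t-1}$, count at most $g_{rl}+1$ leaves, and encode the super-leaf as one overlapping phrase whose source starts at the left $Y$'s position. The only differences are notational (the paper writes the source range as $T[x\dd y+(t-2)(y-x+1)]$, which is exactly your $[q\dd q+(t-1)|exp(Y)|-1]$).
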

\begin{proof}
Consider the grammar tree of $T$ (Section~\ref{sec:gram}), where only the leftmost occurrence of each 
nonterminal $X$ is an internal node. Our left-to-right  parse of $T$ is a 
sequence $Z[1\dd z]$ obtained by traversing the leaves of the grammar tree 
left to right. For a terminal leaf, we append the explicit symbol to $Z$. For a 
leaf representing nonterminal $X$, we append to $Z$ a reference to the area 
$T[x\dd y]$ expanded by the leftmost occurrence of $X$. 

To extend grammar trees to RLSLPs, we handle rules $X \rightarrow Y^t$ as follows. 
First, we expand them to
$X \rightarrow Y\cdot Y^{t-1}$, that is, the node for $X$ has two children
for $Y$, the second annotated with $t-1$.
Since the right child of $X$ is not the first occurrence of $Y$, it must be a
leaf. The left child of $X$ may or may not be a leaf, depending on whether
$Y$ occurred before or not. Since run-length rules become internal nodes with two children, it still holds that the grammar tree has at most $g_{rl}+1$ leaves.

Now, when our leaf traversal reaches the right
child $Y$ of a node $X$ indicating $t-1$ repetitions, we append to $Z$ a
reference to $T[x\dd y+(t-2)(y-x+1)]$, where $T[x\dd y]$ is the area expanded by
the first child of $X$. Note that source and target overlap if $t > 2$.
Thus a left-to-right parse of size $g_{rl}+1$ exists, and the result follows because
Lempel-Ziv is the optimal left-to-right parse \cite[Thm.~1]{LZ76}.
\end{proof}

By combining Theorems~\ref{thm:rlcfg} and \ref{thm:grlz}, we obtain a result 
on the long-standing open problem of finding the
approximation ratio of Lempel-Ziv compared to the smallest bidirectional
scheme.

\begin{theorem} \label{thm:main}
The Lempel-Ziv 
parsing of $T$ allowing overlaps always has $z = O(b \log(n/b))$ phrases.
\end{theorem}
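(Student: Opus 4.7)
The plan is simply to chain together the two results just established in this section. Theorem~\ref{thm:rlcfg} produces, from any bidirectional scheme of size $b$ for $T$, an RLSLP generating $T$ of size $g_{rl} = O(b\log(n/b))$. Theorem~\ref{thm:grlz} then bounds the size of the Lempel-Ziv parse that allows overlaps by $z \le g_{rl}+1$, since such a parse is a particular left-to-right parse and Lempel-Ziv is optimal among left-to-right parses. Composing the two inequalities gives $z \le g_{rl}+1 = O(b\log(n/b))$, which is exactly the claimed bound.

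The only thing worth double-checking in writing this out is that the additive $+1$ from Theorem~\ref{thm:grlz} is properly absorbed into the asymptotic notation. This is immediate whenever $b\ge 1$ and $n>b$, since then $b\log(n/b)\ge 1$; the degenerate case $n=b$ is trivial because $z\le n = O(b)$. Hence no additional case analysis is needed.

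I do not anticipate any real obstacle here, as the substantive work lives entirely in Theorems~\ref{thm:rlcfg} and~\ref{thm:grlz}: the locally consistent parsing argument that bounds the number of distinct blocks in terms of $b$, and the grammar-tree traversal (extended to run-length rules) that converts any RLSLP into a left-to-right parse of comparable size. Theorem~\ref{thm:main} is, in effect, a corollary obtained by composition.
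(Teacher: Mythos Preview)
Your proposal is correct and matches the paper's own argument exactly: the paper states Theorem~\ref{thm:main} as the immediate combination of Theorems~\ref{thm:rlcfg} and~\ref{thm:grlz}, without any additional proof. Your remark about absorbing the $+1$ is a harmless (and accurate) embellishment.
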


We can also derive upper bounds for $g$ and $z_{no}$.
It is sufficient to combine Theorem~\ref{thm:main} with the facts that 
$g = O(z\log(n/z))$ \cite[Lem.~8]{Gaw11} and 
$z_{no} \le g+1$ \cite{Ryt03,CLLPPSS05}.

\begin{lemma} 
It always holds that $g,z_{no} = O(b\log^2(n/b))$.
\end{lemma}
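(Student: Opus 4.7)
The plan is to obtain the two bounds as a direct composition of three previously established inequalities, exactly as the excerpt's remark suggests. First, I would invoke Theorem~\ref{thm:main}, which gives $z = O(b\log(n/b))$. Next, I would apply Gawrychowski's bound $g = O(z\log(n/z))$ \cite[Lem.~8]{Gaw11} to convert this into a bound on the smallest SLP size. Finally, I would conclude by using $z_{no} \le g+1$ \cite{Ryt03,CLLPPSS05}, so that any upper bound on $g$ transfers to $z_{no}$.

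The only thing to check, in order to land at $O(b\log^2(n/b))$ rather than a messier expression of the form $O(b\log(n/b)\log(n/(b\log(n/b))))$, is that $\log(n/z)$ can be cleanly absorbed into $\log(n/b)$. For this I would use the elementary inequality $z \ge b$, noted earlier in the excerpt (the Lempel-Ziv parse is itself a particular bidirectional scheme). This gives $n/z \le n/b$ and hence $\log(n/z) \le \log(n/b)$. Substituting into Gawrychowski's bound yields
\[
g ~=~ O(z \log(n/z)) ~=~ O\!\left(b \log(n/b) \cdot \log(n/b)\right) ~=~ O\!\left(b \log^2(n/b)\right).
\]

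The bound on $z_{no}$ is then immediate: $z_{no} \le g+1 = O(b\log^2(n/b))$.

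There is no real obstacle here; the argument is a three-line chain. The only minor care needed is the monotonicity step $\log(n/z) \le \log(n/b)$, which saves us from dealing with a nested logarithm and is what makes the exponent in the final bound exactly $2$.
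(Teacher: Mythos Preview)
Your proposal is correct and follows exactly the same route as the paper: combine Theorem~\ref{thm:main} with $g = O(z\log(n/z))$ and $z_{no}\le g+1$. Your explicit use of $z\ge b$ to simplify $\log(n/z)\le\log(n/b)$ is the natural way to make the composition land on $O(b\log^2(n/b))$, and the paper leaves that step implicit.
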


% A consequence of ON18
%
%Finally, we show that the worst-case bound $z_{no} = O(n/\log_\sigma n)$ also
%holds for SLPs.
%
%We can always build a grammar of size $O(n/\log_\sigma n)$: create the
%$\sqrt{n}$ nonterminals that rewrite as all the possible strings of length
%$\lfloor \frac{1}{2}\log_\sigma n\rfloor$, and then write an initial rule that
%concatenates the $\lfloor n/\lfloor\frac{1}{2}\log_\sigma n\rfloor\rfloor$
%nonterminals that describe the text, plus less than $\frac{1}{2}\log_\sigma n$
%final terminal symbols. The total size of the grammar is at most
%$\sqrt{n}\cdot \frac{1}{2}\log_\sigma n + n/\lfloor \frac{1}{2}\log_\sigma
%n\rfloor +
%\frac{1}{2}\log_\sigma n = O(n/\log_\sigma n)$.

%!TEX root = paper.tex

\section{Lower Bounds} \label{sec:lower}

In this section we prove that the upper bound of Theorem~\ref{thm:main} is tight
as a function of $n$, by exhibiting a family of strings for which
$z = \Omega(b \log n)$. This confirms that the gap between bidirectionality 
and unidirectionality is significantly larger than what was previously known. 
The idea is to define phrases in $T$ according to the $r$ runs in the BWT, 
and to show that these phrases induce a valid bidirectional scheme of 
size $2r$. This proves that $r = \Omega(b)$. Then we use a well-known family
of strings where $z = \Omega (r \log n)$.

\begin{definition} \label{def:bwtparse}
Let $p_1, p_2, \ldots, p_r$ be the positions that start runs in the BWT, and
let $t_1 < t_2 < \ldots < t_r$ be the corresponding positions in $T$,
$\{ \SA[p_i] \mid 1 \le i \le r \}$, in increasing order. Note that $t_1=1$ because
$\BWT[\ISA[1]]=\$$ is a size-$1$ run, and let $t_{r+1}=n+1$, so that $T$ is 
partitioned into {\em phrases} $T[t_i\dd t_{i+1}-1]$. 
Let also $\phi(p)=\SA[\ISA[p]-1]$ if
$\ISA[p]>1$ and $\phi(p)=\SA[n]$ otherwise. Then we define the {\em 
bidirectional scheme of the BWT:}
\begin{enumerate}
\item For each $1 \le i \le r$, $T[t_i\dd t_{i+1}-2]$ is copied from
$T[\phi(t_i)\dd \phi(t_{i+1}-2)]$.
\item For each $1 \le i \le r$, $T[t_{i+1}-1]$ is an explicit symbol.
\end{enumerate}
\end{definition}

\begin{example}
The BWT runs of the example of Section~\ref{sec:bwt} induces
the bidirectional scheme
$\underline{a}|\underline{l}|\underline{a}|\underline{b}|a|\underline{r}|a|\underline{l}|alaba|\underline{r}|\underline{d}|\underline{a}|\underline{\$}$.
\end{example}\medskip

We build on the following lemma, illustrated in Figure~\ref{fig:phi}. We make use of the function $\LF$ defined in Section~\ref{sec:bwt}. Note that $\LF(x)=\ISA[\SA[x]-1]$ if $\SA[x]>1$ and $\LF(x)=\ISA[n] = 1$ if $\SA[x]=1$. That is, $\LF$ moves in $\SA$ to the suffix preceding the current one in $T$. The analogous function moving in $T$ to the suffix preceding the current one in $\SA$ is $\phi$.

\begin{figure}[t]
\centering
\includegraphics[width=0.5\textwidth]{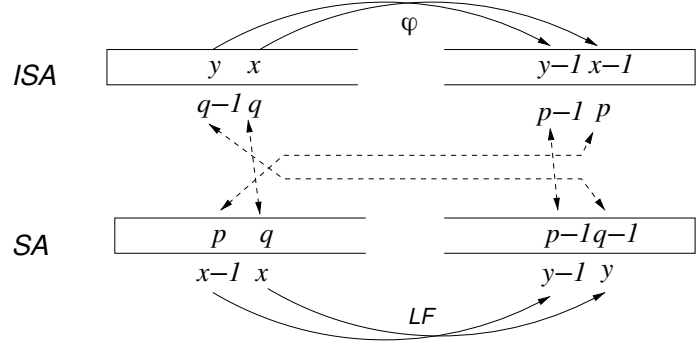}
\caption{Illustration of Lemma~\ref{lem:phi}.}
\label{fig:phi}
\end{figure}

\begin{lemma} \label{lem:phi}
Let $[q-1\dd q]$ be within a phrase of $T$. Then it holds that $\phi(q-1)=\phi(q)-1$
and $T[q-1]=T[\phi(q)-1]$.
\end{lemma}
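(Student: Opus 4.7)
The plan is to translate the assumption that $[q-1\dd q]$ lies within a phrase into a statement about runs in the $\BWT$, and then to pull the conclusion out of the $\LF$-identity recalled at the end of Section~\ref{sec:bwt}, namely that $\BWT[x-1]=\BWT[x]$ implies $\LF(x-1)=\LF(x)-1$. First, I would observe that by Definition~\ref{def:bwtparse} the phrase boundaries $t_1,\ldots,t_r$ are precisely the text positions $\SA[p_i]$ at which the $\BWT$ has a run start. Saying that $q$ is not a phrase boundary is therefore equivalent to saying that $\ISA[q]$ is not a run start of the $\BWT$; letting $x=\ISA[q]$, this means $x>1$ and $\BWT[x-1]=\BWT[x]$.

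The second identity $T[q-1]=T[\phi(q)-1]$ then drops out by unfolding definitions: $\BWT[x]=T[\SA[x]-1]=T[q-1]$, whereas $\BWT[x-1]=T[\SA[x-1]-1]=T[\phi(q)-1]$, using $\phi(q)=\SA[\ISA[q]-1]=\SA[x-1]$. Equating the two $\BWT$ values gives the claim.

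For the first identity, I would invoke the $\LF$-property to conclude $\LF(x-1)=\LF(x)-1$. Rewriting both $\LF$-values via $\LF(y)=\ISA[\SA[y]-1]$, this becomes $\ISA[\phi(q)-1]=\ISA[q-1]-1$. Applying $\SA$ to both sides and using the definition of $\phi$ on the right-hand side yields $\phi(q)-1=\SA[\ISA[q-1]-1]=\phi(q-1)$, which is exactly what we want.

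The only subtle point is that the various $\SA$, $\ISA$, $\phi$, and $\LF$ expressions have exceptional branches when their arguments hit $1$ or $n$, and these need to be ruled out so that the ``generic'' formulas above apply. But from $q\ge 2$ we get $\SA[x]=q>1$, and $\SA[x-1]>1$ follows from $\BWT[x-1]=\BWT[x]=T[q-1]\ne\$$; similarly $\ISA[q-1]>1$ because $\ISA[q-1]=1$ would force $q-1=n$, whereas $q\le n$ (and in fact $q=n$ gives the singleton phrase $T[n\dd n]$, which is excluded by the hypothesis that $q-1$ and $q$ share a phrase). So no edge case arises, and the proof reduces to the bookkeeping between $\SA$, $\ISA$, $\BWT$, $\phi$, and $\LF$ sketched above.
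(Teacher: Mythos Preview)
Your proposal is correct and is essentially the same argument as the paper's: set $x=\ISA[q]$, observe that $x$ is not a run start so $\BWT[x-1]=\BWT[x]$, derive $T[q-1]=T[\phi(q)-1]$ directly from the definition of $\BWT$, and use the $\LF$-identity $\LF(x-1)=\LF(x)-1$ together with $\LF(y)=\ISA[\SA[y]-1]$ to obtain $\phi(q-1)=\phi(q)-1$. Your treatment of the boundary cases ($x>1$, $\SA[x-1]>1$, $\ISA[q-1]>1$) is slightly more explicit than the paper's, but the route is the same.
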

\begin{proof}
Consider the pair of positions $T[q-1\dd q]$ within a phrase. Let them be pointed 
from $\SA[x]=q$ and $\SA[y]=q-1$, therefore $\ISA[q] = x$, $\ISA[q-1]=y$,
and $\LF(x)=y$. Now, since $q$ is not a position at the beginning of a phrase, $x$ is not the first 
position in a BWT run. Therefore, $\BWT[x-1]=\BWT[x]$. Recalling the formula of Section~\ref{sec:bwt} to compute $\LF(x)=C[c]+\rank[x]$, where $c=\BWT[x]$, 
it follows that $\LF(x-1)=\LF(x)-1=y-1$. Now let $\SA[x-1]=p$, that is, 
$p=\phi(q)$. Then, $$\phi(q-1)~=~\SA[\ISA[q-1]-1]~=~\SA[y-1]~=~\SA[\LF(x-1)]~=~\SA[x-1]-1
~=~p-1~=~\phi(q)-1.$$
It also follows that $$T[q-1]~=~\BWT[x]~=~ \BWT[x-1]~=~T[p-1]~=~T[\phi(q)-1].$$
\end{proof}

\begin{example}
The suffix array of $T=alabaralalabarda\$$ is $\SA=\langle 17,16,3,11,1,9,7,5,13,4,12,15,2,10,8,6,14\rangle$ and the $\phi$ function is $\phi = \langle 11,15,16,13,7,8,9,10,1,2,3,4,5,6,12,17,14\rangle$. For example, $\phi(1)=11$ because the suffix lexicographically preceding $T[1\dd]$ is $T[11\dd]$. Now, let $q=10$, which is inside the same phrase of $q-1=9$ in the parse $\underline{a}|\underline{l}|\underline{a}|\underline{b}|a|\underline{r}|a|\underline{l}|alaba|\underline{r}|\underline{d}|\underline{a}|\underline{\$}$ induced by the run heads of the BWT of $T$, $BWT = adll\$lrbbaaraaaaa$. Position $T[q=10]$ is pointed from $\SA[x=14]$, whereas $T[q-1=9]$ is pointed from $\SA[y=6]$. Thus $\LF(x=14) = C[\BWT[14]=a] + \rank[14] = 1+5=6=y$. Since $q=10$ does not start a phrase in $T$, $\BWT[x=14]$ does not start a run, thus $\BWT[x-1=13]=a$. It then holds that $\LF(x-1=13)=C[\BWT[13]=a]+\rank[13] = 1+4 = 5 = y-1 =\LF(x=14)-1$. Further, if we call $p=\SA[x-1=13]=2$, it holds that $p=2=\phi(q=10)$. One can then verify that $\phi(q-1=9) = \SA[y-1=5] = 1 = \SA[x-1=13]-1=\phi(q=10)-1$, and that $T[q-1=9] = a = \BWT[x=14] = \BWT[x-1=13] = T[p-1=1] = T[\phi(q=10)-1]$.
\end{example}

\begin{theorem} \label{thm:r}
The bidirectional scheme of the BWT is a valid bidirectional scheme, thus 
it always holds $b \le 2r$.
\end{theorem}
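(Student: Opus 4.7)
The plan is to verify the two conditions required of a valid bidirectional scheme---that each copy agrees character-for-character with its source, and that the induced function $f$ is acyclic---and then read off the bound $b \le 2r$ directly from the construction.

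For copy correctness, I would iterate Lemma~\ref{lem:phi} inside each phrase. Fix $i$ and any $q$ with $t_i < q \le t_{i+1}-1$; the pair $[q-1\dd q]$ lies inside the phrase $T[t_i\dd t_{i+1}-1]$, so the lemma applies and yields both $\phi(q-1) = \phi(q)-1$ and $T[q-1] = T[\phi(q)-1]$. A straightforward induction on $j$ then gives $\phi(t_i+j) = \phi(t_i)+j$ and $T[t_i+j] = T[\phi(t_i)+j]$ for all $0 \le j \le t_{i+1}-2-t_i$, which is exactly the assertion that $T[t_i\dd t_{i+1}-2]$ equals $T[\phi(t_i)\dd \phi(t_{i+1}-2)]$ both in length and in content. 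In particular, the source ends at position $\phi(t_i) + (t_{i+1}-2-t_i) = \phi(t_{i+1}-2)$, so the source interval is well-defined.

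For acyclicity, I would exploit the fact that $\phi$ strictly decreases the $\ISA$ value. By construction the scheme sets $f(q) = \phi(q)$ at every non-explicit $q$, and whenever $\ISA[q] > 1$ one has $\phi(q) = \SA[\ISA[q]-1]$, so $\ISA[f(q)] = \ISA[q]-1 < \ISA[q]$. The only position with $\ISA[q] = 1$ is $q = \SA[1] = n$ (because $T[n] = \$$ is the lex-smallest suffix), and this position is forced to be explicit by part~(2) of the definition, since $n = t_{r+1}-1$ and so it is never in the domain of $f$. Hence along every orbit of $f$ the value of $\ISA$ strictly decreases until an explicit symbol is reached, ruling out cycles. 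The counting step is then immediate: each of the $r$ original phrases contributes one explicit symbol and at most one copy sub-phrase (empty when $t_{i+1}=t_i+1$), for a total of at most $2r$. I do not expect any serious obstacle here, since everything rests on Lemma~\ref{lem:phi} and on $\phi$ decreasing $\ISA$ by one; the only subtlety is the boundary case $q=n$, which is precisely why the scheme has to force the last position to be explicit.
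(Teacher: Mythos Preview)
Your proof is correct and follows essentially the same approach as the paper: both use Lemma~\ref{lem:phi} iteratively to establish that the sources are contiguous and match the targets, and both rely on the fact that $\ISA[\phi(q)] = \ISA[q]-1$ to rule out cycles. The only cosmetic difference is that the paper phrases acyclicity constructively (recovering $T$ by traversing $\phi^k(n)$ for $k=n-1,\ldots,1$), whereas you argue it via the decreasing potential $\ISA$; the underlying mechanism is identical.
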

\begin{proof}
By Lemma~\ref{lem:phi}, it holds that $\phi(q-1)=\phi(q)-1$ if $[q-1\dd q]$ is
within a phrase, and that $T[q-1]=T[\phi(q)-1]$.
Therefore, we have that $\phi(t_i+k)=\phi(t_i)+k$ for $0 \le k < \ell_i =
t_{i+1}-t_i-1$, and then $T[\phi(t_i)\dd \phi(t_{i+1}-2)]$ is indeed a 
contiguous range of length $\ell_i$. We also have that 
$T[\phi(t_i)\dd \phi(t_{i+1}-2)] = T[t_i\dd t_{i+1}-2]$, and therefore the copy is
correct. 
%Since $\phi$ is a permutation, every position of
%$T$ is mentioned exactly once as a target in points 1 and 2.

It is also easy to see that we can recover the whole $T$ from those $2r$
phrases. We can, for example, follow the cycle $\phi^k(n)$, $k=n-1,\ldots,1$,
and copy $T[\phi^k(n)]$ from $T[\phi^{k+1}(n)]$ unless the former is explicitly
stored (note that $T[\phi^n(n)]=T[\phi^0(n)]=T[n]$ is stored explicitly). By
Lemma~\ref{lem:phi}, it is correct to copy from $T[\phi(p)]$ to $T[p]$ whenever 
$p$ (which is $q-1$ in Lemma~\ref{lem:phi}) is not at the end of a phrase; this is why
we store the explicit symbols at the end of the phrases. 
% may confuse reviewers even more
%Note also that,
%since $\phi^k(n)=\SA[n-k+1]$, this is equivalent to traversing the positions $\SA[i]$
%for $i=2\dd n$, defining the first symbols of the lexicographically increasing suffixes.

Since the bidirectional scheme of the BWT is of size $2r$, it follows by
definition that $2r \ge b$.
\end{proof}

\begin{example}
We can recover $T$ from our bidirectional scheme
$\underline{a}|\underline{l}|\underline{a}|\underline{b}|a|\underline{r}|a|\underline{l}|alaba|\underline{r}|\underline{d}|\underline{a}|\underline{\$}$ by following positions $\phi^{n-1}(n),\ldots,$ $\phi(n)$
and copying the last explicit symbol seen onto each new position. The sequence, where we indicate in parentheses the explicit symbols visited, is 
$16(a),3(a),11,1(a),9,7,5,13,4(b),12,15(d),2(l),10,8(l),6(r),14(r)$. For example, the explicit 
$a$ collected at $T[1]$ is copied onto $T[9]$, $T[7]$, $T[5]$, and $T[13]$.
\end{example}\medskip

We can now prove the promised separation betweeen $z$ and $b$.
Before, we prove a further property of the cyclic rotations
of the Fibonacci words we make use of.

\begin{lemma}
  \label{lem:smallest}
  In every {\em even} Fibonacci word $F_k$,
  the lexicographically smallest cyclic
  rotation is the one that starts at the last character.
\end{lemma}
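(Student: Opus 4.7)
The plan is to prove the statement by induction on the even index $k$, reducing at each step to a handful of explicit comparisons. Write $n = f_k$ and let $R_i$ denote the cyclic rotation of $F_k$ starting at position $i$. The base cases $k=2$ (single character) and $k=4$ (the rotations of $F_4 = aba$ are $aba$, $baa$, $aab$, whose smallest is $aab = R_3$) are immediate.

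I would first derive, by a routine joint induction on $k$ from $F_k = F_{k-1}F_{k-2}$, the standard shape facts: for every odd $k \ge 3$, $F_k$ begins with $ab$ and ends with $ab$; for every even $k \ge 4$, $F_k$ begins with $ab$ and ends with $ba$. In particular $F_k[n] = a$ and $F_k[1\dd 2] = ab$, so $R_n$ begins with $aab$. For any $i \in [1, n-1]$, $R_i$ begins with $F_k[i]F_k[i+1]$, and unless this pair is $aa$ we already have $R_i > R_n$ (since $R_n$ starts with $aa$). By the standard cube-freeness of Fibonacci words ($aaa$ does not occur), every surviving \emph{internal $aa$-position} $i$ also has $F_k[i+2] = b = F_k[2] = R_n[3]$, so $R_n$ and every remaining competitor agree on the prefix $aab$, and the problem reduces to comparing $R_n$ with $R_i$ for each internal $aa$-position $i$.

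To handle those, I would use the decomposition $F_k = F_{k-2}F_{k-3}F_{k-2}$, valid for $k \ge 5$. By the shape facts, the junction $F_{k-2}\,|\,F_{k-3}$ creates a single new $aa$ at position $f_{k-2}$, while the junction $F_{k-3}\,|\,F_{k-2}$ reads $b\,|\,a$ and creates none. Internal $aa$-positions are therefore of four types: (a) inside the first $F_{k-2}$; (b) the boundary $i = f_{k-2}$; (c) inside $F_{k-3}$; and (d) inside the second $F_{k-2}$. For case (b), writing $R_n = a(F_{k-2}F_{k-3})F_{k-2}[1\dd f_{k-2}-1]$ and $R_{f_{k-2}} = a(F_{k-3}F_{k-2})F_{k-2}[1\dd f_{k-2}-1]$, I would invoke the classical identity that $F_{k-2}F_{k-3} = F_{k-1}$ and $F_{k-3}F_{k-2}$ agree on their first $f_{k-1}-2$ letters and differ only in the last two ($F_{k-1}$ ends in $ab$ because $k-1$ is odd, while $F_{k-3}F_{k-2}$ ends in $ba$ because $k-2$ is even); this places the first mismatch at position $f_{k-1}$ of the rotations, with $a$ in $R_n$ against $b$ in $R_{f_{k-2}}$, so $R_n < R_{f_{k-2}}$. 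For (a), (c) and (d), I would read $R_i$ and $R_n$ as appropriate shifts of $F_k$; the inductive hypothesis for $F_{k-2}$ predicts the first mismatch whenever it lies within the window where $R_i$ reads a rotation of $F_{k-2}$ and $R_n$ reads $a \cdot F_{k-2}[\cdot]$, and the shape facts handle the subcase in which that mismatch falls at or just past the factor boundary.

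The main technical obstacle is the bookkeeping in cases (a), (c) and (d): the inductive hypothesis only compares rotations of $F_{k-2}$ out to length $f_{k-2}$, so when those rotations happen to agree throughout the window exposed in $F_k$ one must extend the comparison across the factor boundary into $F_{k-3}$ or the adjacent $F_{k-2}$. I would handle this by combining the shape facts of the first step (which pin down the first character immediately past each junction) with the property that $F_{k-3}$ is a prefix of $F_{k-2}$ (which synchronises the two rotations as shifted prefixes of $F_k F_k \cdots$), and verify in every subcase that the extended mismatch again reads $a$ in $R_n$ against $b$ in $R_i$.
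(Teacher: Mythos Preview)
Your approach is correct in spirit but takes a genuinely different route from the paper. The paper's proof is essentially a one-line citation: Mantaci, Restivo, and Sciortino give an explicit arithmetic parametrisation of the cyclic rotations of $F_k$ via the map $\varrho(x)=x+f_{k-2}\pmod{f_k}$, together with the fact that in this parametrisation the rotations are already listed in increasing lexicographic order and that $F_k$ itself corresponds to the parameter $f_{k-2}$ when $k$ is even. From these three facts the conclusion is immediate. Your proposal instead gives a self-contained inductive argument on even $k$ via the decomposition $F_k=F_{k-2}F_{k-3}F_{k-2}$.

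Your case (b) and the overall reduction to $aa$-positions are clean and correct. For cases (a), (c), (d), the argument can indeed be completed, but the mechanism is slightly cleaner than the ``extend past the boundary'' picture you describe. The key fact is that in each of these cases one has $R_i[1\dd f_{k-2}]=R'_{j}$ for some rotation $R'_j$ of $F_{k-2}$ with $j\neq f_{k-2}$, while $R_n[1\dd f_{k-2}]=R'_{f_{k-2}}$; since distinct rotations of $F_{k-2}$ differ within their $f_{k-2}$ characters, the inductive hypothesis finishes the comparison outright. Verifying $R_i[1\dd f_{k-2}]=R'_j$ is where the prefix structure enters: one needs not only that $F_{k-3}$ is a prefix of $F_{k-2}$ but also that $F_{k-4}$ is, which together give $F_k[f_{k-2}+m]=F_{k-2}[m]$ for all $1\le m\le f_{k-2}$ (equivalently, the near-commutativity $F_{k-2}F_{k-3}$ versus $F_{k-3}F_{k-2}$ from Lemma~\ref{lem:near}). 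Case (c) in particular does not fit the phrasing ``$R_i$ reads a rotation of $F_{k-2}$'' in any literal window, but this periodicity makes $R_i[1\dd f_{k-2}]$ coincide with the corresponding case-(a) rotation anyway.

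What each approach buys: the paper's proof is much shorter and immediately yields the full ordering of all rotations (not just the minimum), at the cost of importing a nontrivial external characterisation. Your argument is elementary and self-contained, at the cost of a few pages of careful case-checking.
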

 
\begin{proof}
  Mantaci et al.~\cite{MRS07} give a characterization of
  the cyclic rotations of the $k$th Fibonacci word $F_k$ by defining
  two functions: $\varrho: [0 \dd f_k - 1] \rightarrow [0 \dd f_k-1]$,
  defined as
  \[\varrho(x) = x + f_{k-2} ~~(\bmod~ f_k),\]
  and $\varphi: [0 \dd f_k -1] \rightarrow \{a, b\}$,
  defined as
  \[
    \varphi(x) =
    \begin{cases} 
      a, & \textrm{if~} x < f_{k-1} \\
      b, & \textrm{if~} x \ge f_{k-1},
    \end{cases}
  \]
where they index the strings from position $0$.
  They prove that the cyclic rotations of $F_k$ are
  the words $R_x = r_0r_1\cdots r_{f_k - 1}$, where
  $r_i = \varphi(\varrho^i(x))$, for $0 \le x \le f_k -1$.
  The lexicographic ordering of the cyclic rotations of $F_k$ is
  $R_0 < R_1 < \cdots < R_{f_k-1}$~\cite[proof of Thm.~9]{MRS07}.
  If $k$ is even, then $F_k = R_{f_{k-2}}$~\cite[Thm.~6]{MRS07}.
 Then, since $F_k[i] = R_{f_{k-2}}[i] = \phi(\varrho^i(f_{k-2})) =
 \phi(\varrho^{i+1}(0)) = R_0[i+1]$, and $R_0$ is the lexicographically
 smallest cyclic rotation, the lexicographically smallest cyclic rotation of
 $F_k$ starts at its last position, $f_k$.
%  Thus, after one application of $\varrho(0)$, we get the first
%  symbol of $F_k$ from the first symbol of $R_0$.
  Formally, $F_k[f_k]F_k[1\dd f_k-1]$ is the lexicographically
  smallest cyclic rotation of $F_k$, for all even $k$.
\end{proof}

\begin{theorem} \label{thm:lowz}
There is an infinite family of strings over an alphabet of size 2 for which $r=O(1)$ and $z=\Theta(\log n)$, and thus
$z = \Omega(r \log n)$ and
 $z = \Omega(b \log n)$.
\end{theorem}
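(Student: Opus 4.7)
The plan is to take the family of even-indexed Fibonacci words with terminator appended, $T_k = F_{2k}\$$, and show that $z = \Theta(\log n)$ and $r = O(1)$ both hold for this family. The bound $b = O(r) = O(1)$ then follows from Theorem~\ref{thm:r}, and the separation $z = \Omega(b\log n) = \Omega(r\log n)$ is immediate.

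For the Lempel-Ziv bound $z = \Theta(\log n)$, I would invoke the well-known fact that the Fibonacci words admit LZ parses of logarithmic size (this is essentially folklore from \cite{Pre16} and related work); the $\$$ terminator only adds one explicit phrase, so the same asymptotics carry over to $T_k$.

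The main obstacle, and the heart of the proof, is establishing $r = O(1)$ under the $\$$-terminated convention. Under the cyclic convention, $r = O(1)$ for Fibonacci words is already known~\cite{Pre16}; the challenge is to transfer this result to our suffix-based model. This is exactly where Lemma~\ref{lem:smallest} enters: since for even $k$ the lexicographically smallest cyclic rotation of $F_{2k}$ is the one starting at position $f_{2k}$, the role played by that rotation in the cyclic BWT is perfectly mimicked by the \$-terminated suffix $\$$ (whose $\$$ is lex-smaller than every real symbol) in the suffix array of $T_k$. Concretely, I would argue that for each two cyclic rotations $F_{2k}[i\dd f_{2k}]F_{2k}[1\dd i-1]$ and $F_{2k}[j\dd f_{2k}]F_{2k}[1\dd j-1]$ with $1 \le i,j \le f_{2k}-1$, comparing them lexicographically gives the same answer as comparing the suffixes $T_k[i\dd]$ and $T_k[j\dd]$, because the first point where they differ from the cyclic version is beyond the end of $F_{2k}$, and by Lemma~\ref{lem:smallest} the continuation that starts at the last character is lex-smaller than every other continuation, matching the behaviour of $\$$. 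Consequently the suffix array of $T_k$, restricted to positions $1, \ldots, f_{2k}-1$, induces the same permutation as the cyclic suffix array of $F_{2k}$ on those same positions, and the characters in $\BWT(T_k)$ at those ranks coincide with those of the cyclic BWT.

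From this correspondence the number of BWT runs can increase by at most a small additive constant: the two positions that pick up $\$$ as their preceding symbol (namely the rank of the suffix $\$$ itself and the rank of the suffix starting at position $1$, whose predecessor is $\$$) together with the suffix ending exactly at $f_{2k}$ can each split at most one existing run, so $r(T_k) \le r_{cyc}(F_{2k}) + O(1) = O(1)$. Plugging this into Theorem~\ref{thm:r} yields $b \le 2r = O(1)$, and combined with $z = \Theta(\log n)$ and $n = f_{2k}+1 = \Theta(\varphi^{2k})$, we conclude $z = \Omega(b \log n) = \Omega(r \log n)$, as required. The delicate part of the argument is precisely the case analysis around the boundaries — the suffix starting at position $f_{2k}$, the suffix $\$$, and the effect on the BWT — which is where Lemma~\ref{lem:smallest} is indispensable for keeping the number of affected runs constant.
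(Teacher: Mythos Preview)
Your proposal is correct and follows essentially the same route as the paper's own proof: both take the even Fibonacci words, invoke the known cyclic-BWT result $r=O(1)$ and the $z=\Theta(\log n)$ bound, and use Lemma~\ref{lem:smallest} as the bridge to transfer the cyclic ordering to the $\$$-terminated suffix ordering, then finish via Theorem~\ref{thm:r}. The only cosmetic difference is that the paper asserts directly that the two orderings coincide for all positions, whereas you restrict to positions $1,\ldots,f_{2k}-1$ and then separately account for the constant number of boundary positions where the $\$$ can perturb the run structure; both yield $r=O(1)$ for $T_k=F_{2k}\$$.
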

\begin{proof}
As observed by Prezza \cite[Thm.~25]{Pre16}, for all Fibonacci words we have 
$r=O(1)$ \cite[Thm.~9]{MRS07}. Combining it with the fact that, in all 
Fibonacci words, it holds $z=\Theta(\log n)$ \cite{Fic15}, yields 
$z = \Omega(r \log n)$.

Note, however, that the result $r=O(1)$ is proved under a BWT definition that is 
different from ours \cite{MRS07}. Namely, the Fibonacci words are not
terminated with $\$$, but instead the suffixes are compared cyclically, as if 
$F_k$ were a circular word.

By Lemma~\ref{lem:smallest}, however, in each 
{\em even} Fibonacci word $F_k$, the lexicographically smallest cyclic suffix
is the one that starts at the last character.
From this observation we have that, in every even Fibonacci word $F_k$,
the relative order of the cyclic suffixes is the same as the relative order
of the suffixes terminated in $\$$. Formally,
$F_k[i\dd f_k]F_k[1\dd i-1] < F_k[j\dd f_k]F_k[1\dd j-1]$
if and only if
$F_k[i\dd f_k]\$ < F_k[j\dd f_k]\$$, for all $i \neq j$, and $k$ even.
Thus, in the even Fibonacci words, we have $r=O(1)$, and thus $z=\Omega(r\log n)$.
The result $z=\Omega(b\log n)$ then  follows from the fact that $b=O(r)$, by
Theorem~\ref{thm:r}.
\end{proof}

Finally, by relating $g$ with the empirical entropy of $T$, we
can also prove a separation between $r$ and $g$.

\begin{lemma}\label{lem:entropy}
It always holds that 
        $g\log_2 n \leq nH_k + o(n\log\sigma)$ for any $k = o(\log_\sigma n)$,
thus $g = O(n/\log_\sigma n)$.
\end{lemma}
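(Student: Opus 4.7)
The plan is to exhibit an SLP whose size satisfies the claimed inequality; since $g$ is by definition the size of the smallest SLP, the bound on $g$ then follows. First, I would build an SLP from the LZ78 parse of $T$, which partitions the text into $z_{78}$ phrases, each of the form $f_i = f_{j_i} \cdot a_i$ with $j_i < i$ and $a_i$ a single symbol. Each such phrase gives rise to a binary rule $X_i \to X_{j_i} A_i$ (together with terminal rules $A_i \to a_i$), and an initial symbol concatenating $X_1, \ldots, X_{z_{78}}$ can be constructed via $z_{78}-1$ further binary rules. The resulting SLP has size $O(z_{78})$, so $g = O(z_{78})$.

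Next, I would invoke the classical Kosaraju--Manzini analysis of LZ78, which establishes that the LZ78 encoding cost satisfies $z_{78}(\log z_{78} + \log \sigma) \le nH_k + o(n\log\sigma)$ for every $k = o(\log_\sigma n)$. To convert this into the form $z_{78}\log n \le nH_k + o(n\log\sigma)$, I would observe that the difference $z_{78}\log(n/z_{78})$ is $o(n)$ for every possible value of $z_{78}$ (by a short case analysis on whether $z_{78}$ is close to its maximum $\Theta(n/\log_\sigma n)$ or much smaller), and hence $o(n\log\sigma)$ whenever $\sigma\ge 2$. Combining with $g = O(z_{78})$, and using the refined SLP construction of \cite{ON18} to absorb the implicit multiplicative constant, gives the main inequality.

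The corollary $g = O(n/\log_\sigma n)$ then follows by setting $k=0$: since $H_0 \le \log_2 \sigma$, we have $nH_0 \le n\log\sigma$, so the main inequality gives $g\log n \le n\log\sigma + o(n\log\sigma) = O(n\log\sigma)$, and dividing through by $\log n$ yields $g = O(n\log\sigma/\log n) = O(n/\log_\sigma n)$.

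The step I expect to require the most care is eliminating the implicit multiplicative constant in $g = O(z_{78})$ so that the coefficient of $nH_k$ on the right-hand side is exactly $1$; this is precisely the contribution of \cite{ON18}, which produces an SLP whose bit-encoding matches the LZ78 encoding almost exactly, rather than merely being proportional to it. The routine-but-easy-to-mis-state conversion from $z_{78}\log z_{78}$ to $z_{78}\log n$ via the case analysis on $z_{78}$ is the one place where one has to be slightly careful about the relationship between $\log\sigma$ and $\log n$.
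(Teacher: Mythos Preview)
Your approach is essentially the same as the paper's: build an SLP from the LZ78 parse and invoke Kosaraju--Manzini for the entropy bound on $z_{78}$. The paper's proof is considerably terser in two respects. First, it cites \cite[Thm.~A.4]{KM00} directly for the form $z_{78}\log_2 n \le nH_k + o(n\log\sigma)$ (noting that their $c$ is $O(n/\log_\sigma n)$), so your conversion step from $z_{78}(\log z_{78}+\log\sigma)$ to $z_{78}\log n$ is not needed. Second, the paper simply asserts that the LZ78 parse yields an SLP of size $z_{78}$ and hence $g\le z_{78}$, sidestepping entirely the multiplicative-constant issue you work to eliminate via \cite{ON18}; you are right that a literal SLP construction (phrase rules, terminal rules, and binary concatenation of the $z_{78}$ phrases into the start symbol) naturally gives $O(z_{78})$ rather than exactly $z_{78}$ rules, so your extra care here is warranted even if the paper glosses over it. The derivation of $g=O(n/\log_\sigma n)$ from $H_k\le\log\sigma$ is identical in both.
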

\begin{proof}
	Let $z_{78}$ be the size of the Lempel-Ziv 1978 (LZ78) parsing 
\cite{ZL78} of $T$. Then, it holds that $z_{78}\log_2 n \leq nH_k + 
o(n\log\sigma)$ for $k=o(\log_\sigma n)$ \cite[Thm.~A.4]{KM00} (noting 
that their $c$ is $O(n/\log_\sigma n)$ and assuming $k=o(\log_\sigma n)$).
Since this parsing can be converted into an SLP of size $z_{78}$,
it holds that $g \le z_{78}$ and the result follows.
The final claim is a consequence of the fact that $H_k \le \log\sigma$ for all $k$.
\end{proof}

\begin{theorem}\label{thm:lowrg}
There is an infinite family of strings over an alphabet of size 2 for which
$r = \Omega(g \log n)$.
\end{theorem}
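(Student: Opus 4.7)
The plan is to use the binary de Bruijn sequences as the witness family, since the paper has already set them up in Section~\ref{sec:basics} and stated (just after the definition of $r$) that $r=\Theta(n)$ holds on binary de Bruijn sequences. All that remains is to pair this existing fact with an upper bound on $g$ of the form $O(n/\log n)$ for the same family, and Lemma~\ref{lem:entropy} gives exactly that for every binary string, not only de Bruijn ones.

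Concretely, I would let $T$ be the de Bruijn sequence of order $k$ over a binary alphabet, so $\sigma=2$ and $n=2^k+1$. First I would invoke the already-cited result that for this family $r=\Theta(n)$; the paper credits this to \cite{BCGPR15,Pre16} in the discussion following the BWT section. Second, I would apply Lemma~\ref{lem:entropy} with $\sigma=2$ to obtain $g=O(n/\log_\sigma n)=O(n/\log n)$ (no $k$th-order entropy argument is needed here beyond what the lemma already provides, since the lemma itself packages the $H_k$ bound into the $O(n/\log_\sigma n)$ conclusion). Dividing, $r/g = \Omega(n)\cdot\Omega(\log n /n) = \Omega(\log n)$, which is exactly $r=\Omega(g\log n)$.

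There is no real obstacle: the two ingredients are (i) an external lower bound on $r$ for de Bruijn sequences already quoted in the paper, and (ii) the universal upper bound on $g$ given by Lemma~\ref{lem:entropy}, which the paper has just proved. The only thing to be careful about is matching conventions on the alphabet size so that $\log_\sigma n = \Theta(\log n)$, which is immediate for the fixed binary alphabet $\sigma=2$. I would end with a one-line remark noting that, since $z_{no}\le g+1$, this family also witnesses $r=\Omega(z_{no}\log n)$, recovering the separation mentioned in Section~\ref{sec:bwt}.
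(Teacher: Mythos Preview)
Your proposal is correct and follows essentially the same approach as the paper: use binary de Bruijn sequences, invoke the cited result $r=\Theta(n)$, and apply Lemma~\ref{lem:entropy} to get $g=O(n/\log n)$, yielding $r=\Omega(g\log n)$. The closing remark about $z_{no}$ is a harmless extra observation not present in the paper's proof.
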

\begin{proof}
By Lemma~\ref{lem:entropy},
the smallest SLP on a binary alphabet is always of size $g = O(n/\log n)$.
On a de Bruijn sequence of order $k$ on a binary alphabet we have 
$r=\Theta(n)$ \cite{BCGPR15}. The result follows.
\end{proof}

%!TEX root = paper.tex

\section{Greedy and Ordered Parses}

In this section we extend the Lempel-Ziv parse, where sources must start before
targets in the text, to the more general concepts of ordered parsings, and 
prove some general results on them.

%\begin{definition} \label{def:monotone}
%A \emph{monotone parse} of $T[1\dd n]$ is a bidirectional scheme 
%partitioning $T$ into $u$ phrases $B_1,\ldots,B_u$, such that each target 
%$B_i = T[t_i\dd t_i+\ell_i-1]$ either is an explicit symbol or it is
%copied from a source $T[s_i\dd s_i+\ell_i-1]$, such that $s_i \not= t_i$ and
%$T[s_i\dd ]\prec T[t_i\dd ]$ for some suitable total order $\preceq$ on the 
%suffixes of $T$.
%\end{definition}

\begin{definition} \label{def:ordered}
An \emph{ordered parse} of $T[1\dd n]$ is a partition of $T$
into $u$ phrases $B_1,\ldots,B_u$, such that each phrase 
$B_i = T[t_i\dd t_i+\ell_i-1]$ either is an explicit symbol or it is
copied from a source $T[s_i\dd s_i+\ell_i-1]$, such that $s_i \not= t_i$ and
$T[s_i+j\dd ]\prec T[t_i+j\dd ]$ for all $0 \le j < \ell_i$, for some suitable 
total order $\preceq$ on the suffixes of $T$.\footnote{The order is called $\preceq$ because it must be reflexive, yet we use $x \prec y$ to indicate $x \preceq y$ and $x \not= y$, that is, $x$ is strictly smaller than $y$ under order $\preceq$.}
\end{definition}

By the way we define them, ordered parses are bound to be valid bidirectional schemes, and
bidirectional schemes are ordered parses under some suitable order.

\begin{lemma} \label{lem:ordered-to-b}
Every ordered parse is a bidirectional scheme.
\end{lemma}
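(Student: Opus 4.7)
The plan is to argue directly from the definition of validity in Section~\ref{sec:bidir}: a bidirectional scheme is valid iff the induced function $f : [1\dd n] \to [1\dd n] \cup \{-1\}$ has no cycles, i.e.\ for every position $p$ there is some $k \ge 0$ with $f^k(p) = -1$. An ordered parse already has the right \emph{shape} (a partition into phrases, each either an explicit symbol or copied from a source), so the content of the lemma is really just checking that the sources chosen by an ordered parse cannot create a cycle under~$f$.

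First I would spell out the natural candidate $f$ for an ordered parse: for each non-explicit phrase $B_i = T[t_i\dd t_i+\ell_i-1]$ with source $T[s_i\dd s_i+\ell_i-1]$, set $f(t_i+j) = s_i+j$ for $0 \le j < \ell_i$, and for each explicit phrase $B_i = T[t_i]$ set $f(t_i)=-1$. By Definition~\ref{def:ordered}, whenever $f(p)\ne -1$ we have the key inequality
\[
T[f(p)\dd] ~\prec~ T[p\dd]
\]
under the total order $\preceq$, because $p = t_i+j$ and $f(p) = s_i+j$ for some phrase $B_i$ and some $0 \le j < \ell_i$.

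Next I would rule out cycles. Suppose for contradiction that $f^k(p) = p$ for some $k>0$ and some $p$; in particular this forces $f^m(p) \ne -1$ for all $0\le m\le k$. Applying the displayed inequality $k$ times in succession yields
\[
T[p\dd] ~=~ T[f^k(p)\dd] ~\prec~ T[f^{k-1}(p)\dd] ~\prec~ \cdots ~\prec~ T[f(p)\dd] ~\prec~ T[p\dd],
\]
which contradicts the irreflexivity of the strict order derived from $\preceq$. Hence $f$ has no cycles, so for every $p$ some iterate $f^k(p)$ is $-1$, and the parse is a valid bidirectional scheme in the sense of Section~\ref{sec:bidir}.

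I do not expect any genuine obstacle here: the argument is essentially bookkeeping, showing that ``strictly decreasing along a total order'' forbids cycles in a finite set. The only mild subtlety is to verify that the alignment between target offsets $t_i+j$ and source offsets $s_i+j$ really lets one apply the suffix-order condition at \emph{every} position inside a phrase, not just at its starting point; this is exactly what Definition~\ref{def:ordered} guarantees by quantifying the inequality over all $0 \le j < \ell_i$. Once this is noted, the proof closes in a couple of lines.
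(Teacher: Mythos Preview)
Your proposal is correct and follows essentially the same approach as the paper: define the induced function $f$, observe that Definition~\ref{def:ordered} gives $T[f(p)\dd] \prec T[p\dd]$ at every non-explicit position, and conclude that a cycle would contradict the strictness of $\prec$. The paper's proof is simply a terser two-sentence version of what you wrote.
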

\begin{proof}
Let $f$ be the function associated with the ordered parse, that is, $f(t_i+j) = s_i+j$ for all $0 \le j < \ell_i$ if phrase $B_i = T[t_i\dd t_i+\ell_i-1]$ is copied from $T[s_i\dd s_i+\ell_i-1]$.
There cannot be a cycle in $f$ because, by definition, $T[f(p)\dd] \prec 
T[p\dd]$ for every position $p$ inside every such phrase $B_i$.
\end{proof}

%With monotone parses, instead, it could be that $T[s_i\dd ] \prec T[t_i\dd ]$ 
%but $T[s_i+1\dd ] \succ T[t_i+1\dd ]$, and therefore not all the symbols of 
%$T[t_i\dd t_i+\ell_i-1]$ are extracted from positions that are smaller under 
%$\preceq$. This is why for monotone parses we require in addition that they be
%bidirectional schemes. Further, bidirectional parses are always ordered
%under some order $\preceq$.

\begin{lemma} \label{lem:b-to-ordered}
Every bidirectional scheme is an ordered parse under some suitable order 
$\preceq$.
\end{lemma}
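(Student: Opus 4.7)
The plan is to construct an explicit total order $\preceq$ on the suffixes of $T$ from the bidirectional scheme's validity (i.e., the acyclicity of its function $f$), and then verify that the scheme's phrases satisfy the required ``source $\prec$ target'' condition under this order.

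First I would exploit acyclicity: since the bidirectional scheme is valid, by the characterization recalled in Section~\ref{sec:bidir}, for every position $p \in [1\dd n]$ there is a finite smallest integer $d(p) \ge 0$ such that $f^{d(p)}(p) = -1$. Define this ``depth'' function $d : [1\dd n] \rightarrow \mathbb{N}$. The key recursive property is that whenever $f(p) \ne -1$, we have $d(p) = d(f(p)) + 1$, so $d(f(p)) < d(p)$.

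Next I would define the total order $\preceq$ on suffixes of $T$ by
\[
  T[p\dd] \prec T[q\dd] \iff (d(p),p) <_{\text{lex}} (d(q),q),
\]
with $T[p\dd] = T[q\dd]$ iff $p=q$. Since positions are distinct integers, this is a well-defined total order on the $n$ suffixes (it is just a lexicographic order on pairs of integers, ignoring entirely the textual content of the suffixes, which is allowed since Def.~\ref{def:ordered} only asks for ``some suitable total order'').

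Finally I would check the ordered-parse condition. Let $B_i = T[t_i\dd t_i+\ell_i-1]$ be a phrase of the bidirectional scheme copied from $T[s_i\dd s_i+\ell_i-1]$, so that $f(t_i+j) = s_i+j$ for all $0 \le j < \ell_i$. Then $d(s_i+j) = d(f(t_i+j)) < d(t_i+j)$, which immediately gives $T[s_i+j\dd] \prec T[t_i+j\dd]$ in the order just defined. Thus the same partition of $T$ together with the same source assignments forms an ordered parse under $\preceq$, completing the proof.

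I do not expect any real obstacle here: the only subtlety is remembering that the order in Def.~\ref{def:ordered} may be \emph{any} total order on suffixes, not necessarily the lexicographic one on their textual content; once that is recognized, the depth function from the acyclic $f$ provides the order essentially for free, and the result is the natural converse of Lemma~\ref{lem:ordered-to-b}.
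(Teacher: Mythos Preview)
Your proposal is correct and essentially identical to the paper's proof: the paper also defines $h(p)=\min\{k\ge 0: f^k(p)=-1\}$ (your $d$) and takes $\preceq$ to be any total order compatible with $h$, noting that $h(p)=1+h(f(p))$ gives $T[f(p)\dd]\prec T[p\dd]$. Your lexicographic tie-break on $(d(p),p)$ is just a concrete instantiation of the paper's ``any total order compatible with $h$ (e.g., topological sorting)''.
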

\begin{proof}
Let $f$ be the function associated with the bidirectional scheme. Let us assign
to every suffix $T[p\dd]$ the value $h(p) = \min \{k \ge 0, f^k(p)=-1\}$.
Now $\preceq$ can be any total order on $[1\dd n]$ compatible with $h(p)$,
that is, such that if $h(p') < h(p)$ then $p' \prec p$ (e.g., topological sorting
produces a valid order $\preceq$). Since the bidirectional
scheme copies $T[p]$ from $T[f(p)]$ and $h(p) = 1+h(f(p)) > h(f(p))$, it
holds that $T[f(p)\dd] \prec T[p\dd]$. The parsing is then ordered under order
$\preceq$.
\end{proof}

We are interested in parses that, while respecting a given order $\preceq$, 
produce the smallest number of phrases.
%Note that being optimal among monotone parses is stronger than being optimal 
%among ordered parses, as it optimizes over a larger set of admissible parses.

\begin{definition} \label{def:optimal-parse}
A parse is {\em ordered-optimal} with respect to a total
order $\preceq$ if no other ordered parse respecting the 
order $\preceq$ has fewer phrases on any text $T[1\dd n]$. We may or may
not allow that sources and targets overlap to define optimality.
\end{definition}

Lempel-Ziv is an ordered parse with respect to the order 
$T[s_i\dd ] \prec T[t_i\dd ]$ defined as $s_i < t_i$. The parses that respect
this order are called left-to-right parses. 
%Note that, in such an 
%order, any monotone parse is ordered, because $s_i < t_i$ implies 
%$s_i+j < t_i+j$ for all $j$. 
As we have seen, then, Lempel-Ziv is 
ordered-optimal, either with or without overlaps \cite{LZ76,SS82}.
Further, the methods that obtain those optimal parses \cite{RPE81,CIKRW12} 
are {\em greedy}, under the following definition.

\begin{definition} \label{def:greedy}
A method to obtain an ordered parse of $T[1\dd n]$ is 
{\em greedy} if it proceeds left to right on $T$ producing one phrase at each 
step, and such phrase is the longest possible one that starts at that position 
and has a smaller source in $T$ under the order $\preceq$. If the longest possible
phrase is of length $0$ or $1$, the parse may use an explicit symbol.
\end{definition}

Greedy methods are attractive on ordered parses because they produce the 
ordered-optimal parse and can be computed in polynomial time.

\begin{theorem}\label{thm:greedy}
Every greedy parse is ordered-optimal.
\end{theorem}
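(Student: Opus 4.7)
The plan is the classical ``stays-ahead'' induction used for the optimality proof of Lempel--Ziv, lifted from the text-order to an arbitrary total order $\preceq$ on suffixes. Fix an arbitrary ordered parse $P = P_1 \cdots P_{u'}$ of $T$ respecting $\preceq$, and let $G = G_1 \cdots G_u$ denote the greedy parse. Write $t_k$ for the last text position covered by $G_1 \cdots G_k$, and $t'_k$ for the last position covered by $P_1 \cdots P_k$. I would prove by induction on $k$ the invariant $t_k \ge t'_k$. The base case $k=0$ is immediate, and plugging $k = u'$ into the invariant gives $t_{u'} \ge t'_{u'} = n$, so the first $u'$ greedy phrases already cover $T$ and hence $u \le u'$.

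The heart of the argument is the inductive step. Suppose $t_k \ge t'_k$, and let $P_{k+1}$ cover $T[t'_k+1 \dd e]$. If $P_{k+1}$ is an explicit symbol, or more generally $e \le t_k$, then $t_{k+1} \ge t_k + 1 \ge e$ trivially, since $G_{k+1}$ spans at least one position. Otherwise $e > t_k$ and $P_{k+1}$ has a source $T[s \dd s + (e - t'_k - 1)]$ with $T[s+j \dd] \prec T[t'_k+1+j \dd]$ for every $0 \le j \le e - t'_k - 1$. I would exhibit a valid source for a phrase at position $t_k+1$ of length $e - t_k$ by taking the ``shifted suffix'' $T[s'' \dd s'' + (e - t_k - 1)]$ with $s'' = s + (t_k - t'_k)$. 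It matches $T[t_k+1 \dd e]$ symbol-by-symbol because its parent is the corresponding suffix of the source of $P_{k+1}$. The $\preceq$-constraint $T[s''+i \dd] \prec T[t_k+1+i \dd]$ for $0 \le i \le e - t_k - 1$ is exactly the instance of $P_{k+1}$'s constraint at $j = (t_k - t'_k) + i$, so it is inherited for free. The condition $s'' \ne t_k+1$ follows from $s \ne t'_k+1$ since $s'' - (t_k+1) = s - (t'_k+1)$, and if the notion of optimality in use forbids source--target overlap, the shifted pair still avoids overlap by routine interval arithmetic. Hence the greedy choice at $t_k+1$ produces a phrase $G_{k+1}$ of length at least $e - t_k$, yielding $t_{k+1} \ge e$.

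The only real subtlety is this inductive step, and even there the ``shifted source'' construction is natural because the definition of ordered parse constrains each position of a phrase independently under $\preceq$. The argument therefore goes through uniformly for every total order $\preceq$ on suffixes, for both the overlap-allowed and overlap-forbidden variants of Definition~\ref{def:optimal-parse}.
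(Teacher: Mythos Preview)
Your proof is correct and follows essentially the same approach as the paper's: both rely on the observation that a suffix of a valid source is itself a valid source (since the $\preceq$-constraints in Definition~\ref{def:ordered} are imposed per position), so the greedy phrase at any position can match or exceed any competing phrase that extends past it. The paper phrases this by contradiction---assume a shorter parse, locate the first index where it overtakes greedy, and exhibit the shifted source as a longer-than-greedy candidate there---whereas you phrase it as a direct stays-ahead induction; the underlying shifted-source construction is identical.
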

\begin{proof}
Let $B_1,\ldots,B_u$ be the result of the greedy parsing of $T$ under order 
$\preceq$. Since the first phrase always starts at position 1, if there is 
another ordered parse $B_i'=T[t_i'\dd t_i'+ \ell_i'-1]$ for $1 \le i \le u'$ 
and $u'<u$, then there must be a first phrase where $t_{i+1}'>t_{i+1}$. Since 
it is the first, it must hold that $t_i' \le t_i < t_{i+1} < t'_{i+1}$. Let us 
call $\delta = t_i - t_i' < \ell_i' = t_{i+1}'-t_i'$. Therefore, there is a 
source $T[s_i'\dd s_i'+\ell'_i-1]=T[t_i'\dd t_i'+\ell'_i-1]$ such that 
$T[s_i'+j\dd ] \prec T[t_i'+j\dd ]$ for all $0 \le j < \ell'_i$. Then it also
holds that $T[s_i'+\delta\dd s_i'+\ell'_i-1] = T[t_i\dd t_{i+1}'-1]$ and that 
$T[s_i'+\delta+j\dd ] \prec T[t_i+j\dd ]$ for all $0 \le j < t_{i+1}'-t_i$. 
Therefore, there exists a suffix $T[s_i'+\delta\dd ]$ that shares with 
$T[t_i\dd ]$ a prefix of length $t_{i+1}'-t_i > t_{i+1}-t_i = \ell_i$ and it
can be used under order $\preceq$. This is impossible because our parsing is
greedy and it did not choose that suffix when producing the phrase $T[t_i\dd]$.
\end{proof}

\begin{theorem} \label{thm:ncube}
The greedy parse of any ordered parse can be obtained in $O(n^3)$ evaluations 
of $\prec$.
\end{theorem}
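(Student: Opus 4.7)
The plan is to prove the bound by spelling out a brute-force implementation of the greedy rule of Definition~\ref{def:greedy}. Throughout, we treat $T$ itself as given (so reading individual characters and walking along the text are free) and count only invocations of the oracle that decides $\prec$ on suffixes of $T$. The algorithm scans $t$ from $1$ to $n$; at each phrase-start position $t$ it computes
\[ \ell_t = \max_{s \in [1\dd n]\setminus\{t\}} \ell_{s,t}, \]
where $\ell_{s,t}$ is the largest $\ell \ge 0$ such that $T[s\dd s+\ell-1] = T[t\dd t+\ell-1]$ and $T[s+j\dd] \prec T[t+j\dd]$ for all $0 \le j < \ell$. It then emits the phrase $T[t\dd t+\ell_t-1]$, or an explicit symbol when $\ell_t \le 1$, advances $t$ by $\max(\ell_t, 1)$, and repeats until $t > n$. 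By Theorem~\ref{thm:greedy} the resulting parse is ordered-optimal.

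To compute a single $\ell_{s,t}$, we walk $j = 0, 1, 2, \ldots$ and at each step first test the character equality $T[s+j] = T[t+j]$ via direct access to $T$ (free), and then query the oracle for $T[s+j\dd] \prec T[t+j\dd]$; we halt at the first $j$ where either test fails, returning $\ell_{s,t} = j$. This spends at most $\ell_{s,t} + 1 \le n + 1$ oracle evaluations. For a fixed $t$, running the loop over every $s \neq t$ therefore costs $\sum_{s \neq t}(\ell_{s,t}+1) = O(n^2)$ evaluations, and since the scan produces at most $n$ phrase starts, the global total is $O(n^3)$ evaluations of $\prec$.

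The only step where one has to be a little careful is separating the two sides of the comparison model: character look-ups in the known string $T$ are free, while the suffix comparisons under the opaque order are the ones being counted. Once that is in place, correctness is exactly Theorem~\ref{thm:greedy} applied to the algorithm above, and the $O(n^3)$ bound follows from a routine triple sum. No further ideas are required; in particular, we do not need a suffix-array preprocessing step, although sorting the suffixes with $O(n\log n)$ oracle calls would give a potentially faster variant at the cost of additional bookkeeping.
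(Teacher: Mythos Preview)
Your proof is correct and follows essentially the same brute-force approach as the paper: scan phrase starts left to right, and for each start try every candidate source position, extending one step at a time until the order condition (or a character mismatch or boundary) fails. Your write-up is in fact slightly more careful than the paper's, since you explicitly separate the free character-equality checks from the counted $\prec$ oracle calls; otherwise the arguments are the same.
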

\begin{proof}
We obtain the phrase lengths $\ell_i$ left to right, so that their starting
points are $s_1 = 1$ and $s_{i+1} = s_i + \ell_i$. To find the length $\ell_i$ 
of each new phrase $T[s_i \dd s_i+\ell_i-1]$, we compare the suffix $T[s_i \dd]$
with every other suffix $T[p \dd]$ symbol by symbol, until we find the smallest
$j_p \ge 0$ such that $T[p + j_p] \succ T[s_i + j_p]$ or $p+j_p>n$ or 
$s_i+j_p>n$. We then have $\ell_i = \max_{p \not= s_i} j_p$. If $j_i=0$ we
create an explicit symbol in the parse.
%We first compute the arrays $\SA$, $\ISA$ and $\LCP$, all in time $O(n)$. 
%Then, we create the parse phrase by phrase. To produce the next phrase 
%$T[p\dd]$, we find $i = \ISA[p]$ and traverse the lexicographic neighbors, 
%which will share shorter prefixes with $T[p\dd]$ as we get farther from $i$.
%Precisely, to find the
%monotone-optimal parse, we set $\ell \leftarrow +\infty$, and then repeatedly 
%perform $\ell \leftarrow \min(\ell,\LCP[i])$ and $i \leftarrow i-1$, until 
%it holds $T[\SA[i]\dd] \prec T[p\dd]$. We carry out a similar process to the
%right of $i$, obtaining values $\ell^-$ to the left and $\ell^+$ to the right. 
%If $\ell^-$ or $\ell^+$
%are positive, then the next phrase is $T[p\dd p+\max(\ell^-,\ell^+)-1]$; 
%otherwise (or if we find no suitable suffix), the next phrase is the explicit 
%symbol $T[p]$. This requires $O(n^2)$ evaluations of $\prec$ overall. 
%
%Finding the ordered-optimal parse, instead, requires that we start with
%$\ell,\ell^- \leftarrow +\infty$ and, after computing $\ell$ we also compute 
%$\ell' \leftarrow \min \{0 < j \le \ell, T[\SA[i]+j\dd] \succ T[p+j\dd]\}$.
%Then, we set $\ell^- \leftarrow \min(\ell^-,\ell'-1)$. Further, we cannot stop 
%checking suffixes to the left (or right) until $\ell \le \ell^-$. This yields to
%$O(n^3)$ evaluations of $\prec$ overall. 
\end{proof}

Of course, particular greedy parses, like Lempel-Ziv, can be obtained faster,
in this case in time $O(n)$ \cite{RPE81,CIKRW12}.
%
%Optimal monotone parses, instead, which are stronger, are not necessarily
%greedy. Consider, for example, the order $\preceq$ where all the even positions are smaller than the odd positions, and internally the even positions are sorted in descending order and the odd positions in ascending order. Then, a greedy parse of $T = a b c d e b c d e b c a b$ has size 7 ($ab|c|debc|d|ebc|a|b$), whereas the monotone-optimal parse is of size 6 ($a|bcdebc|d|ebc|a|b$). 
Interestingly, the fact that ordered-optimal parses are computed easily implies that we cannot
efficiently find the order $\preceq$ that produces the smallest ordered parse.

\begin{lemma}
Finding the order $\preceq$ that produces the smallest ordered parse on $T$
is NP-hard.
\end{lemma}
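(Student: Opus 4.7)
The plan is a polynomial-time reduction from the problem of computing $b$, which is NP-hard \cite{Gal82}, to the problem of finding the optimal order $\preceq$. The key leverage is that the two structural lemmas relating ordered parses and bidirectional schemes (Lemmas~\ref{lem:ordered-to-b} and \ref{lem:b-to-ordered}) together with the greedy optimality result (Theorem~\ref{thm:greedy}) and its polynomial-time realization (Theorem~\ref{thm:ncube}) imply that, once we know a good order, we can turn it into a small bidirectional scheme efficiently.

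More concretely, assume for contradiction that there is a polynomial-time algorithm $\mathcal{A}$ that, given $T[1\dd n]$, outputs a total order $\preceq^\ast$ on the suffixes of $T$ minimizing the size of the smallest ordered parse under that order. Given an instance $T$ of the smallest bidirectional scheme problem, I would run $\mathcal{A}$ on $T$ to obtain $\preceq^\ast$, and then run the greedy parsing procedure of Theorem~\ref{thm:ncube} under $\preceq^\ast$, which takes $O(n^3)$ comparisons under $\prec^\ast$ (each of which is polynomial as $\preceq^\ast$ is represented by the permutation it induces on $[1\dd n]$). Let $u^\ast$ be the number of phrases produced. By Theorem~\ref{thm:greedy}, $u^\ast$ is the size of the smallest ordered parse under $\preceq^\ast$.

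Now I would argue $u^\ast = b$. For the upper bound $u^\ast \le b$, consider a smallest bidirectional scheme of $T$, of size $b$. By Lemma~\ref{lem:b-to-ordered}, it is an ordered parse under some order $\preceq_b$, so the smallest ordered parse under $\preceq_b$ has size at most $b$; by the optimality of $\preceq^\ast$, $u^\ast \le b$. For the lower bound $u^\ast \ge b$, Lemma~\ref{lem:ordered-to-b} guarantees that the greedy parse under $\preceq^\ast$ is itself a bidirectional scheme, so it cannot have fewer than $b$ phrases. Hence we have computed $b$ in polynomial time, contradicting the NP-hardness of the smallest bidirectional scheme problem \cite{Gal82}.

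The step that needs the most care is checking that no hidden polynomial-size blowup is introduced: the oracle output $\preceq^\ast$ is a permutation of $[1\dd n]$ of size $O(n \log n)$, and the greedy algorithm only needs to compare suffixes under this explicit permutation, so each $\prec^\ast$-comparison is $O(1)$ (or $O(\log n)$) and the overall reduction is polynomial. A minor caveat is whether Definition~\ref{def:optimal-parse} allows source/target overlaps; the reduction works in the overlapping setting (matching bidirectional schemes), and an analogous argument using $z_{no}$-style non-overlapping schemes handles the non-overlapping case if desired.
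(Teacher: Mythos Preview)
Your proposal is correct and follows essentially the same route as the paper: use Lemma~\ref{lem:b-to-ordered} to ensure some order attains $b$, then Theorems~\ref{thm:greedy} and~\ref{thm:ncube} to extract the optimal parse from a given order in polynomial time, concluding via the NP-hardness of computing $b$~\cite{Gal82}. You are simply more explicit than the paper in spelling out both inequalities $u^\ast \le b$ and $u^\ast \ge b$ (the latter via Lemma~\ref{lem:ordered-to-b}) and in checking that the reduction stays polynomial.
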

\begin{proof}
One of those orders $\preceq$ yields the smallest bidirectional scheme,
by Lemma~\ref{lem:b-to-ordered}. Once we have the best order $\preceq$, we
find the parse itself greedily in time $O(n^3)$, by Theorems~\ref{thm:greedy}
and \ref{thm:ncube}.
Thus we obtain the smallest bidirectional scheme, which is NP-hard to find
\cite{Gal82}.
\end{proof}

On the other hand, we now show that, under certain favorable kinds of orders $\preceq$, 
the size of the ordered-optimal parses is upper bounded by the size 
of the smallest grammar. 
In particular, ordered-optimal parses that let sources
and targets overlap are of size $O(b\log(n/b))$.

\begin{definition}
A total order $\preceq$ on text suffixes is {\em extensible} if $T[s\dd] \prec T[t\dd]$ and
$T[s] = T[t]$ implies that $T[s+1\dd] \prec T[t+1\dd]$.
\end{definition}

For example, the order of left-to-right parses, $T[s\dd] \prec T[t\dd]$ iff $s<t$, is extensible.

\begin{theorem} \label{thm:ordered-g}
Any ordered-optimal parse of $T$, for any extensible order $\preceq$, produces
$u \le g+1$ phrases. Thus, $u\log_2 n \le nH_k + o(n\log\sigma)$ for any
$k=o(\log_\sigma n)$, $u = O(n/\log_\sigma n)$, and there are string families
where $r = \Omega(u\log n)$.
\end{theorem}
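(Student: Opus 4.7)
The plan is to mimic the proof of Theorem~\ref{thm:grlz} ($z\le g_{rl}+1$) but replacing ``leftmost occurrence'' by ``$\preceq$-smallest occurrence'' throughout, and then to invoke Theorem~\ref{thm:greedy}. Concretely, I take a smallest SLP generating $T$, of size $g$, and consider its parse tree. For each nonterminal $X$, let $P_X\subseteq[1\dd n]$ be the set of text positions at which the expansion of some parse-tree node labeled $X$ begins. Since every nonterminal is reachable from the initial symbol, $P_X$ is non-empty, and since $\preceq$ is a total order on suffixes, $P_X$ has a unique $\preceq$-smallest element $p_X^\ast$. I then build a variant of the grammar tree in which, for each $X$, the parse-tree node whose expansion starts at $p_X^\ast$ is kept as the single internal node labeled $X$ and every other node labeled $X$ is pruned to a leaf. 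By the same counting as in Section~\ref{sec:gram}, this variant still has at most $g+1$ leaves.

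Reading the leaves of this variant grammar tree from left to right yields a partition of $T$ into at most $g+1$ phrases: each terminal leaf becomes an explicit symbol, and each nonterminal leaf labeled $X$ appearing at text position $t$ becomes a phrase $T[t\dd t+\ell-1]$ of length $\ell=|exp(X)|$ with source $T[p_X^\ast\dd p_X^\ast+\ell-1]$. To check that this is a valid ordered parse under $\preceq$, I argue by induction on $j$ using extensibility: the relation $T[p_X^\ast\dd]\prec T[t\dd]$ holds by the choice of $p_X^\ast$, and since $T[p_X^\ast\dd p_X^\ast+\ell-1]=T[t\dd t+\ell-1]=exp(X)$ we have $T[p_X^\ast+j]=T[t+j]$ for every $0\le j<\ell$; extensibility applied at each step then gives $T[p_X^\ast+j\dd]\prec T[t+j\dd]$ for all such $j$, which is exactly the condition in Definition~\ref{def:ordered}.

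The parse I have exhibited has at most $g+1$ phrases and respects $\preceq$, so by Theorem~\ref{thm:greedy} the ordered-optimal parse satisfies $u\le g+1$. The three stated consequences now follow mechanically. The entropy inequality and $u=O(n/\log_\sigma n)$ come from Lemma~\ref{lem:entropy} combined with $u\le g+1$, since an additive $\log_2 n$ is absorbed into the $o(n\log\sigma)$ term. For the separation, Theorem~\ref{thm:lowrg} exhibits a binary family (de Bruijn sequences) with $r=\Omega(g\log n)$; since $u\le g+1=O(g)$ on that family, we obtain $r=\Omega(u\log n)$.

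The only nontrivial step is the inductive use of extensibility, and this is precisely what the hypothesis on $\preceq$ is tailored to enable; there is no serious technical obstacle, although the argument also makes clear why extensibility cannot be dropped: without it, a $\preceq$-smaller starting position for $exp(X)$ need not remain $\preceq$-smaller position by position, and the copy would fail to satisfy Definition~\ref{def:ordered}.
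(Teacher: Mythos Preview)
Your proposal is correct and follows essentially the same approach as the paper: both build a variant grammar tree where the unique internal node for each nonterminal is the one whose expansion starts at the $\preceq$-smallest suffix, turn the at most $g+1$ leaves into phrases pointing to that chosen occurrence, and use extensibility to propagate $T[p_X^\ast\dd]\prec T[t\dd]$ position by position; the consequences are then drawn from Lemma~\ref{lem:entropy} and Theorem~\ref{thm:lowrg} exactly as you do. One minor remark: the conclusion $u\le g+1$ follows directly from the definition of ordered-optimal (Def.~\ref{def:optimal-parse}) once you exhibit an ordered parse of that size, so invoking Theorem~\ref{thm:greedy} is not strictly needed there.
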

\begin{proof}
It suffices to show how to build an ordered parse of size at most $g+1$.
Analogously to the proof of Theorem~\ref{thm:grlz}, consider a variant of the grammar
tree of $T$ where the only internal node labeled $X$ and expanding to
$T[x_i\dd z_i]$ is the one with the smallest suffix
$T[x_i\dd ]$ under order $\preceq$. This tree has up to $g+1$ leaves, just like the original grammar tree.
We then define an ordered parse of $T$ by converting every terminal leaf
to an explicit symbol, and every leaf covering $T[x_i'\dd z_i']$, labeled by nonterminal $X$,
to a phrase that points to the area $T[x_i\dd z_i]$ corresponding to the only internal
node labeled $X$. Such a parse is of size $u \le g+1$ and is ordered because the
order is extensible: since $T[x_i'\dd z_i'] = T[x_i\dd z_i]$ and $T[x_i\dd] \prec T[x_i'\dd]$,
it follows that $T[x_i+j\dd] \prec T[x_i'+j\dd]$ for all $0\le j\le z_i-x_i$.

Since this is an ordered parse, the ordered-optimal parse is also of size $u \le g+1$.
The other results are immediate consequences of
Lemma~\ref{lem:entropy} and Theorem~\ref{thm:lowrg}.
\end{proof}

\begin{theorem} \label{thm:ordered-grl}
Any ordered-optimal
parse of $T$ that allows sources and targets overlap, under any extensible order $\preceq$,
produces $u \le g_{rl}+1$ phrases. Thus it holds that $u = O(b\log(n/b))$.
\end{theorem}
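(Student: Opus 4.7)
The plan is to adapt the proof of Theorem~\ref{thm:ordered-g} to the run-length setting: it suffices to exhibit, from any RLSLP of size $g_{rl}$ generating $T$, one ordered parse (with overlaps allowed) of size at most $g_{rl}+1$, since then the ordered-optimal $u$ is no larger. I would start from the modified parse tree of Theorem~\ref{thm:grlz}, in which every binary rule $A \rightarrow BC$ yields an internal node with two children and every run-length rule $X \rightarrow Y^t$ yields an internal node with two children, one a $Y$-subtree and the other a leaf representing $Y^{t-1}$. As in Theorem~\ref{thm:ordered-g}, I would then prune this tree to a grammar tree by keeping, for each nonterminal $Z$, only the internal node whose expansion starts at the $\preceq$-smallest position $z^*$, converting every other internal occurrence of $Z$ into a leaf. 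The resulting grammar tree has at most $g_{rl}+1$ leaves, and each leaf will contribute exactly one phrase of the ordered parse.

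Two kinds of leaves are handled exactly as in Theorem~\ref{thm:ordered-g}: every terminal leaf becomes an explicit symbol, and every non-canonical leaf labeled by a nonterminal $Z$ at some position $q$ becomes a phrase pointing to the canonical area $T[z^*\dd z^*+|exp(Z)|-1]$; this is a valid ordered phrase because $T[z^*\dd]\prec T[q\dd]$ and both areas spell $exp(Z)$, so extensibility yields $T[z^*+j\dd]\prec T[q+j\dd]$ for all $0\le j<|exp(Z)|$. The genuinely new case is the $Y^{t-1}$ leaf created at the canonical $X$ by a run-length rule $X\rightarrow Y^t$. Let $p_1<p_2<\cdots<p_t$ be the positions in $T$ of the $t$ consecutive copies of $Y$ inside the canonical $X$. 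The suffixes $T[p_1\dd]$ and $T[p_2\dd]$ are distinct (they have different lengths), so totality of $\preceq$ gives exactly one of $T[p_1\dd]\prec T[p_2\dd]$ or $T[p_2\dd]\prec T[p_1\dd]$. In the first case I would pick the expansion $X\rightarrow Y\cdot Y^{t-1}$ and point the leaf at target $T[p_2\dd p_t+|exp(Y)|-1]$ to the overlapping source $T[p_1\dd p_{t-1}+|exp(Y)|-1]$, exactly as in Theorem~\ref{thm:grlz}; in the second case I would mirror the construction with $X\rightarrow Y^{t-1}\cdot Y$.

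The main point requiring verification, and the only place where the argument goes beyond Theorem~\ref{thm:ordered-g}, is that the $Y^{t-1}$ phrase is ordered not just at its first offset but at every one of its $(t-1)|exp(Y)|$ offsets. This is where extensibility does its work: the source and target regions agree character-by-character along their entire length (both spell $exp(Y)^{t-1}$), so starting from $T[p_1\dd]\prec T[p_2\dd]$ (or its mirror) I can apply extensibility $(t-1)|exp(Y)|$ times to propagate the strict inequality to every offset inside the phrase. Once this is established, every leaf of the grammar tree yields a valid ordered phrase and there are at most $g_{rl}+1$ of them, giving $u\le g_{rl}+1$. The second assertion $u=O(b\log(n/b))$ then follows immediately by combining with Theorem~\ref{thm:rlcfg}.
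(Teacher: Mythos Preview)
Your proposal is correct and follows essentially the same approach as the paper's proof: build the grammar tree using the $\preceq$-minimal occurrence of each nonterminal as in Theorem~\ref{thm:ordered-g}, and at the canonical occurrence of each run-length rule $X\rightarrow Y^t$ choose between the two expansions $Y\cdot Y^{t-1}$ and $Y^{t-1}\cdot Y$ according to whether $T[p_1\dd]\prec T[p_2\dd]$ or the reverse, then invoke extensibility along the matching characters of source and target to validate the overlapping phrase. The paper states the extensibility step for the overlapping phrase in one sentence; your explicit verification that source and target spell the same string $exp(Y)^{t-1}$, so that extensibility can be iterated across all offsets, is exactly the content of that sentence.
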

\begin{proof}
We extend the proof of Theorem~\ref{thm:ordered-g} so as to consider the rules
$X \rightarrow Y^t$. These can be expanded either to
$X \rightarrow Y\cdot Y^{t-1}$ or to $X \rightarrow Y^{t-1} \cdot Y$.
In both cases, the child $Y$ is handled as usual (i.e., pruned if its suffix
is not the smallest one labeled $Y$, or expanded otherwise). If we choose
$X \rightarrow Y\cdot Y^{t-1}$, let $Y$ expand to $T[x\dd y-1]$ and $Y^{t-1}$
expand to $T[y\dd z]$. We then define $T[y\dd z]$ as the target of the source
$T[x\dd x+z-y]$. If we instead choose $X \rightarrow Y^{t-1} \cdot Y$, then
we define $T[x\dd x+z-y]$ as the target of the source $T[y\dd z]$. In both
cases, the target overlaps the source if $t>2$.

Note that one of those two cases must copy a source to a larger target, 
depending on whether $T[x\dd] \prec T[y\dd]$ or $T[y\dd] \prec T[x\dd]$. 
Further, the argument used in the proof of Theorem~\ref{thm:ordered-g} to show that the 
copy is valid when the order is extensible, is also valid when source and target
overlap. Thus, we
obtain an ordered parse. Since we have $g_{rl}+1$ leaves in the grammar
tree, the ordered parse is of size $u \le g_{rl}+1$, and therefore the optimal 
one is also of size $u \le g_{rl}+1$. By Theorem~\ref{thm:rlcfg}, we also have 
$u = O(b\log(n/b))$.
\end{proof}

Finally, we show that greedy parsings can be 
computed much faster on extensible orders.

\begin{theorem} \label{thm:nlogn}
Any ordered parse, under any extensible order $\preceq$, can be computed greedily in $O(n)$ expected time or $O(n\log\log\sigma)$ worst-case time, and $O(n)$ space, given an array $O[1\dd n]$ with the suffixes of $T$ sorted by $\preceq$.
\end{theorem}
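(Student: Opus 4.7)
The plan is to reduce the task to a handful of classical operations on the suffix array of $T$ and on the inverse of $O$. I would build $\SA$, $\ISA$ and $\LCP$ of $T$, together with an $O(n)$-time/$O(1)$-query RMQ structure on $\LCP$, giving $O(1)$-time $\LCE$ queries between any two suffixes of $T$. From $O$ I compute $O^{-1}$ by array inversion, define the permutation $R[i]=O^{-1}[\SA[i]]$, and compute the $\PSV$ and $\NSV$ arrays of $R$ in $O(n)$ time using the standard stack algorithm. For each candidate phrase start $t$, the length $\ell(t)$ of the longest phrase available under $\preceq$ is then the maximum of the two $\LCE$ values between $T[t\dd]$ and the suffixes designated by $\PSV[\ISA[t]]$ and $\NSV[\ISA[t]]$ (or $0$ if both are undefined, in which case an explicit symbol is emitted); the greedy parser scans $T$ left to right and at each step outputs a phrase and jumps $\max(1,\ell(t))$ positions ahead.

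The correctness argument has two ingredients. Extensibility of $\preceq$ ensures that, once $T[s\dd]\prec T[t\dd]$ and $T[s\dd s+\ell-1]=T[t\dd t+\ell-1]$ are given, $T[s+j\dd]\prec T[t+j\dd]$ follows for every $0\le j<\ell$ by induction on $j$, so only the inequality at the head of the match needs to be checked when forming a phrase, and the resulting parse respects Def.~\ref{def:ordered}. On the other hand, among all $s$ whose $\preceq$-rank is strictly smaller than that of $t$, the $\LCE$ with $T[t\dd]$ is maximized at the $s$ whose position in $\SA$ is closest to $\ISA[t]$, because $\LCE$ equals a range minimum on $\LCP$ and widening the range cannot increase it; by definition, those two closest candidates are exactly the positions selected by $\PSV$ and $\NSV$ on $R$. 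Combined with Theorem~\ref{thm:greedy}, this shows that the parse produced is ordered-optimal.

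The main difficulty, and the source of the two bounds, lies in the alphabet handling inside the construction of $\SA$, $\ISA$ and $\LCP$; every other step runs in deterministic $O(n)$ time and $O(n)$ space. For an integer alphabet these arrays are built in linear time~\cite{KSPP05,KA05,KSB06,KLAAP01}. For a general alphabet one first renumbers the $\sigma$ distinct characters into $[1\dd\sigma]$, which can be done in $O(n)$ expected time via hashing or in $O(n\log\log\sigma)$ worst-case time using a deterministic dictionary on the character set; feeding the renumbered text into the linear-alphabet pipeline then delivers the two claimed bounds.
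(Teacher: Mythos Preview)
Your proof is correct and in fact cleaner than the paper's. The paper builds the suffix tree and, for each phrase start $p$, walks down from the root symbol by symbol, using an RMQ over the array $K[k]=IO[\SA[k]]$ to test, at every step, whether the current suffix-array interval still contains a suffix of smaller $\preceq$-rank; the expected-versus-worst-case split in the stated bounds comes from how suffix-tree children are navigated (perfect hashing versus a predecessor structure). Your approach sidesteps the suffix tree entirely: by precomputing $\PSV$/$\NSV$ on $R[i]=O^{-1}[\SA[i]]$ and answering $\LCE$ queries via RMQ on $\LCP$, each greedy phrase is produced in $O(1)$ time after linear-time preprocessing. Under the paper's standing assumption that the alphabet is $[1\dd\sigma]$, this actually yields deterministic $O(n)$ time, which is stronger than what the theorem claims; your last paragraph attributing the $\log\log\sigma$ overhead to alphabet renaming is therefore unnecessary here (and mislocates the source of the paper's $\log\log\sigma$, which is the tree navigation, not the $\SA$/$\LCP$ construction). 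Your argument is essentially the classical $\PSV$/$\NSV$-based Lempel--Ziv factorization, generalized from text order to any extensible $\preceq$ by replacing positions with $\preceq$-ranks; this generalization is exactly what the paper's notion of extensibility is designed to enable, and your use of it to propagate $T[s\dd]\prec T[t\dd]$ along the matched prefix is correct.
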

\begin{proof}
We first compute the suffix array $\SA$ of $T$ in $O(n)$ time (recall Section~\ref{sec:bwt}),
and from it, the {\em suffix tree} of $T$ \cite{Wei73} can also be built in $O(n)$ time \cite{KLAAP01}.
The suffix tree is a compact trie storing all the suffixes of $T$, so that we can descend from the root
and, in time $O(m)$, find the interval $\SA[sp\dd ep]$ of all the suffixes starting with a given string of length $m$. 

We also create in $O(n)$ time the inverse permutation $IO[1\dd n]$ of $O[1\dd n]$, that is, $IO[p]$ is the rank of $T[p\dd]$ among all the other suffixes, in the order $\preceq$. With it, we build in $O(n)$ time a {\em range minimum query} data structure on the virtual array $K[k] = IO[\SA[k]]$, so that $RMQ(i,j) = \arg\min_{i\le k \le j} K[k]$ is computed in constant time \cite{FH11}. Therefore, if $\SA[sp\dd ep]$ is the suffix array interval of all the suffixes $T[p\dd]$ starting with $S$, then $RMQ(sp,ep)$ gives the
suffix starting with $S$ with the minimum rank in the order $\preceq$.

%Finally, we associate every suffix $T[p\dd]$ with the point $(x_p,y_p) = (\ISA[p],IO[p])$. With those $n$ points, we create in $O(n\log n/\log\log n)$ time a data structure to solve {\em orthogonal successor} queries \cite{YHW11}: given a query point $(x,y)$, it finds in $O(\log n/\log\log n)$ time the point $(x_p,y_p)$ with the smallest $x_p > x$ and any $y_p < y$. Analogously, an {\em orthogonal predecessor} query returns the point $(x_p,y_p)$ with the largest $x_p < x$ and any $y_p < y$. \textcolor{red}{Checking if an $O(\log\log n)$ time structure of Gabow et al. for 2-sided queries can be built fast.}

We now create the parse phrase by phrase. To produce the next phrase $T[p\dd]$, we enter the suffix tree from the root with the successive
symbols $T[p+j]$, for $j \ge 0$. At each step, the suffix tree gives us the range $\SA[sp_j,ep_j]$ of the suffixes of $T$ starting with $T[p\dd p+j]$.
We then find $K[RMQ(sp_j,ep_j)]$, which is the smallest rank of any occurrence of $T[p\dd p+j]$ in $T$. If this is less than $IO[p]$, then there is a smaller 
occurrence of $T[p\dd p+j]$ and we 
continue with the next value of $j$. The process stops when $K[RMQ(sp_j,ep_j)] = IO[p]$, that is, $T[p\dd p+j]$ is its smallest occurrence, so we cannot copy it from a smaller one. At this point, the new phrase is
$T[p\dd p+j-1]$ if $j>0$, or the explicit symbol $T[p]$ if $j=0$.

%query $(x,y) = (\ISA[p],IO[p])$ and find its orthogonal successor $(x',y')$ and its orthogonal predecessor $(x'',y'')$. Those are the lexicographically closest suffixes following and preceding
%$T[p\dd]$ that are also smaller than $T[p\dd]$
%in the order $\preceq$. Therefore, those are the
%valid suffixes sharing the longest possible prefixes
%with $T[p\dd]$. The length of the shared 
%prefixes is computed in constant time with
%$\ell' = lcp(\ISA[p],x')$ and
%$\ell'' = lcp(x'',\ISA[p])$. We set $\ell=\ell'$ and $x = x'$ if $\ell' \ge \ell''$, and $\ell=\ell''$ and $x=x''$ otherwise. The next phrase is then $T[p\dd p+\ell-1]$ and its source is
%$T[\SA[x]\dd\SA[x]+\ell-1]$. Note that, because $y_p < y$, we have that $T[\SA[x]\dd] \prec
%T[p\dd]$ and, because $T[\SA[x]\dd\SA[x]+\ell-1] = T[p\dd p+\ell-1]$ and the order is extensible,
%we also have $T[\SA[x]+j\dd] \prec T[p+j\dd]$ for all $0 \le j < \ell$. Thus the chosen source is valid.

%Therefore, each new phrase is obtained in time
%$O(\log n/\log\log n)$ and the total parsing time is
%$O(u\log n/\log\log n)$. The total time complexity is 
%dominated by the $O(n\log n/\log\log n)$ time to build the orthogonal successor/predecessor queries. 
Since we descend to a suffix tree child for every symbol of $T$, the total
traversal time is $O(n)$ as well. There is a caveat, however. To provide
constant-time traversal to children, the suffix tree must implement perfect
hashing on the children of each node, which can be built in constant
expected time per element. In this case, the whole parsing takes $O(n)$
expected time. Alternatively, each node can store its children with a predecessor data structure, so that each traversal to a child costs $O(\log\log\sigma)$ time, and the structure can be built in worst-case time $O(n\log\log\sigma)$ \cite[Sec.~A.1 \& A.2]{BN14}, which dominates the total worst-case time of the parsing. 
The total space used is $O(n)$ in both variants. If array $O[1\dd n]$ is not given, we can compute it with a classical sorting algorithm in $O(n\log n)$ evaluations of $\prec$.
\end{proof}

%!TEX root = paper.tex

\section{Lexicographic Parses}

In this section we study a particularly interesting ordered parse we call
lexicographic parse.

\begin{definition}
A \emph{lexicographic parse} of $T[1\dd n]$ is an ordered parse where 
$T[s_i\dd ] \prec T[t_i\dd ]$ iff the former suffix is smaller than the latter
in lexicographic order, or which is the same, if $\ISA[s_i] < \ISA[t_i]$.
\end{definition}

We first note that the order we use is extensible.

\begin{lemma}
The order $T[s\dd] \prec T[t\dd]$ iff the suffix $T[s\dd]$ lexicographically precedes $T[t\dd]$, is extensible.
\end{lemma}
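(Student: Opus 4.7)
The plan is to reduce the statement directly to the recursive definition of lexicographic order. Assume the hypotheses $T[s\dd] \prec T[t\dd]$ and $T[s] = T[t]$. First I would check that $s+1$ and $t+1$ are valid suffix starting positions. Since $\$$ occurs only at position $n$ in $T$, the equality $T[s]=T[t]$ either makes both $T[s]$ and $T[t]$ equal to $\$$ (forcing $s=t=n$, which contradicts $s\ne t$, implicit in $T[s\dd]\prec T[t\dd]$), or makes them both non-$\$$ characters, in which case $s,t<n$ and $T[s+1\dd]$, $T[t+1\dd]$ are well-defined nonempty suffixes.

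Next I would invoke the standard fact about lexicographic order: if two strings share the same first symbol, their relative lex order is determined by the relative lex order of the strings obtained by removing that first symbol. Applied to $T[s\dd]$ and $T[t\dd]$ with $T[s]=T[t]$, the hypothesis $T[s\dd] \prec T[t\dd]$ therefore implies $T[s+1\dd] \preceq T[t+1\dd]$ in lex order.

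To upgrade $\preceq$ to $\prec$, I would use the fact (noted in Section~\ref{sec:basics}) that the $\$$-termination of $T$ makes every lexicographic comparison of distinct suffixes strict. Since $s+1 \ne t+1$, the suffixes $T[s+1\dd]$ and $T[t+1\dd]$ are distinct as strings (they have different lengths and $\$$ appears only at the very end), so in particular $T[s+1\dd] \ne T[t+1\dd]$, giving $T[s+1\dd] \prec T[t+1\dd]$.

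There is no real obstacle here; the only thing to be slightly careful about is the edge case $s=n$ or $t=n$, which is ruled out by the uniqueness of the $\$$ terminator in $T$. The proof is essentially a one-line consequence of the definition of lexicographic comparison.
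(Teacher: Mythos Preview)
Your proposal is correct and follows essentially the same approach as the paper: both reduce the claim directly to the definition of lexicographic order, noting that if two strings agree on their first symbol then their relative lex order is determined by the remainders. Your version is simply more explicit about the edge cases (ruling out $s=n$ or $t=n$ via the uniqueness of \$, and upgrading $\preceq$ to $\prec$ via distinctness of suffixes), whereas the paper compresses all of this into a single sentence.
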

\begin{proof}
If $T[s\dd]$ lexicographically precedes $T[t\dd]$
and $T[s]=T[t]$, then by definition of lexicographic order, $T[s+1\dd]$ lexicographically precedes $T[t+1\dd]$.
\end{proof}

By Lemma~\ref{lem:ordered-to-b}, then, any  lexicographic parse is a bidirectional
scheme. 
One example of a lexicographic parse is the bidirectional scheme based on the 
BWT we introduced in Section~\ref{sec:lower}.

\begin{lemma}\label{lemma:RLBWT parse}
The bidirectional scheme induced by the BWT in Def.~\ref{def:bwtparse}
is a lexicographic parse of size $2r$.
\end{lemma}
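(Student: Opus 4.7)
The plan is to verify two things: first that the scheme has size $2r$, and second that each source suffix lexicographically precedes the corresponding target suffix. The size is immediate from Def.~\ref{def:bwtparse}: there are exactly $r$ copy phrases $T[t_i\dd t_{i+1}-2]$ together with $r$ explicit symbols $T[t_{i+1}-1]$, one per run, totalling $2r$.

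For the lexicographic property, I would start from the defining identity $\phi(p) = \SA[\ISA[p]-1]$ when $\ISA[p] > 1$. By construction of the suffix array, this is exactly the starting position of the suffix that lexicographically precedes $T[p\dd\,]$, so $T[\phi(p)\dd] \prec T[p\dd]$ automatically holds in the lexicographic order for every such $p$. In particular, source and target differ because $\phi(p)\ne p$ (both are distinct entries of the permutation $\SA$).

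Next, I must lift this one-position statement to whole phrases. For a copy phrase $T[t_i\dd t_{i+1}-2]$ and any $0 \le j < \ell_i$, Lemma~\ref{lem:phi} repeatedly applied gives $\phi(t_i+j) = \phi(t_i)+j$, since every consecutive pair $[t_i+j-1\dd t_i+j]$ lies within the phrase. Hence the source position corresponding to the target $t_i+j$ in Def.~\ref{def:bwtparse} is precisely $\phi(t_i+j)$, and therefore $T[\phi(t_i)+j\dd] = T[\phi(t_i+j)\dd] \prec T[t_i+j\dd]$ as required by Def.~\ref{def:ordered} specialized to the lexicographic order.

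The only potentially awkward case is $\ISA[p]=1$, where $\phi$ is defined by the wrap-around $\phi(p)=\SA[n]$; but this only happens for $p=n$, which is the final explicit symbol in the parse (since $t_{r+1}=n+1$), so it never falls inside a copy phrase. Thus no obstacle remains: the scheme meets Def.~\ref{def:ordered} under the lexicographic order, and it is therefore a lexicographic parse of size $2r$. The whole argument is essentially bookkeeping on top of Lemma~\ref{lem:phi} and the definition of $\phi$, with no genuinely hard step.
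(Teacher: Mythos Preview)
Your proof is correct and follows essentially the same approach as the paper: both arguments hinge on the identity $s_i=\phi(t_i)$ together with $\ISA[\phi(p)]=\ISA[p]-1$. The paper's own proof is terser, checking only the starting position $j=0$ and pointing to Theorem~\ref{thm:r} for the rest, while you spell out explicitly via Lemma~\ref{lem:phi} that $\phi(t_i+j)=\phi(t_i)+j$ so the lexicographic inequality holds at every offset; this extra care is welcome but not a genuinely different route.
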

\begin{proof}
The definition uses function $f(p) = \phi(p) = \SA[\ISA[p]-1]$ to copy from
$T[\phi(t_i)\dd \phi(t_i)+\ell_i-1]$ to $T[t_i\dd t_i+\ell_i-1]$, where
$\ell_i = t_{i+1}-t_i-1$ (recall Theorem~\ref{thm:r}). Therefore it holds that
$\ISA[s_i] = \ISA[\phi(t_i)] = \ISA[t_i]-1 < \ISA[t_i]$.
\end{proof}

Another lexicographic parse is {\em lcpcomp} \cite{DFKLS17}. This algorithm 
uses a queue to find the largest entry in the $\LCP$ array (recall Section~\ref{sec:bwt}). This information 
is then used to define a new phrase of the factorization. $\LCP$ entries 
covered by the phrase are then removed from the queue, LCP values affected by the creation of the new phrase are decremented,  and the process is 
repeated until there are no text substrings that can be replaced with a pointer to lexicographically smaller positions. The output of lcpcomp is a series of source-length pairs interleaved with plain substrings (that cannot be replaced by pointers). 

\begin{lemma}\label{lemma:lcpcomp}
	The \emph{lcpcomp} factorization \cite{DFKLS17} is a lexicographic parse.
\end{lemma}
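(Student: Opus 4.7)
The plan is to unwind what lcpcomp emits, phrase by phrase, and check that in each pointer phrase the source suffix lexicographically precedes the target suffix. Every non-trivial phrase produced by lcpcomp is generated when the algorithm selects (the current value of) the largest remaining entry $\LCP[i]=\ell$. By definition of $\SA$ and $\LCP$, this means $T[\SA[i-1]\dd\SA[i-1]+\ell-1] = T[\SA[i]\dd\SA[i]+\ell-1]$, and moreover $\ISA[\SA[i-1]] = i-1 < i = \ISA[\SA[i]]$, so the suffix at $\SA[i-1]$ is lexicographically smaller than the suffix at $\SA[i]$.

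The first step in the proof would be to pin down that lcpcomp always installs the new pointer phrase at target $t = \SA[i]$ with source $s = \SA[i-1]$ (never the other way round): this is an invariant of the algorithm, designed precisely so that the source suffix is lexicographically smaller than the target suffix. From $\ISA[s]<\ISA[t]$ we get $T[s\dd]\prec T[t\dd]$ under the lexicographic order $\preceq$.

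The second step is to upgrade the lexicographic inequality on the initial positions to the inequality required at every offset inside the phrase. Because $T[s\dd s+\ell-1]=T[t\dd t+\ell-1]$, the extensibility of $\preceq$ (just established in the preceding lemma) implies, by a straightforward induction on $j$, that $T[s+j\dd]\prec T[t+j\dd]$ for all $0\le j<\ell$, which is exactly the condition in Definition~\ref{def:ordered}. Plain substrings that lcpcomp leaves unreplaced are emitted as explicit symbols, which satisfy the definition trivially.

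The only real obstacle is justifying the first step: one must verify that the LCP-decrement bookkeeping performed by lcpcomp (to avoid overlapping phrases and to keep the algorithm consistent after a pointer is placed) does not break the source-smaller-than-target invariant. This reduces to observing that the decrements are only a bookkeeping device restricting where future phrases may start or end; they do not alter the underlying lexicographic order of the suffixes of $T$. Consequently, every later pointer phrase still comes from a valid LCP pair $(\SA[i-1],\SA[i])$ in the original $T$, the source remains lexicographically smaller, and lcpcomp is a lexicographic parse.
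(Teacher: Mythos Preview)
Your proposal is correct and follows essentially the same approach as the paper: both identify that each lcpcomp pointer phrase has target $\SA[i]$ and source $\SA[i-1]$, so $\ISA[s]=i-1<i=\ISA[t]$, and both note that the LCP decrements only shorten the phrase length (to some $\LCP'[i]\le\LCP[i]$) without disturbing the source/target pairing or the underlying equality $T[\SA[i-1]\dd\SA[i-1]+\LCP'[i]-1]=T[\SA[i]\dd\SA[i]+\LCP'[i]-1]$. Your explicit appeal to extensibility to propagate $\prec$ to every offset is a detail the paper leaves implicit, but the argument is the same.
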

\begin{proof}
The property can be easily seen from step 2 of the algorithm~\cite[Sec. 3.2]{DFKLS17}: the authors report a phrase (i.e., source-length pair) $(\SA[i-1], \LCP'[i])$ expanding to text substring $T[\SA[i]\dd \SA[i]+\LCP'[i]-1]$. We write $\LCP'[i]$ because entries of the LCP array may be decremented in step 4, therefore $\LCP'[i]\leq \LCP[i]$ at any step of the algorithm for any $1\leq i\leq n$. This however preserves the two properties of lexicographic parsings: $T[\SA[i]\dd \SA[i]+\LCP'[i]-1] = T[\SA[i-1]\dd \SA[i-1]+\LCP'[i]-1]$ (phrases are equal to their sources) and, clearly, $i-1 < i$ (sources are lexicographically smaller than phrases).
\end{proof}

Since the lexicographic order is extensible, we can find the optimal lexicographic parse  greedily,
in $O(n\log\log \sigma)$ time, by Theorem~\ref{thm:nlogn}. We now show that, just as
Lempel-Ziv, it can be found in $O(n)$ time, in a surprisingly simple way.

\begin{definition} \label{def:lex}
The {\em lex-parse} of $T[1\dd n]$, with arrays $\SA$, $\ISA$, and $\LCP$,
is defined as a partition $T=L_1,\ldots, L_v$ such that 
$L_i=T[t_i\dd t_i+\ell_i-1]$,
satisfying $(1)$ $t_1=1$ and $t_{i+1} = t_i+\ell_i$, and $(2)$
$\ell_i = \LCP[\ISA[t_i]]$, with the exception that if $\ell_i=0$ we set
$\ell_i=1$ and make $L_i$ an explicit symbol. The non-explicit phrases
$T[t_i\dd t_i+\ell_i-1]$ are copied from $T[s_i\dd s_i+\ell_i-1]$, where
$s_i = \SA[\ISA[t_i]-1]$.
\end{definition}

\begin{example}
The lex-parse of our example string is 
$a|\underline{l}|a|\underline{b}|a|\underline{r}|a|\underline{l}|alabar|\underline{d}|\underline{a}|\underline{\$}$, where we underlined the explicit symbols.
\end{example}\medskip

Since $\ISA$ and $\LCP$ can be built in linear time, it is clear that the 
lex-parse of $T$ can be built in $O(n)$ time. Let us show that it is 
indeed a valid lexicographic parse.

\begin{lemma}
  The lex-parse is a lexicographic parse.
\end{lemma}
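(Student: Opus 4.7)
The plan is to verify directly that the construction in Definition~\ref{def:lex} satisfies the requirements of a lexicographic parse from Section~6. There are three things to check for every non-explicit phrase $L_i = T[t_i\dd t_i+\ell_i-1]$ with source $s_i = \SA[\ISA[t_i]-1]$: (a) the source and target spell the same string, (b) $s_i \neq t_i$, and (c) $T[s_i+j\dd] \prec T[t_i+j\dd]$ in lexicographic order for every $0 \le j < \ell_i$.

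First I would handle the degenerate case: if $\ISA[t_i]=1$, then $T[t_i\dd]$ is the lexicographically smallest suffix and $\LCP[\ISA[t_i]]=\LCP[1]=0$, so $\ell_i=0$ and the definition makes $L_i$ an explicit symbol, which is allowed. Otherwise $\ISA[t_i]-1 \ge 1$ and $s_i = \SA[\ISA[t_i]-1]$ is well defined, and $s_i\ne t_i$ since $\SA$ is a permutation. Property (a) is immediate from the definition of the $\LCP$ array: $\LCP[\ISA[t_i]]$ is precisely the length of the longest common prefix between $T[t_i\dd]$ and its lexicographic predecessor $T[s_i\dd]$, hence the first $\ell_i$ characters of the two suffixes agree.

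For (c) I would invoke extensibility of the lexicographic order, which was proved just before the lemma. The base case $j=0$ holds because $\ISA[s_i] = \ISA[t_i]-1 < \ISA[t_i]$ and the lex order on suffixes is exactly the order induced by $\ISA$, so $T[s_i\dd] \prec T[t_i\dd]$. For the inductive step, as long as $j+1 < \ell_i$ the symbols $T[s_i+j]$ and $T[t_i+j]$ coincide (they lie inside the common prefix guaranteed by (a)), so extensibility gives $T[s_i+j+1\dd] \prec T[t_i+j+1\dd]$; iterating up to $j = \ell_i-1$ yields the desired chain of inequalities.

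Finally, I would note that the phrases cover $T$ exactly once, since $t_1=1$, $t_{i+1}=t_i+\ell_i$, and $\ell_i \ge 1$ always, so the parsing terminates after a finite number of steps and induces a valid partition of $T[1\dd n]$. I do not expect any real obstacle here: the whole lemma is essentially a bookkeeping exercise that chains together the definitions of $\SA$, $\ISA$, and $\LCP$, together with the extensibility fact established immediately above, and the only subtlety is remembering to handle positions with $\ISA[t_i]=1$ via explicit symbols.
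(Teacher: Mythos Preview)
Your proposal is correct and follows essentially the same approach as the paper: both arguments use the definition of $\LCP$ to get equality of source and target content, and the identity $\ISA[s_i]=\ISA[t_i]-1$ to establish the lexicographic relation. Your version is simply more careful, explicitly handling the case $\ISA[t_i]=1$, checking $s_i\neq t_i$, and spelling out via extensibility the inequality $T[s_i+j\dd]\prec T[t_i+j\dd]$ for all $0\le j<\ell_i$, which the paper leaves implicit by verifying only the case $j=0$.
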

\begin{proof}
First, the parse covers $T$ and it copies sources to targets with the
same content: Let $x=\ISA[t_i]$ and $y=\ISA[t_i]-1$. Then $\ell_i =
\LCP[x]$ is the length of the shared prefix between $t_i=\SA[x]$ and 
$s_i=\SA[y]$. Therefore we can copy $T[s_i\dd s_i+\ell_i-1]$ to 
$T[t_i\dd t_i+\ell_i-1]$.
Second, the parse is lexicographic: $\ISA[s_i] = \ISA[t_i]-1 < \ISA[t_i]$.
\end{proof}

From now on we will use $v$ as the size of the lex-parse.
Let us show that the lex-parse is indeed ordered-optimal.

\begin{theorem}\label{thm:smallest}
	The lex-parse is the smallest lexicographic parse. Thus, $v \le 2r$,
$v \le |lcpcomp|$, $v = O(b\log(n/b))$, $v\log_2 n \le nH_k + o(n\log\sigma)$ for any $k=o(\log_\sigma n)$, $v = 
O(n/\log_\sigma n)$, and there are text families where $r = \Omega(v\log n)$.
\end{theorem}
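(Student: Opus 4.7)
The plan is to show that the lex-parse coincides with the greedy parse for the lexicographic order on suffixes, so that Theorem~\ref{thm:greedy} immediately gives ordered-optimality, and then to derive each of the listed bounds from the general results already established for extensible orders.

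First I would verify that, at every position $t_i$, the value $\ell_i = \LCP[\ISA[t_i]]$ equals the longest length $\ell \ge 0$ such that some lexicographically smaller suffix of $T$ shares a prefix of length $\ell$ with $T[t_i\dd]$. For one direction, the suffix $T[\SA[\ISA[t_i]-1]\dd]$ is lex-smaller than $T[t_i\dd]$ and shares with it a prefix of length exactly $\LCP[\ISA[t_i]]$. For the other direction, I would invoke the standard identity that the longest common prefix of $T[\SA[p]\dd]$ and $T[\SA[q]\dd]$ with $p<q$ equals $\min_{p<k\le q}\LCP[k]$; taking $q=\ISA[t_i]$ gives an upper bound of $\LCP[\ISA[t_i]]$ for every lex-smaller suffix. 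Hence no lex-smaller suffix can match $T[t_i\dd]$ for more than $\LCP[\ISA[t_i]]$ symbols, so the lex-parse is precisely the greedy parse under lexicographic order.

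The lexicographic order on suffixes is extensible: if $T[s]=T[t]$ and $T[s\dd]\prec T[t\dd]$, then the order must be decided strictly after the first symbol, so $T[s+1\dd]\prec T[t+1\dd]$. Combining the previous paragraph with Theorem~\ref{thm:greedy} then establishes that the lex-parse attains the minimum size $v$ among all lexicographic parses of $T$. The bounds $v\le 2r$ and $v\le |lcpcomp|$ are now immediate from Lemmas~\ref{lemma:RLBWT parse} and~\ref{lemma:lcpcomp}, since those provide concrete lexicographic parses of those sizes. Because the lex-parse allows sources and targets to overlap, Theorem~\ref{thm:ordered-grl} together with Theorem~\ref{thm:rlcfg} yields $v\le g_{rl}+1 = O(b\log(n/b))$. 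Theorem~\ref{thm:ordered-g} gives $v\le g+1$, which combined with Lemma~\ref{lem:entropy} yields $v\log_2 n \le nH_k + o(n\log\sigma)$ for any $k=o(\log_\sigma n)$ and, in particular, $v = O(n/\log_\sigma n)$.

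For the separation $r=\Omega(v\log n)$, I would reuse the construction of Theorem~\ref{thm:lowrg}: on a de Bruijn sequence of order $k$ over a binary alphabet we have $r=\Theta(n)$, whereas Lemma~\ref{lem:entropy} gives $g=O(n/\log n)$, so $v\le g+1 = O(n/\log n)$ and therefore $r=\Omega(v\log n)$ on this family. The only non-routine step is the first one, identifying the greedy choice with $\LCP[\ISA[t_i]]$; the tool there (the range-minimum characterization of pairwise LCPs) is classical, so I do not expect a genuine obstacle, only careful bookkeeping between the $\SA$- and text-coordinate systems.
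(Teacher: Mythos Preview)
Your proposal is correct and follows essentially the same route as the paper: show that $\ell_i=\LCP[\ISA[t_i]]$ is exactly the greedy choice under lexicographic order, invoke Theorem~\ref{thm:greedy} for optimality, and then read off the remaining bounds from Lemmas~\ref{lemma:RLBWT parse} and~\ref{lemma:lcpcomp}, Theorems~\ref{thm:ordered-grl} and~\ref{thm:ordered-g}, and Lemma~\ref{lem:entropy}. Your explicit use of the range-minimum identity for pairwise LCPs is just a spelled-out version of what the paper asserts in one line, and your de~Bruijn argument for $r=\Omega(v\log n)$ is exactly the content already packaged inside Theorem~\ref{thm:ordered-g} via Theorem~\ref{thm:lowrg}.
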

\begin{proof}
By Theorem~\ref{thm:greedy}, it suffices to show that Def.~\ref{def:lex}
defines a greedy parse under lexicographic ordering. Indeed, $\ell_i =
\LCP[\ISA[t_i]]$ is the longest prefix shared between $T[t_i\dd]$ and any
other suffix that is lexicographically smaller than it.

The other results are immediate consequences of 
Lemmas~\ref{lemma:RLBWT parse} and \ref{lemma:lcpcomp},
Theorem~\ref{thm:ordered-grl}, 
Lemma~\ref{lem:entropy},
and Theorem~\ref{thm:ordered-g}.
\end{proof}

Note that, unlike $v$, $z$ can be $\Omega(r \log n)$, as shown in 
Theorem~\ref{thm:lowz}. Thus, $v$ offers a better asymptotic bound with 
respect to the number of runs in the BWT. 
The following corollary is immediate.

\begin{theorem} \label{thm:lowzv}
There is an infinite family of strings over an alphabet of size 2 for which
$z = \Omega(v \log n)$.
\end{theorem}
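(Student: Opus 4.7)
The plan is to reuse the family already constructed in Theorem~\ref{thm:lowz}, namely the even Fibonacci words $F_k$ over the binary alphabet $\{a,b\}$, terminated with \$. That proof already showed two facts about these strings: $r = O(1)$ and $z = \Theta(\log n)$. The only extra ingredient I need is an upper bound on $v$ for the same family.

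For that, I would invoke Theorem~\ref{thm:smallest}, which gives $v \le 2r$. Combined with $r = O(1)$ on even Fibonacci words (from Lemma~\ref{lem:smallest} plus \cite[Thm.~9]{MRS07}, as already exploited in Theorem~\ref{thm:lowz}), this immediately yields $v = O(1)$ on this family. Therefore $v \log n = O(\log n) = O(z)$, which is exactly $z = \Omega(v \log n)$.

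There is really no obstacle left to overcome: both pieces — the $\Theta(\log n)$ lower bound on $z$ for Fibonacci words and the $v \le 2r$ bound — are already established earlier in the paper, and they compose directly. The only small thing worth noting in the write-up is that one must use the \emph{even} Fibonacci words (not arbitrary ones) so that the bound $r = O(1)$ survives the passage from the cyclic BWT convention of \cite{MRS07} to the \$-terminated BWT used throughout this paper, exactly as in the proof of Theorem~\ref{thm:lowz}. Everything else is a one-line chain of inequalities.
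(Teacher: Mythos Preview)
Your proposal is correct and matches the paper's own reasoning: the paper states Theorem~\ref{thm:lowzv} as an immediate corollary, right after establishing $v \le 2r$ (Theorem~\ref{thm:smallest}) and recalling that $z = \Omega(r\log n)$ on the even Fibonacci words (Theorem~\ref{thm:lowz}). Your observation that one must restrict to the \emph{even} Fibonacci words so that $r=O(1)$ holds under the \$-terminated convention is exactly the care the paper takes as well.
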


We now show that the bound $v=O(b\log(n/b))$ is tight as a function of $n$. 

\begin{theorem} \label{thm:vb}
  There is an infinite family of strings over an alphabet of size 2
  for which $v = \Omega(b\log n)$.
\end{theorem}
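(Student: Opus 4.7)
The plan is to use the family of odd Fibonacci words $T_k = F_k \cdot \$$ for odd $k \ge 3$, whose length is $n = f_k + 1 = \Theta(\phi^k)$, so $\log n = \Theta(k)$. Since Theorem~\ref{thm:smallest} already gives $v = O(b\log(n/b))$, it suffices to establish $b(T_k) = O(1)$ together with the matching lower bound $v(T_k) = \Omega(\log n)$; combining these will yield $v(T_k) = \Theta(\log n) = \Omega(b\log n)$.

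For $b(T_k) = O(1)$, I plan to give an explicit bidirectional scheme of at most five phrases, based on the identity $F_k[1\dd f_{k-1}-2] = F_k[f_{k-2}+1\dd f_k-2]$. This identity holds because $F_k[f_{k-2}+1\dd f_k] = F_{k-3}F_{k-2}$ while $F_{k-1} = F_{k-2}F_{k-3}$, and the two words $F_{k-2}F_{k-3}$ and $F_{k-3}F_{k-2}$ coincide on their first $f_{k-1}-2$ characters (a classical property of Fibonacci words). The scheme is: a long phrase copying $T_k[1\dd f_{k-1}-2]$ from $T_k[f_{k-2}+1\dd f_k-2]$; two explicit symbols at positions $f_{k-1}-1$ and $f_{k-1}$; a phrase copying $T_k[f_{k-1}+1\dd f_k] = F_{k-2}$ from $T_k[1\dd f_{k-2}]$ (using that $F_{k-2}$ is a prefix of $F_{k-1}$); and the explicit $\$$ at position $f_k+1$. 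Acyclicity of the induced map follows because on each non-explicit position the map either adds $f_{k-2}$ or subtracts $f_{k-1}$, so any orbit avoiding the three explicit positions would have length at least $f_{k-1}+f_{k-2} = f_k$ (since $\gcd(f_{k-2},f_{k-1}) = 1$), which exceeds the number of available positions.

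For $v(T_k) = \Omega(\log n)$, the plan is to analyze the lex-parse of $T_k$ directly and establish a Fibonacci-style recurrence $v(T_k) \ge v(T_{k-2}) + 1$ for odd $k \ge 5$, with base case $v(T_3) = 3$ (the word $ab\$$ has only explicit phrases). The idea is that, once the lex-parse of $T_k$ has advanced past position $f_{k-1}$, the remaining work on the tail $T_k[f_{k-1}+1\dd f_k+1] = F_{k-2}\cdot \$$ produces at least $v(T_{k-2})$ phrases: the suffixes of $T_k$ starting at positions $p > f_{k-1}$ are (as strings) exactly the suffixes of $T_{k-2}$, and their relative order inside the global $\SA$ of $T_k$ coincides with their order inside $\SA(T_{k-2})$. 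Iterating the recurrence yields $v(T_k) \ge (k-3)/2 + 3 = \Omega(k) = \Omega(\log n)$, consistent with the small cases $v(F_5\$) = 5$ and $v(F_7\$) = 6$.

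The main obstacle is controlling the phrase lengths of the lex-parse on the tail, since they are determined by the \emph{global} $\SA$ of $T_k$ rather than by $\SA(T_{k-2})$. A priori, a suffix starting in the tail could share a long prefix with some lex-smaller suffix starting in the prefix $T_k[1\dd f_{k-1}]$, which would let a single phrase swallow several phrases of the would-be recursive lex-parse. Ruling this out requires bounding matches between tail suffixes and prefix suffixes using the periodicity and palindromic structure of Fibonacci words; concretely, that the long repeats in $F_k$ are essentially the two occurrences of the palindromic prefix of length $f_{k-1}-2$ that we already exploited for the $b$ upper bound, and no such match extends into the $F_{k-2}$ tail by more than a constant beyond its isolated counterpart in $T_{k-2}$. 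Once this extra advance is bounded by a constant per phrase, the recurrence holds and the lower bound follows.
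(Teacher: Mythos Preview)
Your $b=O(1)$ argument is essentially the paper's (Lemma~\ref{lem:bms}), with the harmless variation that you copy $B_4=F_{k-2}$ from position $1$ instead of position $f_{k-2}+1$; your acyclicity count via $\gcd(f_{k-2},f_{k-1})=1$ works just as the paper's does with $\gcd(f_{k-2},f_{k-3})=1$.

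The gap is in the lower bound on $v$. Your recurrence $v(T_k)\ge v(T_{k-2})+1$ is in fact true (the paper's explicit computation gives $v(T_k)=v(T_{k-2})+1$ for odd $k\ge 9$), but your proposed justification does not go through as stated. Two concrete issues:
\begin{itemize}
\item The lex-parse of $T_k$ does \emph{not} align at position $f_{k-1}+1$. The paper shows $\ell_1=f_{k-1}-2$ and $\ell_2=f_{k-4}+2$, so phrase~2 straddles the boundary and the parse first enters the tail at position $f_{k-1}+f_{k-4}+1$. Hence ``the remaining work on the tail produces at least $v(T_{k-2})$ phrases'' is already false for $k=9$: only five phrases of $T_9$ start in the tail, while $v(T_7)=6$. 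The recurrence survives only because \emph{two} phrases start before the tail, and you have not argued that.
\item Even granting that the relative order of tail suffixes is preserved, the phrase length at a tail position $p$ is $\LCP_{T_k}[\ISA_{T_k}[p]]$, whose lex-predecessor may lie in $T_k[1\dd f_{k-1}]$. Your sketch ``no such match extends by more than a constant'' is exactly the hard part: establishing it requires pinning down where the relevant Fibonacci factors occur (which occurrences of $F_{k-4}$ sit inside $F_{k-2}$, that the lex-smallest cyclic rotation of $F_k$ starts at position $f_{k-1}$ for odd $k$, etc.). This is precisely the machinery the paper develops to compute $\ell_1,\ell_2,\ell_3,\ldots$ directly.
\end{itemize}
In short, the recurrence is a reasonable target, but turning your sketch into a proof forces you through essentially the same Fibonacci-factor analysis the paper uses for its explicit phrase-length computation; the obstacle you flag is genuine and not bypassed by the outline you give.
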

\begin{proof}
We first prove that $b \le 4$ for all Fibonacci words, and
then that $v = \Omega(\log n)$ on the {\em odd} Fibonacci words (on the even
ones it holds $v=O(1)$, by Theorem~\ref{thm:lowz}).
The proof is rather technical, so we defer it to Appendix~\ref{sec:fib}.
\end{proof}

An interesting
remaining question is whether $v$ is always $O(z)$ or there is a string
family where $z=o(v)$. While we have not been able to settle this question, we
can exhibit a string family for which $z < \frac{3}{5}v$.

\begin{lemma} \label{lem:35}
On the alphabet $\{ 1,\ldots,\sigma+1 \}$, where $\sigma$ is not a multiple of 
$3$, consider the string $S_1 = (2\,3\ldots \sigma\, 1)^3$. Then, for 
$i=1,\ldots,\sigma-1$, string $S_{i+1}$ is formed by changing 
$S_i[3\sigma-3i]$ to $\sigma+1$. Our final
text is then $T = S_1 \cdot S_2 \cdots S_\sigma$, of length
$n = 3\sigma^2$. In this family, $z=3\sigma-2$ and $v=5\sigma-2$.
\end{lemma}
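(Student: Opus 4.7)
The plan is to construct both parses of $T$ explicitly and count their phrases, exploiting the highly regular structure of $T$. Because $S_1 = P^3$ with $P = 2\,3\cdots\sigma\,1$ and $S_2$ coincides with $S_1$ on its first $3\sigma-4$ positions, the prefix $T[1\dd 6\sigma-4]$ is purely periodic of period $\sigma$, and each successive $S_j$ differs from $S_{j-1}$ at a single position that shifts three places to the left as $j$ grows. The hypothesis that $\sigma$ is not a multiple of $3$ will be used throughout to rule out spurious period-$3$ matches inside $P$ that would allow longer phrases and invalidate the counts.

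For $z$, I would run the greedy Lempel-Ziv parse in stages. It begins with the $\sigma$ distinct symbols of $P$ as explicit phrases, followed by one long phrase of length $5\sigma-4$ that copies $T[\sigma+1\dd 6\sigma-4]$ from $T[1\dd 5\sigma-4]$ via self-overlap in the periodic prefix, and then an explicit phrase for the first $\sigma+1$, located at $T[6\sigma-3]$. Then, by induction on $j=2,\ldots,\sigma-1$, I would show that the parse emits exactly two further phrases per step: a length-$1$ copy of $\sigma+1$, followed by a ``bridging'' phrase of length $3\sigma-4$ crossing the $S_j\,|\,S_{j+1}$ boundary, whose source is the analogous bridge across $S_{j-1}\,|\,S_j$. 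This source is valid because $S_{j-1}$ and $S_j$ differ at exactly one position, which aligns with the position that must equal $\sigma+1$ in the target. A single concluding phrase covers the tail of $S_\sigma$, giving the claimed $z = 3\sigma-2$.

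For $v$, I would use Definition~\ref{def:lex}: the $i$-th phrase has length $\LCP[\ISA[t_i]]$. The key structural fact is that, because $\sigma+1$ is the maximum symbol, the lexicographic order of the suffixes of $T$ is dictated by the position of the first $\sigma+1$ in each suffix; from this one can read off both the lex-predecessor and the LCP for every candidate starting position $t$. Unlike LZ, the lex-parse is pinned to the lex-predecessor, which does not in general coincide with the convenient LZ bridging source; instead, the predecessor moves between suffixes of different $S_{j'}$'s as one traverses $T$. A careful count shows that each $S_j$ (away from the two extremes) contributes about five phrases, one per zone delimited by its $\sigma+1$'s plus extras at positions where the predecessor switches sources, summing to $v = 5\sigma-2$.

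The main obstacle is the meticulous case analysis for the lex-parse. Identifying the lex-predecessor for every relevant starting position, and the precise LCP it induces, requires tracking how suffixes of $T$ interleave in lexicographic order as a function of the number and positions of the $\sigma+1$ symbols in each $S_{j'}$. Handling the boundary behaviour at $S_1$ and $S_\sigma$ (and at the implicit terminator) is what pins down the exact additive constants $-2$ in both $3\sigma-2$ and $5\sigma-2$; the gap between the two constants, combined with the leading coefficients $3$ and $5$, is precisely what delivers the strict inequality $z<\tfrac{3}{5}v$ advertised before the lemma.
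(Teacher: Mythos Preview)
Your treatment of $z$ follows the paper's line: $\sigma$ explicit symbols for the first copy of $P$, one long self-overlapping phrase through the periodic prefix up to the first $\sigma+1$, and then two phrases per ``new'' edit as we cross each boundary $S_j\,|\,S_{j+1}$. One slip: after the explicit $\sigma+1$ at $T[6\sigma-3]$ the next phrase is the bridge across $S_2\,|\,S_3$, not another length-$1$ copy, so the per-step pair should read ``bridge, then length-$1$'' (or the induction should start at $j=3$). With the indexing fixed the count matches the paper's.

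For $v$ your plan diverges from the paper and has a real gap. You propose that ``each $S_j$ (away from the two extremes) contributes about five phrases, one per zone delimited by its $\sigma+1$'s plus extras''. That is not what happens. The paper's argument is that on $S_1\cdots S_{\sigma-1}$ the lex-parse behaves like the Lempel--Ziv parse (each LZ source there is also lexicographically smaller than its target, so those phrases are admissible for a lexicographic parse), and the entire blow-up to $5\sigma-2$ is concentrated in $S_\sigma$. The reason is specific to $S_\sigma$: every suffix of $T$ starting inside $S_\sigma$ is the lex-smallest among all suffixes of $T$ that share its first few symbols, because the terminator is nearby and, crucially, because $\sigma\not\equiv 0\pmod 3$ ensures that symbol~$1$ is never overwritten by $\sigma+1$. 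Hence the greedy lex-parse, when it reaches $S_\sigma$, can only form phrases of length~$2$ between consecutive $\sigma+1$'s (pointing back into $S_1$), alternating with length-$1$ phrases at the $\sigma+1$'s themselves, for $2\sigma$ phrases on $S_\sigma$. Your ``$\sim 5$ per block'' picture does not capture this: the intermediate $S_j$ contribute roughly two lex-parse phrases each, not five, and your ``zones delimited by the $\sigma+1$'s of $S_j$'' would yield $\Theta(j)$ phrases in $S_j$ and hence $\Theta(\sigma^2)$ in total. The missing idea is precisely this asymmetry between $S_\sigma$ and the earlier blocks.
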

\begin{proof}
In the Lempel-Ziv parse of $T$, we first have $\sigma+1$ phrases of length 1 to cover
the first third of $S_1$, and then a phrase that extends in $T$ until the first
edit of $S_2$. Since then, each edit forms two phrases: one covers the edit
itself (since $\sigma$ is not a multiple of $3$, each edit is followed by a 
distinct symbol), and the other covers the range until the next edit. This adds
up to $z = 3\sigma-2$.

A lex-parse starts similarly, since the Lempel-Ziv phrases indeed point to 
lexicographically smaller ones. However, it needs $2\sigma$ further phrases to 
cover $S_\sigma = 2\,3\,(\sigma+1)\,5\,6\,(\sigma+1)\ldots$ 
with phrases of alternating length $2$ and $1$: each such
pair of suffixes $S_\sigma[3i+1\dd ]$ and $S_\sigma[3i+3\dd ]$, for $i=0,\ldots,
\sigma-1$, do appear in previous substrings $S_j$, but all these are 
lexicographically larger (because $\sigma$ is not a multiple of $3$, and
thus symbols $1$ are never replaced by $\sigma+1$). Therefore, 
only length-2 strings of symbols not including $\sigma+1$ can point to,
say, $S_1$ (this reasoning has been verified computationally as 
well). This makes a total of $v=5\sigma-2$ phrases.
\end{proof}

\subsection{Experimental Comparison with Lempel-Ziv}

As a test on the practical relevance of the lex-parse, we measured $v$, $z$, and $r$
on various synthetic, pseudo-real, and real repetitive collections obtained
from PizzaChili ({\tt http://pizzachili.dcc.uchile.cl}) and on four repetitive collections (\texttt{boost}, \texttt{bwa}, \texttt{samtools}, \texttt{sdsl}) obtained by concatenating the first versions of  github repositories ({\tt https://github.com}) until obtaining a length of $5\cdot 10^8$ characters for each collection.

Table~\ref{tab:lex} shows the results. Our new lex-parse performs better than 
Lempel-Ziv on the
synthetic texts, especially on the Fibonacci words (\texttt{fib41}), the family for which we
know that $v=o(z)$ (recall Theorems~\ref{thm:lowz} and \ref{thm:lowzv}).\footnote{The file \texttt{fib41} uses a variant where $F_1=a$, $F_2=ba$, and $F_k = F_{k-2} F_{k-1}$.} 
On the others (Run-Rich String and Thue-Morse sequences), $z$ is about 30\% larger than $v$.

Pseudo-real texts are formed by taking a real text and replicating it many
times; a few random edits are then applied to the copies. The fraction of 
edits is indicated after the file name, for example, {\tt sources.001} indicates a
probability of 0.001 of applying an edit at each position. In the names with 
suffix {\tt .1}, the edits are applied to the base version to form the copy,
whereas in those with suffix {\tt .2}, the edits are cumulatively applied to the
previous copy. It is interesting to note that, in this family, $v$ and $z$ are
very close under the model of edits applied to the base copy, but $z$ is 
generally significantly smaller when the edits are cumulative. The ratios
actually approach the $\frac{3}{5} = 0.6$ we obtained in Lemma~\ref{lem:35}
using a particular text that, incidentally, follows the model of cumulative
edits.

On real texts, both measures are very close. Still, it can be seen that in
collections like {\tt einstein.de} and {\tt einstein.en}, which feature 
cumulative edits (those collections are formed by versions of the Wikipedia
page on Einstein in German and English, respectively), $z$ is about 8\% 
smaller than $v$. On the other hand, $v$ is about 3\%--4\% smaller than $z$ on 
biological datasets such as {\tt cere}, {\tt escherichia\_coli} and {\tt para},
where the model is closer to random edits applied to a base text. The lex-parse
is also about 1\% smaller than the Lempel-Ziv parse on github versioned collections,
except \texttt{bwa}.

To conclude, the comparison between $r$ and $v$ shows that the sub-optimal 
lexicographic parse induced by the Burrows-Wheeler transform is often much 
larger (typically 2.5--4.0 times, but more than 7 times on the biological datasets) than the optimal 
lex-parse. Interestingly, on Fibonacci words the optimal parse is already 
found by the Burrows-Wheeler transform.

\begin{table}[tp]
	\centering
	\begin{tabular}{|@{~}l@{~}|@{~}r@{~}|@{~}r@{~}|@{~}r@{~}|@{~}r@{~}|@{~}c@{~}|@{~}c@{~}|}\hline
file		&$n~~~~~~~$	&$r~~~~~~$	&$z~~~~~$	&$v~~~~~$ 	& $z/v~$ & $r/v$ \\
\hline
fib41	&267,914,296	&4	&41	&4	&$>10$ & $1.000$ \\
rs.13	&216,747,218	&77	&52	&40	&$1.300$ & $1.925$ \\
tm29	&268,435,456	&82	&56	&43	&$1.302$ & $1.907$ \\
\hline
dblp.xml.00001.1	&104,857,600	&172,489	&59,573	&59,821	&$0.996$ & $2.883$ \\
dblp.xml.00001.2	&104,857,600	&175,617	&59,556	&61,580	&$0.967$ & $2.852$ \\
dblp.xml.0001.1	&104,857,600	&240,535	&78,167	&83,963	&$0.931$ & $2.865$ \\
dblp.xml.0001.2	&104,857,600	&270,205	&78,158	&100,605	&$0.777$ & $2.686$ \\
sources.001.2	&104,857,600	&1,213,428	&294,994	&466,643	&$0.632$ & $2.600$\\
dna.001.1	&104,857,600	&1,716,808	&308,355	&307,329	&$1.003$ & $5.586$\\
proteins.001.1	&104,857,600	&1,278,201	&355,268	&364,093	&$0.976$ & $3.511$ \\
english.001.2	&104,857,600	&1,449,519	&335,815	&489,034	&$0.687$ & $2.964$\\
\hline
boost	&500,000,000	&61,814	&22,680	&22,418	&$1.012$ & $2.757$ \\
einstein.de	&92,758,441	&101,370	&34,572	&37,721	&$0.917$ & $2.687$\\
einstein.en	&467,626,544	&290,239	&89,467	&97,442	&$0.918$ & $2.979$ \\
bwa	&438,698,066	&311,427	&106,655	&107,117	&$0.996$ & $2.907$ \\
sdsl	&500,000,000	&345,325	&113,591	&112,832	&$1.007$ & $3.061$ \\
samtools	&500,000,000	&458,965	&150,988	&150,322	&$1.004$ & $3.053$ \\
world\_leaders	&46,968,181	&573,487	&175,740	&179,696	&$0.978$ & $3.191$ \\
influenza	&154,808,555	&3,022,822	&769,286	&768,623	&$1.001$ & $3.933$ \\
kernel	&257,961,616	&2,791,368	&793,915	&794,058	&$1.000$ & $3.515$ \\
cere	&461,286,644	&11,574,641	&1,700,630	&1,649,448	&$1.031$ & $7.017$ \\
coreutils	&205,281,778	&4,684,460	&1,446,468	&1,439,918	&$1.005$ & $3.253$ \\
escherichia\_coli	&112,689,515	&15,044,487	&2,078,512	&2,014,012	&$1.032$ & $7.470$ \\
para	&429,265,758	&15,636,740	&2,332,657	&2,238,362	&$1.042$ & $6.986$ \\
\hline
\end{tabular}

\vspace{5mm}

\caption{Various repetitiveness measures obtained from synthetic, pseudo-real,
and real texts (each category forms a block in the table).}
\label{tab:lex}
\end{table}

\section{Bounds on Collage Systems}

In this
section we use our previous findings to prove that $c = O(z)$, $b = O(c)$, 
and that there exist string families where $c = \Omega(b\log n)$, where $c$ is the size of the smallest (internal) collage system.

\begin{theorem} \label{thm:ztoc}
There is always an internal collage system of $c \le 4z$ rules generating $T$. 
\end{theorem}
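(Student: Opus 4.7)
The plan is to build an internal collage system directly from the Lempel-Ziv parse $T = Z_1 Z_2 \cdots Z_z$ (with overlaps allowed). I would introduce prefix nonterminals $S_1, \ldots, S_z$ with $exp(S_i) = T[1\dd t_{i+1}-1]$, assembled via concatenations $S_i \to S_{i-1} Z_i$ starting from $S_1 = Z_1$; the initial symbol is $S_z$. Each phrase $Z_i$ is encoded according to its LZ type: an explicit symbol uses one (possibly shared) terminal rule; a non-overlap copy from $T[s_i\dd s_i+\ell_i-1]$ uses a single substring-truncation rule $Z_i \to S_{i-1}^{[s_i,\,s_i+\ell_i-1]}$, since the source lies in $exp(S_{i-1})$; and an overlap copy, with period $p = t_i - s_i$ and $\ell_i = mp + r$ for $0 \le r < p$, exploits the periodicity of $T[s_i\dd s_i+\ell_i-1]$ through the period $P \to S_{i-1}^{[s_i,\,s_i+p-1]}$, its run $Y \to P^m$, the length-$r$ remainder $R \to P^{[r]}$ (only when $r>0$), and the concatenation $Z_i \to Y R$ (or simply $Z_i \to P^m$ when $r=0$).

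I would then verify the internal property: every nonterminal has a non-truncation path from $S_z$. The $S_i$'s and the $Z_i$'s are reached via the non-truncation concatenations on the $S$-chain; for overlap phrases, the top-level concatenation $Z_i \to Y R$---chosen in place of the more compact but non-internal three-rule alternative $Z_i \to W^{[\ell_i]}$ with $W = P^{m+1}$, where $W$ would only be reached via a truncation---propagates non-truncation reachability to $Y$ and $R$, and then the non-truncation rule $Y \to P^m$ reaches $P$.

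Finally, I would sum the rules: the per-phrase encoding uses at most four rules (attained by overlap phrases with $r>0$, namely $P$, $Y$, $R$, $Z_i$), plus the $z-1$ concatenations of the $S$-chain. The main obstacle is showing that careful accounting brings this total down to $4z$: a naive sum is roughly $5z$ in the worst case of all overlap phrases with $r>0$, so tightening the constant requires amortizing one concatenation per overlap phrase against the chain (for instance, by defining $S_i$ in two steps $S_i' \to S_{i-1} Y$ and $S_i \to S_i' R$, so that no separate $Z_i$ is needed) and using that the first phrase is always explicit and incurs no predecessor concatenation.
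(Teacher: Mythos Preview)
Your approach is essentially the same as the paper's: build a chain of prefix nonterminals $S_1,\ldots,S_z$ with $S_i\to S_{i-1}Z_i$, and encode each Lempel--Ziv phrase either as a terminal, as a substring truncation of $S_{i-1}$ in the non-overlapping case, or, in the overlapping case, via the period $P$ (a suffix of $S_{i-1}$), its power, and a prefix-truncated remainder. Your verification of the internal property (every nonterminal reached from $S_z$ by non-truncation edges) is correct and matches what the paper relies on implicitly.

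Where you and the paper differ is only in the bookkeeping for the constant, and here your proposed fix does not actually work. You correctly observe that an overlap phrase with $m\ge 2$ and $r>0$ costs five rules once the chain concatenation is included. Your ``amortization'' replaces $Z_i\to YR$ and $S_i\to S_{i-1}Z_i$ by $S_i'\to S_{i-1}Y$ and $S_i\to S_i'R$, but this is a wash: both variants spend exactly five rules ($P$, $Y$, $R$, plus two binary concatenations), so the total stays at $5z-O(1)$ in the worst case, not $4z$. The paper's own construction shares this feature---it lists five rules $O_i,O_i',R_i,N_i,S_i$ for the overlapping case---so the stated bound $c\le 4z$ is not literally delivered by that construction either. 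Both arguments establish the intended $c=O(z)$; the discrepancy in the exact constant is common to the paper and your proposal, and your proof sketch is otherwise sound.
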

\begin{proof}
We proceed by induction on the Lempel-Ziv parse. At step $i$, we obtain a 
collage system with initial rule $S_i$ that generates the prefix $T[1\dd p_i]$ 
of $T$ covered by the first $i$ phrases. The initial symbol for the whole $T$
is then $S_z$.

For the first phrase, which must be an explicit symbol $a$, we insert the rule
$S_1 \rightarrow a$. Let us now consider the phrases $i>1$.
If the $i$th phrase is an explicit symbol $a$, then we add rules $A_i
\rightarrow a$ and $S_i \rightarrow S_{i-1} A_i$. 

Otherwise, let the $i$th
phrase point to a source that is completely inside $T[1\dd p_{i-1}]$, precisely 
$T[x\dd y]$ with $y \le p_{i-1}$. Then we add rule $N_i \rightarrow 
S_{i-1}^{[x,y]}$, and then $S_i 
\rightarrow S_{i-1} N_i$. 

If, instead, the $i$th phrase points to a source that overlaps it, 
$T[x\dd y]$ with $p_{i-1} < y < p_i$, then $T[x\dd y]$ is periodic
with period $p = p_{i-1}-x+1$, that is, $T[x\dd y-p] = 
T[x+p\dd y]$. Therefore, the new phrase is formed by $q=\lfloor\frac{y-x+1}{p}\rfloor$ copies of $T[x\dd x+p-1] = 
T[x\dd p_{i-1}]$ plus $T[x\dd x+((y-x+1)\!\!\mod p)-1]$ if $p$ does not divide
$y-x+1$ (note that $q$ may be zero). This can be obtained with $O_i \rightarrow ^{[p]}\!\!S_{i-1}$, $O_i' \rightarrow O_i^{[(y-x+1) \bmod p]}$,
$R_i \rightarrow O_i^q$, $N_i \rightarrow R_i O_i'$,
and $S_i \rightarrow S_{i-1}N_i$.

Figure~\ref{fig:ztoc} illustrates both cases schematically.
\end{proof}

\begin{figure}[t]
\includegraphics[width=\textwidth]{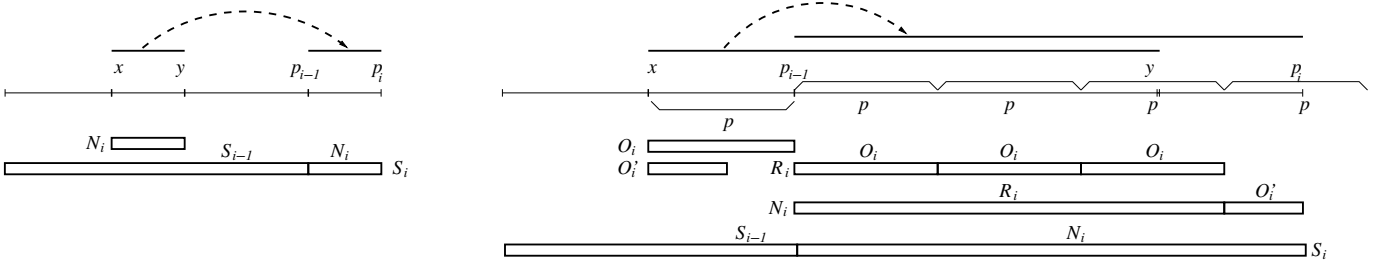}
\caption{Conversion of a Lempel-Ziv parse into a collage system using Theorem~\ref{thm:ztoc}. On the left, the nonoverlapping case. On the right, the overlapping case.}
\label{fig:ztoc}
\end{figure}

\begin{example}
Consider the Lempel-Ziv parse $T = \underline{a}|\underline{l}|a|\underline{b}|a|\underline{r}|ala|labar|\underline{d}|a|\underline{\$}$ of Section~\ref{sec:lz}, where we have underlined the explicit symbols. Figure~\ref{fig:zc-ex} illustrates the application of Theorem~\ref{thm:ztoc} to this parse.
\end{example}

\begin{figure}[t]
\begin{center}
\includegraphics[width=0.4\textwidth]{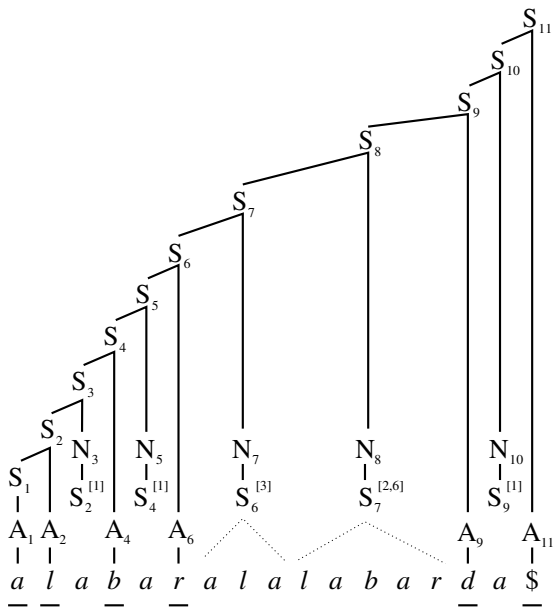}
\end{center}
\caption{Creation of an internal collage system from the Lempel-Ziv parse of $T = \underline{a}|\underline{l}|a|\underline{b}|a|\underline{r}|ala|labar|\underline{d}|a|\underline{\$}$, using Theorem~\ref{thm:ztoc}.}
\label{fig:zc-ex}
\end{figure}

\begin{theorem}
There is always a bidirectional scheme of size $b \le c+1$ for $T$, for an internal collage system of size $c$.
\end{theorem}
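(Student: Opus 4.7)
The plan is to extend the grammar-tree argument used in Theorem~\ref{thm:grlz}. Since the collage system is internal, every nonterminal $A$ has $exp(A)$ appearing as a substring of $T$, so for each $A$ I will fix a canonical occurrence $[x_A\dd y_A]$ in $T$, with $[x_S\dd y_S]=[1\dd n]$ for the initial symbol $S$. I will then construct a grammar tree rooted at $S$'s canonical, where each canonical occurrence is an internal node expanded according to its rule: for $A\rightarrow BC$ and $A\rightarrow B^k$ (the latter recast as $A\rightarrow B\cdot B^{k-1}$ as in Theorem~\ref{thm:grlz}), the internal node will have two children at the corresponding sub-ranges; for $A\rightarrow a$, it will have a single terminal leaf child (an explicit symbol); and for the truncation rule $A\rightarrow B^{[t,t']}$, it will have a single leaf child whose phrase in the bidirectional scheme points to the substring $[x_B+t-1\dd x_B+t'-1]$ of $B$'s canonical. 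Children placed at positions that differ from their label's canonical will become leaves whose phrases point to the canonical occurrences.

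Since each internal node has at most two children and there are at most $c$ internal nodes (one per rule), the grammar tree will have at most $c+1$ leaves, each corresponding to one phrase of the bidirectional scheme. To verify the scheme is valid (i.e., the copy function $f$ has no cycles), I will define the level of a leaf as the derivation depth of its associated nonterminal (with terminal leaves at level $0$), and show that each application of $f$ strictly decreases this level: a non-canonical leaf of $A$ maps to a position inside $A$'s canonical, which is decomposed into sub-nonterminals of strictly smaller derivation depth; and a truncation leaf for $A\rightarrow B^{[t,t']}$ maps inside $B$'s canonical, whose sub-nonterminals have depth strictly smaller than $A$'s, since $A$'s rule references $B$. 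Hence $f$-iteration terminates at an explicit symbol.

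The main obstacle is the self-loop the truncation case can produce: if $exp(A)$ has its sole occurrence in $T$ exactly at the substring $[x_B+t-1\dd x_B+t'-1]$ of $B$'s canonical referenced by $A$'s rule, then $A$'s canonical is forced to coincide with the image of its own phrase. I will resolve this by choosing canonical occurrences carefully: whenever $exp(A)$ has an occurrence outside $B$'s canonical I will take it as $A$'s canonical, and otherwise I will absorb $A$'s canonical into $B$'s decomposition, emitting no separate internal node for $A$. In the absorbed case the positions of $A$'s canonical are covered by the leaves of $B$'s subtree, and any non-canonical use of $A$ continues to copy correctly from $A$'s canonical, a valid range in $T$. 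Absorption can only decrease the leaf count, preserving the bound $b\le c+1$; carefully working through the bookkeeping of the absorbed case, including the rare situation where some parent rule's canonical also lands on $A$'s canonical position, is the main technical part.
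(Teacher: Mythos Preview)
Your overall approach is the same as the paper's: build a grammar tree from the collage system, take its (at most $c+1$) leaves as phrases, and use a potential that decreases along $f$ to prove validity. The difference is that you manufacture a difficulty that the paper avoids by two simple choices you did not make.

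First, the paper does \emph{not} pick an arbitrary canonical occurrence $[x_A\dd y_A]$ of $exp(A)$ in $T$; it simply expands the parse tree from $S$ and, for each non-truncation nonterminal, keeps the \emph{leftmost parse-tree occurrence} as the single internal node. Second, it does \emph{not} make a truncation nonterminal's canonical an internal node with a single leaf child; instead, \emph{every} node labeled by a truncation nonterminal is a leaf. With these two choices the self-loop you worry about cannot occur: a leaf labeled $A$ with rule $A\rightarrow B^{[t,t']}$ points inside the leftmost parse-tree occurrence of $B$, but since rules only reference earlier nonterminals, no node labeled $A$ can lie inside the subtree of the (earlier) $B$; hence the phrase never points to its own range. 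The validity argument then uses as potential the \emph{index of the rule} labeling the leaf covering a position, which strictly decreases under $f$ for every leaf type (non-leftmost leaves of $X$ map into $X$'s internal subtree, whose leaves have smaller index; truncation leaves of $A$ map into $B$'s leftmost occurrence, whose covering leaves have index at most that of $B<A$). This is exactly your ``derivation depth'' idea, but it goes through without obstacles.

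So your ``main obstacle'' and the absorption machinery are unnecessary; they arise only because you chose canonicals from arbitrary text occurrences rather than from the parse tree. Drop that freedom, make all truncation nodes leaves, and use the rule index as the potential; the proof then matches the paper's and is a few lines long.
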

\begin{proof}
We extend the idea of  Theorem~\ref{thm:grlz} to handle substring rules. We draw the parse tree of $T$, starting from the initial symbol. When we reach a nonterminal defined by a substring rule, we convert it into a leaf. Just as for grammar trees, we also convert into leaves all but the leftmost occurrence of each other nonterminal in the parse tree. Analogously to grammar trees, the resulting tree has at most $c+1$ leaves, because we are just adding substring rules, each of which adds a new leaf. 

We now generate a bidirectional macro scheme exactly as we defined the left-to-right parse in Theorem~\ref{thm:grlz}.
Further, each leaf representing a substring rule $A \rightarrow B^{[t,t']}$ is converted into a single phrase pointing to $T[x+t-1\dd x+t'-1]$, where the leftmost occurrence of $B$ in the parse tree covers the text $T[x\dd y]$.

The resulting parse may not be left-to-right anymore. However, it is a valid bidirectional scheme. To see this, let us label each position $p$ in $T$ with the index in the sequence of rules of the leaf of the grammar tree covering $T[p]$. This means that the labels of text positions descending from an internal node $A$ are smaller than the index of $A$.  Since nonterminals are defined in terms of earlier nonterminals, it holds that every position $p$ of $T$ is defined in terms of a position $f(p)$ with a smaller label.
\end{proof}

\begin{example}
The following collage system to generate the text $T=alabaralalabarda\$$
is an internal variant of the one given in Section~\ref{sec:collage}:
$A \rightarrow a$, $B \rightarrow b$, $D \rightarrow d$, $L \rightarrow l$, $R \rightarrow r$, 
$Z \rightarrow \$$, $C \rightarrow AL$, $E \rightarrow CC$, $F \rightarrow BA$, $G \rightarrow FR$, 
$H \rightarrow DA$, $I \rightarrow HZ$, $J \rightarrow EA$, $K \rightarrow JG$,
$M \rightarrow \,^{[6]} \!K$, $N \rightarrow MK$, $O \rightarrow NI$.
The corresponding bidirectional scheme induces the parse $T=alabar|\underline{a}|\underline{l}|al|a|\underline{b}|a|\underline{r}|\underline{d}|a|\underline{\$}$, where the first phrase is defined by a forward pointer to $T[9\dd 14]$.
\end{example}\medskip

\begin{theorem}
There exists an infinite family of strings over an alphabet of size 2 for which
$c = \Omega(v \log n)$, and thus also $c = \Omega(b\log n)$, for any general collage system of size $c$.
\end{theorem}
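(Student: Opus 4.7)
My plan is to take the even Fibonacci words $F_k$ as the witness family, with $n = f_k$. First, I would establish $v = O(1)$ on this family: by Theorem~\ref{thm:lowz}, $r = O(1)$ on even Fibonacci words, and by Theorem~\ref{thm:smallest}, $v \le 2r$, so $v = O(1)$. Since every lex-parse is a bidirectional scheme by Lemma~\ref{lem:ordered-to-b}, also $b \le v = O(1)$ on the same family.

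The main step is to show $c = \Omega(\log n)$ for any general collage system generating $F_k$. My approach is a size-growth argument: for each nonterminal $A$, let $L(A) = |\exp(A)|$. The rule types give $L(A) \in \{1,\, L(B)+L(C),\, t \cdot L(B),\, L(B)\}$ for terminal, concatenation, power, and truncation (prefix or suffix) rules, respectively. If every rule could increase $\max_A L(A)$ by at most a constant factor, the initial symbol's constraint $L = n$ would force $c = \Omega(\log n)$. Concatenation ($\times 2$) and truncation ($\times 1$) satisfy this trivially; the issue is the power rule $A \to B^t$, where $t$ could a priori be as large as $n$.

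To control $t$, I would exploit the bounded critical exponent of Fibonacci words ($2 + \phi \approx 3.618$): no substring of $F_k$ with period $p$ has length exceeding $(2+\phi)p$. Consequently, any substring of $\exp(A) = \exp(B)^t$ that appears in $F_k$ has length less than $(2+\phi)\,|\exp(B)|$. I would then argue, by normalizing the given collage system, that we may assume every power rule $A \to B^t$ satisfies $t \le 4$: intuitively, if an extracted portion of $\exp(A)$ ends up in $F_k$, the Fibonacci periodicity constraint bounds its length, so the ``unused'' tail of $\exp(A)$ can be shaved off without increasing $c$. With all power exponents bounded by a constant, the size-growth bound yields $c \ge \log_4 n = \Omega(\log n)$.

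Combining the two parts, we obtain $c = \Omega(\log n) = \Omega(v \log n)$ on the even Fibonacci words; since $b \le v$ on the same family, also $c = \Omega(b \log n)$. The main obstacle will be the normalization step: in a general collage system, a power rule's output may be passed through several intermediate non-internal nonterminals before being truncated down to a substring of $F_k$. Carefully tracing these chains, verifying that the effective length of $\exp(A)$ ultimately used to form $T$ remains $O(|\exp(B)|)$, and modifying the system to reduce $t$ without increasing $c$, is the crux of the proof.
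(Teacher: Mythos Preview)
Your proposal is correct and follows essentially the same route as the paper: take the even Fibonacci words, get $v=O(1)$ from $v\le 2r$ and $r=O(1)$ (and hence $b=O(1)$), and show $c=\Omega(\log n)$ by first bounding the exponent of every power rule using the $4$th-power-freeness (equivalently, bounded critical exponent) of Fibonacci words, then applying the size-growth argument $n\le 4^{c}$. For the normalization step you flag as the crux, the paper's argument is exactly your ``shave off the unused tail'': any $A\to B^k$ with $k>4$ must lose more than $|\exp(B)|$ symbols via truncation before contributing to $T$, so one may iteratively replace it by $A\to B^{k-1}$ down to $k=4$ without changing the generated string or increasing $c$.
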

\begin{proof}
Fibonacci words do not contain 4 consecutive repetitions of the same substring 
\cite{Kah83}. Therefore, no internal collage system generating a Fibonacci word contains
run-length rules $A \rightarrow B^k$ with $k>3$, because $exp(A)$ does appear in $T$. Run-length rules with $k \le 3$
can be replaced by one or two rules that are not run-length rules. Therefore,
if a Fibonacci word of length $n$ is generated by an internal collage system
of size $c$, then it is also generated by an internal collage system of size at most $2c$
with no run-length rules.

Just as with SLPs, no such collage system can generate a string of 
length more than $2^{2c}$; the substring rules do not help in obtaining
strings of some length with fewer rules. As a consequence, it holds that
$c = \Omega(\log n)$. On the other hand, by Theorem~\ref{lem:bms},
it holds that Fibonacci words have bidirectional schemes of $O(1)$ blocks.
Further, by Theorems~\ref{thm:lowz} and  \ref{thm:smallest},
it holds that $v=O(1)$ on the even Fibonacci words.

We can extend the result to general collage systems by noting that every nonterminal $A \rightarrow B^k$ with $k>4$ must be shortened via truncation by more than $|exp(B)|$ symbols, before appearing in $T$. Thus, it can be replaced by $A \rightarrow B^{k-1}$ and, iteratively, by $A \rightarrow B^4$, and thus be replaced by two rules that are not run-length rules.
\end{proof}

%!TEX root = paper.tex

\section{Conclusions}

We have essentially closed the question of which the approximation ratio of
the (unidirectional, left-to-right) Lempel-Ziv parse is with respect to the 
optimal bidirectional
parse, therefore contributing to the understanding of the quality of this
popular heuristic that can be computed in linear time, whereas computing the
optimal bidirectional parse is NP-complete. Our bounds, which are shown to be
tight, imply that the gap is in fact logarithmic, wider than what was 
previously known.

We have then generalized Lempel-Ziv to the class of optimal ordered parsings,
where there must be an increasing relation between source and target positions in a copy. We
proved that some features of Lempel-Ziv, such as converging to the empirical
entropy, being limited by the smallest RLSLP, and being worse than the optimal bidirectional scheme by at most a
logarithmic factor, hold in fact for all 
optimal ordered parsings.

\begin{figure}[t]
\centerline{\includegraphics[width=8.5cm]{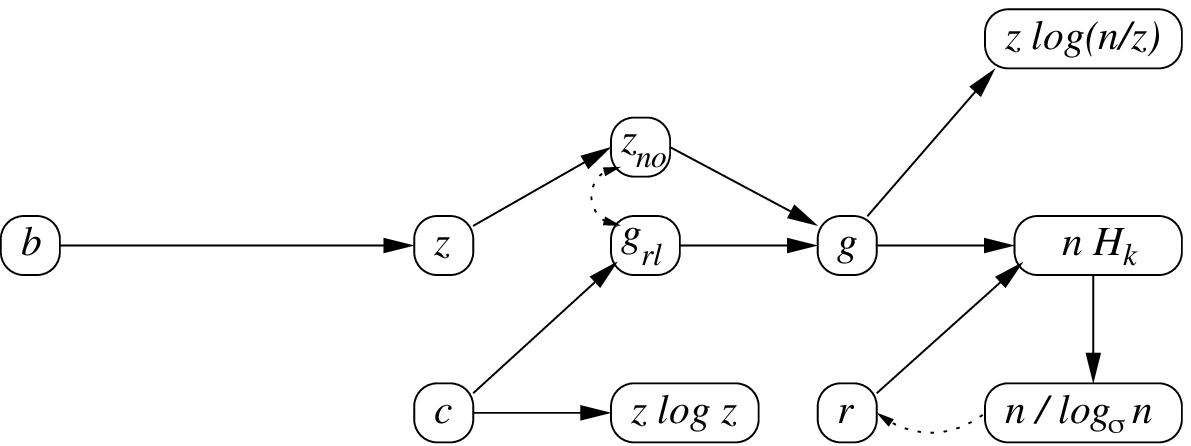}
\hfill
\includegraphics[width=8.5cm]{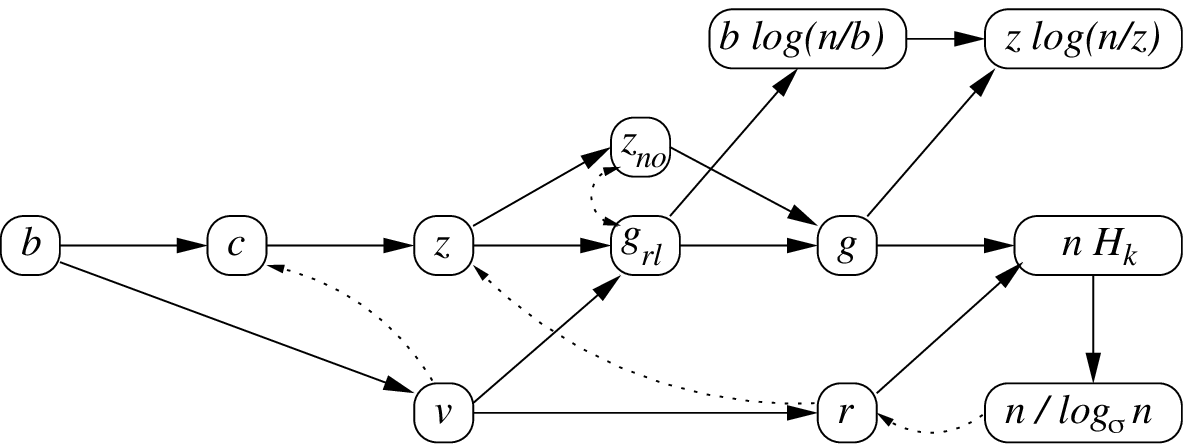}}
\caption{Previously known (left) and new (right) asymptotic bounds between 
repetitiveness measures. An arrow from $x$ to $y$ means that $x = O(y)$ for
every string family. The arrow $b \rightarrow c$ holds for internal collage systems only. For most arrows, a logarithmic gap for some string
family is known, except $c \rightarrow z$. There are also logarithmic gaps for 
some incomparable measures, shown in dotted lines (one is less than
logarithmic, $g_{rl}=\Omega(z_{no}\log n/\log\log n)$).}
\label{fig:bounds}
\end{figure}

As an example of such a parse, we introduced the lex-parse, which is the optimal 
left-to-right parse in the lexicographical order of the involved suffixes. This 
new parse is shown to be computable greedily in linear time and to have many of
the good bounds of the Lempel-Ziv parse with respect to other measures, even 
improving on some. For example, being an optimal ordered parse, the lex-parse is upper-bounded by the smallest
RLCFG and it is an approximation to the smallest bidirectional parse with a logarithmic gap. In addition, the lex-parse is bounded by the number of runs in the BWT of the text, which is not the case of the Lempel-Ziv parse.
We exhibit a family of strings where the lex-parse is 
asymptotically smaller than the Lempel-Ziv parse, and another where the latter is
smaller than the lex-parse, though only by a constant factor. Experimentally,
the lex-parse is shown to behave similarly to the Lempel-Ziv parse, although it is 
somewhat larger on versioned collections with cumulative edits.

Finally, we showed that the smallest collage systems are of the order of the Lempel-Ziv parse. A restricted variant we call internal collage systems are shown not to be asymptotically smaller than the smallest bidirectional scheme, and have a logarithmic gap
with the lex-parse on some string families. Many other results are proved along the way.

Figure~\ref{fig:bounds} illustrates the contributions of this article to the
knowledge of the asymptotic bounds between repetitiveness measures.
%We also include $e$, the size of the CDAWG \cite{BBHMCE87}
%of $T$ (i.e., the smallest compact automaton that recognizes the substrings of 
%$T$), which has received some attention recently \cite{BCGPR15}. It is known
%that $e \ge \max(z,r)$ \cite{BCGPR15} and $e = \Omega(g)$ \cite{BC17}.
%
%We do not include another measure, $m$, the number of maximal matches in
%$T$ \cite{BCGPR15}, because it can be zero for all the $n!$ strings of length 
%$n$ with all distinct symbols \cite{BCGPR15}, and thus it is below the 
%Kolmogorov complexity. Yet, we use the fact that $m \le e$ to derive other 
%lower bounds on $e$ in our discussion.
%$e = \Omega(\max(r,z)\cdot n)$ \cite{BCGPR15} and thus 
%$e = \Omega(g\cdot n/\log n)$ since $g = O(z\log n)$; 
%$\min(r,z) = \Omega(m\cdot n)$ \cite{BCGPR15}; % all distinct chars
%UNCLEAR $r = \Omega(z\sqrt{n})$ for sufficiently large $\sigma>\sqrt{n}$ \cite[Ex.~13.14]{Nav16}, and
Note that the solid arrow relations are transitive, because they hold for every
string family. Dotted arrows, instead, are not transitive because they hold
for specific string families. 

There are various interesting avenues of future work. For example, it is unknown
if there are string families where $z=o(v)$ or $c=o(z)$, nor if $b=O(c)$ holds for general collage systems. We can prove the latter if it holds that $b$ grows only by a constant factor when we remove a prefix of $T$, but this is an open question.
% recall we can generate a text T*  concatenating all exp(A_i) and this has a left-to-right parse of size 2c; we then need a suffix of $T*, T = exp(A_c)$
We can even prove $z=O(c)$ for general collage systems if it holds that there is only a constant gap between $z$ for $T$ and for its reverse, which is another open question. We have also
no upper bounds on $r$ in terms of other measures, for example, can 
$r$ be more than $O(\log n)$ times larger than $z$ or $g$? It might also
be that our Theorem~\ref{thm:rlcfg} can be proved without using run-length 
rules, then yielding $g = O(b\log(n/b))$.  %Poles

Another interesting line of work is that of optimal ordered parses,
which can be built efficiently and compete with $z$, which has been the 
gold-standard approximation for decades. Are there other convenient parses
apart from our lex-parse? In particular, are there parses that can compete 
with $z$ while offering efficient random access time to $T$? Right now, only 
parses of size $O(g)$ (and $O(g_{rl})$ \cite{CEKNP19}) allow for efficient 
($O(\log n)$ time) access to $T$; all the other measures need a logarithmic
blowup in space to support efficient access 
\cite{BGGKOPT15,BLRSRW15,BPT15,SG06,GGKNP12,gagie2014lz77,GNP18soda}. 
This is also crucial to build small and efficient compressed indexes on
$T$ \cite[Sec.~13.2]{Nav16}.

\section*{Acknowledgements}
This work was partially funded by Basal Funds FB0001, Conicyt, by Fondecyt
Grant 1-170048, Chile, by the Millennium Institute for Foundational Research on Data (IMFD), by the Danish Research Council Fund DFF-4005-00267, and by the project MIUR-SIR CMACBioSeq, grant n.~RBSI146R5L.

\bibliographystyle{plain}
\bibliography{paper}

\appendices

\section{A separation between $b$ and $v$} \label{sec:fib}

In this section we prove  that $b \le 4$ for all Fibonacci words, and then that $v= \Omega(\log n)$ on the odd Fibonacci words. We first state a couple of results on Fibonacci words $F_k$.

\begin{lemma}
  \label{lem:near}
  For each $k \ge 5$, it holds that
  $F_{k-1}F_{k-2} = H_k ba$ and $F_{k-2}F_{k-1} = H_k ab$ if $k$ is even, and
  $F_{k-1}F_{k-2} = H_k ab$ and $F_{k-2}F_{k-1} = H_k ba$ if $k$ is odd.
 Note that $|H_k| = f_k-2$.
\end{lemma}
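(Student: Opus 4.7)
My plan is to proceed by strong induction on $k$, using the recursive definition $F_k = F_{k-1} F_{k-2}$ to relate the two concatenations at level $k+1$ to the two concatenations at level $k$.

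\emph{Base case.} I would verify the statement directly for $k=5$. We have $F_3 = ab$, $F_4 = aba$, so $F_4 F_3 = abaab$ and $F_3 F_4 = ababa$. These agree on the prefix $H_5 = aba$ (of length $f_5 - 2 = 3$) and differ in their last two symbols. Since $k=5$ is odd, the endings match the claimed pattern $ab$ and $ba$ respectively.

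\emph{Inductive step.} Assume the lemma holds for some $k \ge 5$; I would establish it for $k+1$. The key algebraic manipulation is
\[
F_k F_{k-1} \;=\; F_{k-1} \cdot (F_{k-2} F_{k-1}) \qquad \text{and} \qquad F_{k-1} F_k \;=\; F_{k-1} \cdot (F_{k-1} F_{k-2}),
\]
so the two strings of length $f_{k+1}$ both begin with $F_{k-1}$ and then differ only in how their tail pieces $F_{k-2}F_{k-1}$ vs.\ $F_{k-1}F_{k-2}$ behave. The inductive hypothesis says precisely that these tail pieces share the common prefix $H_k$ of length $f_k-2$ and differ only in their final two symbols, with the ordering of $ab$ vs.\ $ba$ dictated by the parity of $k$. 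Factoring out, I would set
\[
H_{k+1} \;=\; F_{k-1} \cdot H_k,
\]
whose length is $f_{k-1} + (f_k-2) = f_{k+1}-2$, as required. The final two symbols of $F_k F_{k-1}$ and $F_{k-1} F_k$ are therefore exactly the respective final two symbols of $F_{k-2}F_{k-1}$ and $F_{k-1}F_{k-2}$, which by IH are swapped with respect to one another.

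\emph{Parity tracking.} The only delicate point is that the roles of $F_{k-1}F_{k-2}$ and $F_{k-2}F_{k-1}$ are interchanged when we pass from level $k$ to level $k+1$: the string $F_kF_{k-1}$ inherits its last two symbols from $F_{k-2}F_{k-1}$, while $F_{k-1}F_k$ inherits them from $F_{k-1}F_{k-2}$. This interchange exactly matches the parity flip from $k$ to $k+1$ in the statement. Concretely, if $k$ is even then by IH $F_{k-2}F_{k-1}$ ends in $ab$ and $F_{k-1}F_{k-2}$ ends in $ba$, which gives $F_kF_{k-1} = H_{k+1}ab$ and $F_{k-1}F_k = H_{k+1}ba$, matching the odd-$(k+1)$ case of the lemma; the odd $k$ case is symmetric. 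This bookkeeping, rather than any difficult identity, is where I expect the only real care to be needed.
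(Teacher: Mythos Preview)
Your inductive argument is correct: the decomposition $F_kF_{k-1}=F_{k-1}(F_{k-2}F_{k-1})$ and $F_{k-1}F_k=F_{k-1}(F_{k-1}F_{k-2})$ is exactly the right identity, the definition $H_{k+1}=F_{k-1}H_k$ gives the correct length, and your parity bookkeeping is accurate.

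The paper takes a different route. Rather than carrying out the induction, it observes (by an easy induction) that $F_k$ ends in $ab$ when $k$ is odd and in $ba$ when $k$ is even, and then simply \emph{cites} Pirillo's lemma for the ``near-commutativity'' fact that $F_{k-1}F_{k-2}$ and $F_{k-2}F_{k-1}$ agree except for a swap of their last two letters. Your proof is thus a self-contained reconstruction of that cited result, which is arguably more satisfying in context; the paper's version is shorter only because it outsources the work. Note also that ordinary (not strong) induction suffices, since you only use the hypothesis at level $k$.
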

\begin{proof}
It is easy to see by induction that $F_k = F_{k-1}F_{k-2}$ finishes with $ab$ 
if $k$ is odd and with $ba$ if $k$ is even. The fact that $F_{k-1}F_{k-2} = 
H_k xy$ and $F_{k-2}F_{k-1} = H_k yx$ was proved by Pirillo \cite[Lem.~1]{P97}.
\end{proof}

\begin{lemma}
  \label{lem:noinside}
$F_{k-1}$ only appears at position $1$ in $F_k$.
\end{lemma}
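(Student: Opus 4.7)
The approach is strong induction on $k$. Base cases $k \leq 5$ are verified by direct inspection of $F_3 = ab$, $F_4 = aba$, and $F_5 = abaab$. For the inductive step with $k \geq 6$, suppose towards a contradiction that $F_{k-1}$ occurs at position $i > 1$ in $F_k = F_{k-1} F_{k-2}$. The length constraint $i + f_{k-1} - 1 \leq f_k$ forces $i \leq f_{k-2}+1$, so the occurrence straddles the factor boundary of the decomposition of $F_k$. I split into two subcases.

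For the boundary case $i = f_{k-2}+1$, the supposed occurrence is precisely the length-$f_{k-1}$ suffix of $F_k$. Expanding $F_k = F_{k-1} F_{k-2} = F_{k-2} F_{k-3} F_{k-2}$ via $F_{k-1} = F_{k-2} F_{k-3}$, this suffix equals $F_{k-3} F_{k-2}$. Equating it to $F_{k-1} = F_{k-2} F_{k-3}$ would give $F_{k-3} F_{k-2} = F_{k-2} F_{k-3}$, which is false by Lemma~\ref{lem:near} applied at index $k-1$ (valid for $k \geq 6$): the two products share their first $f_{k-1} - 2$ characters but disagree on the final two.

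For the interior case $1 < i \leq f_{k-2}$, I align the occurrence of $F_{k-1}$ at position $1$ with the claimed occurrence at position $i$ on their overlap $[i, f_{k-1}]$, which lies entirely inside $F_k[1 \dd f_{k-1}] = F_{k-1}$. Matching the two copies yields $F_{k-1}[s] = F_{k-1}[s + (i-1)]$ for all $1 \leq s \leq f_{k-1} - i + 1$, i.e., $F_{k-1}$ has period $p = i-1$ with $1 \leq p < f_{k-2}$. The contradiction must come from the auxiliary fact that the smallest period of $F_{k-1}$ is exactly $f_{k-2}$, or equivalently, that its longest proper border is $F_{k-3}$.

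The main technical obstacle is this auxiliary border/period claim; my plan is to bundle it into the same induction as Lemma~\ref{lem:noinside}. The inductive step for the border property assumes a proper border of $F_k$ of length $\ell > f_{k-2}$ and splits on whether $\ell \leq f_{k-1}$ or not. Carefully matching prefixes and suffixes in $F_k = F_{k-2}F_{k-3}F_{k-2} = F_{k-1}F_{k-2}$, one subcase produces an occurrence of $F_{k-2}$ at some position $\ell - f_{k-2}+1 > 1$ inside $F_{k-1}$ (contradicting Lemma~\ref{lem:noinside} at $k-1$), while the other produces a border of $F_{k-1}$ of length $\ell - f_{k-2} > f_{k-3}$ (contradicting the border property at $k-1$). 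This closes both inductions simultaneously.
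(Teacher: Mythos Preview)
Your argument is correct, but it takes a genuinely different route from the paper. The paper does not set up an induction at all: it first derives the identity $F_k = F_{k-2}F_{k-2}F_{k-5}F_{k-4}$ (Eq.~(\ref{eq2})), rules out the boundary case $p=f_{k-2}+1$ via Lemma~\ref{lem:near} just as you do, and then observes that a second occurrence of $F_{k-1}$ starting at some $1<p\le f_{k-2}$ would place a third copy of $F_{k-2}$ inside the prefix $F_{k-2}F_{k-2}$, hence two distinct cyclic rotations of $F_{k-2}$ would coincide. It then invokes the known fact that all cyclic rotations of a Fibonacci word are distinct~\cite{D95} to reach a contradiction.

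Your approach replaces that external citation by a self-contained simultaneous induction on the pair ``$F_{k-1}$ occurs only at position~$1$ in $F_k$'' and ``the smallest period of $F_k$ is $f_{k-1}$'' (equivalently, its longest proper border is $F_{k-2}$). The reductions you sketch are sound: in the interior case $1<i\le f_{k-2}$ the overlap of the two occurrences gives $F_{k-1}$ period $i-1<f_{k-2}$, contradicting the period claim at level $k-1$; for the border claim, a border of length $f_{k-2}<\ell\le f_{k-1}$ forces $F_{k-2}$ to occur at position $\ell-f_{k-2}+1>1$ inside $F_{k-1}$, and a border of length $f_{k-1}<\ell<f_k$ restricts to a proper border of $F_{k-1}$ of length $\ell-f_{k-2}>f_{k-3}$. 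The only thing you leave implicit is that the base cases of the auxiliary border claim (e.g., $F_3=ab$, $F_4=aba$, $F_5=abaab$ have longest proper borders of lengths $0,1,2=f_1,f_2,f_3$) must also be checked; this is immediate. Your proof is longer but more elementary, since it avoids the dependence on~\cite{D95}; the paper's proof is shorter but leans on that outside result.
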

\begin{proof}
Consider the following derivation (which is also used later), obtained by
applying Def.~\ref{def:fibonacci} several times:
\begin{eqnarray} 
F_k & = & F_{k-1}F_{k-2} \nonumber \\
    & = & F_{k-2}F_{k-3}F_{k-2} \label{eq1} \\
    & = & F_{k-2}F_{k-3}F_{k-3}F_{k-4} \label{eq3} \\
    & = & F_{k-2}F_{k-3}F_{k-4}F_{k-5}F_{k-4} \nonumber \\
    & = & F_{k-2}F_{k-2}F_{k-5}F_{k-4}. \label{eq2}
\end{eqnarray}

  Assume, by contradiction,
  that $F_{k-1}$ appears in two different positions
  inside $F_k$.
  From Eq.~(\ref{eq2}), we have that
  $F_n = F_{k-2}F_{k-2}F_{k-5}F_{k-4}$.
  Also, no occurrence of $F_{k-1}$ can start after
  position $f_{k-2}$ in $F_k$ (because it would exceed $F_k$ unless it starts at
$p=f_{k-2}+1$, but this is also outruled because 
$F_k = F_{k-1}F_{k-2} \not= F_{k-2}F_{k-1}$ by Lemma~\ref{lem:near}).
  Thus, the second occurrence of $F_{k-1}$ must
  start at a position $p \le f_{k-2}$.
  Then, by Eq.~(\ref{eq2}) again, there is a third occurrence of $F_{k-2}$ 
  within $F_{k-2}F_{k-2}$, which means that
  $F_{k-2}$ appears twice in the circular rotations of $F_{k-2}$.
  Yet, this is a contradiction because 
  all the circular rotations on the Fibonacci words are
  different~\cite[Cor.~3.2]{D95}.
\end{proof}

\begin{lemma}
  \label{lem:bms}
  Every word $F_k$ has a bidirectional scheme of size $b \le 4$.
\end{lemma}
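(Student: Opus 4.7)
The plan is to construct, for every Fibonacci word $F_k$, an explicit $4$-phrase bidirectional scheme. For $k \le 4$ the string has length at most $3$ and the bound is trivial; the hard case is $k \ge 5$, for which I propose a uniform construction.

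Define the partition $P_1 \mid P_2 \mid P_3 \mid P_4$ by $P_1 = F_k[1\dd f_{k-1}-2]$ (copied from $F_k[f_{k-2}+1\dd f_k-2]$, a shift by $+f_{k-2}$), single explicit symbols $P_2 = F_k[f_{k-1}-1]$ and $P_3 = F_k[f_{k-1}]$, and $P_4 = F_k[f_{k-1}+1\dd f_k] = F_{k-2}$ (copied from $F_k[1\dd f_{k-2}]$, the first occurrence of $F_{k-2}$). The $P_4$-copy is immediate from $F_k = F_{k-2} F_{k-3} F_{k-2}$. The $P_1$-copy uses Lemma~\ref{lem:near}: the prefix $F_k[1\dd f_{k-1}-2] = F_{k-1}[1\dd f_{k-1}-2]$ is the first $f_{k-1}-2$ characters of $F_{k-2}F_{k-3} = F_{k-1}$, while $F_k[f_{k-2}+1\dd f_k-2]$ (spelled by the middle piece $F_{k-3}$ followed by a prefix of the final $F_{k-2}$) is the first $f_{k-1}-2$ characters of $F_{k-3}F_{k-2}$; by Lemma~\ref{lem:near}, both equal the common length-$(f_{k-1}-2)$ word $H_{k-1}$. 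For $k=6$ this specialises to the small-case scheme $F_6 = aba \mid a \mid b \mid aba$, which I would first check as a warm-up.

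The main obstacle is verifying acyclicity of the induced function $f$. Since $f$ applies the shift $+f_{k-2}$ on positions in $P_1$ and the shift $-f_{k-1}$ on positions in $P_4$, a single step from a position $j \in \{1, \ldots, f_{k-1}-2\}$ either stays in $P_1$ (when $j \le f_{k-3}-2$), lands on an explicit $P_2$ or $P_3$ (when $j \in \{f_{k-3}-1, f_{k-3}\}$, terminating the chain), or crosses into $P_4$ (when $j \ge f_{k-3}+1$) and returns to $P_1$ one step later with $j$ decremented by $f_{k-3}$. The resulting trajectory is a Euclidean-type reduction alternating increments of $+f_{k-4}$ and decrements of $-f_{k-3}$; since $\gcd(f_{k-3}, f_{k-4}) = 1$, every trajectory must eventually hit one of the terminating positions in $\{f_{k-3}-1, f_{k-3}\}$. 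The hardest step of the proof will be formalising this termination, which I expect to do by defining a potential on positions tied to their Zeckendorf representation that strictly decreases along every $f$-step not ending immediately at an explicit symbol.
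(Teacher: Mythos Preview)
Your four-phrase partition is exactly the paper's: $P_1=F_k[1\dd f_{k-1}-2]$ with source $F_k[f_{k-2}+1\dd f_k-2]$, two explicit symbols, and $P_4=F_{k-2}$. The only structural difference is the source you pick for $P_4$: you copy from $F_k[1\dd f_{k-2}]$ (shift $-f_{k-1}$), whereas the paper copies from $F_k[f_{k-2}+1\dd 2f_{k-2}]$ (shift $-f_{k-3}$, using $F_k=F_{k-2}F_{k-2}F_{k-5}F_{k-4}$). Both are legitimate; the correctness of the copies is argued identically via Lemma~\ref{lem:near}.

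Where you diverge more substantially is the acyclicity proof, and here you are making your life harder than necessary. First, your ``alternating $+f_{k-4}$ and $-f_{k-3}$'' description is not accurate: after a $-f_{k-3}$ round trip you may land at a position still $\ge f_{k-3}+1$, so several $-f_{k-3}$ moves can occur in a row before the next $+f_{k-2}$ step. So the coprimality of $f_{k-3}$ and $f_{k-4}$ does not directly give termination, and a Zeckendorf potential, while probably workable, is heavy. A far simpler argument is available with \emph{your} choice of sources: every application of $f$ adds $f_{k-2}$ modulo $f_k$ (since $-f_{k-1}\equiv f_{k-2}\pmod{f_k}$), and $\gcd(f_{k-2},f_k)=1$, so any putative cycle would have length exactly $f_k$; but only $f_k-2$ positions are non-explicit, a contradiction. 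The paper's version of this counting argument (with its shift $-f_{k-3}$) is that a cycle needs $x f_{k-2}=y f_{k-3}$, hence length $\ge f_{k-1}$, and then observes that the first $f_{k-2}$ positions can never be revisited, leaving only $f_{k-1}-2$ candidates. Either way, no potential function is needed.
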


\begin{proof}
  Up to $k=4$ we have $|F_k| \le 3$, so the claim is trivial. For $F_5 =
abaab$ we can copy the last $ab$ from the first to have $b=4$. For $k \ge 6$,
consider the following partition of $F_k = F_{k-1}F_{k-2}$ into $4$ chunks:
  \begin{enumerate}
  \item The first chunk is $B_1 = F_k[1\dd f_{k-1}-2]$
    (i.e., all the symbols of $F_{k-1}$ except the last two).
  \item The second and third chunks are explicit symbols  
    ($B_2 = F_k[f_{k-1}-1] = b$ and $B_3 = F_k[f_{k-1}] = a$,
    if $k$ is even, and
    $B_2 = F_k[f_{k-1}-1] = a$ and $B_3 = F_k[f_{k-1}] = b$,
    if $k$ is odd).
  \item The fourth chunk is $B_4 = F_k[f_{k-1}+1\dd f_k]$
    (i.e., all the symbols of $F_{k-2}$).
  \end{enumerate}
  
  The source of the first chunk, $B_1$, is $F_k[f_{k-2}+1\dd f_k-2]$, and
  the source of the fourth chunk, $B_4$, is $F_k[f_{k-2}+1\dd 2f_{k-2}]$.
  Note that the sources of $B_1$ and $B_4$ start at the same
  position.
  We now prove that this is a valid bidirectional scheme.
  
  First, we prove that $B_1$ and $B_4$ are equal to their sources.
  By Eq.~(\ref{eq2}), $F_k = F_{k-2}F_{k-2}F_{k-5}F_{k-4}$, so there is an 
occurrence of $F_{k-2}$ starting at position $f_{k-2}+1$ of $F_k$.
  Hence, $B_4 = F_k[f_{k-2}+1\dd 2f_{k-2}]$.
  Further, by Eq.~(\ref{eq1}), we have that $F_k = F_{k-2}F_{k-3}F_{k-2}$, 
and from Lemma~\ref{lem:near}
  we have that $B_1= H_{k-1} = F_k[f_{k-2}+1\dd f_k-2]$.

  Thus, the sources of $B_1$ and $B_4$ are correctly defined. We now prove there
are no cycles. Our bidirectional scheme defines the
  function $f: [1\dd f_k] \rightarrow [1\dd f_k] \cup \{-1\}$ as follows:

  \[
    f(p) =
    \begin{cases}
      -1, &\quad\text{if } p = f_{k-1}-1 \text{ or }  p = f_{k-1}  \\
      p + f_{k-2}, &\quad\text{if } p < f_{k-1}-1 \\
      p - f_{k-3}, &\quad\text{if } p > f_{k-1}
    \end{cases}
  \]

  Assume that $f$ has cycles and that a shortest one starts at position $p$.
  Successive applications of $f$ either increase the current position by
  $f_{k-2}$ or decrease the current position by $f_{k-3}$. So, a cycle
  starting at position  $p$ means that $p + xf_{k-2} - yf_{k-3} = p$,
  where $x + y$ is the number of times $f$ was applied; note $x,y>0$
  This is equivalent to $xf_{k-2} = yf_{k-3}$.
  Since $f_{k-2}$ and $f_{k-3}$ are coprime\footnote{Applying Euclid's 
algorithm, we have $\gcd(f_{k-2},f_{k-3})=\gcd(f_{k-3},f_{k-2}-f_{k-3}) =
\gcd(f_{k-3},f_{k-4})$, which is traced down to $\gcd(f_2,f_1)=1$.},
  $f_{k-3}$ divides $x$ and $f_{k-2}$ divides $y$.
  Thus, $x \ge f_{k-3}$, $y \ge f_{k-2}$, and $x+y \ge f_{k-1}$.
  The number of positions involved in a cycle is then at least $f_{k-1}$, and
they must all be different because the cycle is minimal. 
Yet, the first $f_{k-2}$ positions of $F_k$
  cannot be involved in any cycle: once $f$ is applied in one
  of the first $f_{k-2}$ positions there is not way to get back there.
  So, we are left with $f_{k-1}-2$ positions to be involved in a
  cycle, because $f(f_{k-1}-1) = f(f_{k-1}) = -1$.
  That is a contradiction. 
\end{proof}

Before delving into the proof of the lower bound that relates
$v$ and $b$, we prove two further properties of the Fibonacci words we make 
use of.

\begin{lemma}
\label{lem:bbaaa}
The strings $bb$, $aaa$, and $ababab$ never occur within a Fibonacci word.
\end{lemma}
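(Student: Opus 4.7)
The plan is to handle the three forbidden factors in two stages: first dispose of $bb$ and $aaa$ together by a straightforward induction on $k$ that exploits the recursive concatenation $F_k = F_{k-1}F_{k-2}$, and then dispose of $ababab$ by a morphism-pullback argument that reduces its occurrence to an occurrence of $aaa$, which the previous step has already forbidden.

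For the absence of $bb$ and $aaa$, I would verify the base cases $k \le 5$ by direct inspection of $F_1=b$, $F_2=a$, $F_3=ab$, $F_4=aba$, $F_5=abaab$. For $k \ge 6$, the inductive step amounts to checking the boundary between $F_{k-1}$ and $F_{k-2}$ inside $F_k=F_{k-1}F_{k-2}$, because by induction neither factor contains $bb$ or $aaa$ internally. Using the standard fact (easily provable by induction, and essentially contained in Lemma~\ref{lem:near}) that, for $k \ge 3$, $F_k$ starts with $ab$ and ends with $ab$ or $ba$ according to the parity of $k$, the $4$-symbol window straddling the boundary is either $ab\cdot ab = abab$ or $ba\cdot ab = baab$. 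Neither window contains $bb$ or $aaa$, so the induction goes through.

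For the absence of $ababab$, I would use the morphism $\sigma(a)=ab$, $\sigma(b)=a$, for which $F_k = \sigma(F_{k-1})$ for all $k\ge 3$. Because $F_k$ contains no $bb$, every occurrence of $b$ in $F_k$ is immediately preceded by $a$, so $F_k$ admits a unique block-parse into blocks $ab$ (images of an $a$ in $F_{k-1}$) and $a$ (images of a $b$ in $F_{k-1}$), obtained by pairing each $b$ with the $a$ directly preceding it. Now suppose $ababab$ appears in $F_k$ at position $p$, with $k\ge 3$. The three $b$'s at $p{+}1,p{+}3,p{+}5$ pair, in the block-parse, with the $a$'s at $p,p{+}2,p{+}4$, producing three consecutive blocks $ab\mid ab\mid ab$ covering exactly positions $p,\ldots,p{+}5$. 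These blocks correspond to three consecutive symbols $aaa$ in $F_{k-1}$, contradicting the result of the previous paragraph. Base cases $k \le 2$ are trivial since $|F_k|<6$.

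The only mildly delicate point is the uniqueness of the block-parse and checking that the three blocks inside $ababab$ are correctly aligned (rather than, say, being shifted by one so that the first $a$ belongs to a preceding $ab$-block). This is handled by the observation that each $b$ is paired with the $a$ immediately before it, so the blocks within $ababab$ must start exactly at $p$, $p{+}2$, $p{+}4$; neither a block starting at $p-1$ nor one starting at $p+1$ is compatible with the letter $b$ at $p+1$. Once this alignment is nailed down, the rest is a short deduction, so I expect the write-up to be essentially a paragraph per case.
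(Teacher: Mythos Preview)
Your argument for $bb$ and $aaa$ is essentially identical to the paper's: both do induction on $k$, use that $F_k$ (for $k\ge 3$) starts with $ab$ and ends with $ab$ or $ba$, and check the four-symbol window around the concatenation boundary.

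For $ababab$ you take a genuinely different route. The paper repeats the same boundary-induction idea with a longer window: it records that for $k\ge 5$ every $F_k$ starts with $abaa$ and ends with $baab$ (odd~$k$) or $baba$ (even~$k$), and then checks that no length-$6$ factor crossing the join $F_{k-1}\mid F_{k-2}$ can equal $ababab$. Your morphism-pullback via $\sigma(a)=ab$, $\sigma(b)=a$ is correct and arguably slicker: the absence of $bb$ already forces the unique $\{ab,a\}$-block parse of $F_k=\sigma(F_{k-1})$, the three $b$'s of a hypothetical $ababab$ pin down three consecutive $ab$-blocks exactly as you describe, and these desubstitute to $aaa$ in $F_{k-1}$, which you have just excluded. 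What your approach buys is conceptual economy (you reduce the third case to the second instead of redoing the induction with longer prefix/suffix data); what the paper's approach buys is uniformity (one mechanism for all three forbidden factors). Both are sound, and the alignment worry you flag is indeed the only subtle point, which your ``each $b$ pairs with the immediately preceding $a$'' observation settles.
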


%b,a,ab,aba,abaab,abaababa,abaababaabaab
\begin{proof}
It is easy to see that all $F_k$, for $k \ge 3$, start with $ab$. Further, by
Lemma~\ref{lem:near}, they end with $ab$ or $ba$. Then the lemma for $bb$ and
$aaa$ easily follows by induction because, when concatenating $F_k = F_{k-1} 
F_{k-2}$, the new substrings of length 3
we create are substrings of $abab$ or $baab$.
For the third string we easily see that, for $k \ge 5$, every $F_k$ starts
with $abaa$ and ends with $baab$ (odd $k$) or $baba$ (even $k$). Thus, as 
before, it is impossible to form $ababab$ when concatenating any $F_{k-1}$ 
with $F_{k-2}$.
\end{proof}

\begin{lemma}
  \label{lem:sub}
  Given a Fibonacci word $F_k$, for all $4 \le i \le k$,
  every factor $W_i$ of $F_k$ of length
  $f_i$ that begins with $F_{i-1}$ has only
  two possible forms, $W_i = F_{i-1}F_{i-2}$ or $W_i = F_{i-2}F_{i-1}$.
\end{lemma}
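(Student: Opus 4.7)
The plan is to prove the lemma by strong induction on $i$.

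For the base case $i = 4$, direct enumeration suffices: a length-$f_4 = 3$ factor of $F_k$ beginning with $F_3 = ab$ must have its third symbol equal to $a$, since $abb$ contains the forbidden substring $bb$ by Lemma \ref{lem:bbaaa}. Hence $W_4 = aba = F_3 F_2 = F_{i-1}F_{i-2}$. The alternative putative form $F_{i-2}F_{i-1} = F_2 F_3 = aab$ does not begin with $F_3 = ab$, so the hypothesis of the lemma already excludes it, and in this base case the two cases collapse to one.

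For the inductive step, I would use the Fibonacci substitution $\mu: a \mapsto ab, b \mapsto a$, which satisfies $\mu(F_j) = F_{j+1}$. Since $F_{i-1} = \mu(F_{i-2})$ begins with $ab$, every occurrence of $F_{i-1}$ in $F_k$ aligns with a $\mu$-block boundary on the left: using Lemma \ref{lem:bbaaa}, every $b$ in $F_k$ is preceded by $a$, and these $ab$-pairs are precisely the $\mu$-blocks. Extending $W_i$ on the right by at most one symbol to complete a possibly truncated $ab$-block yields $\hat W_i = \mu(W'_i)$ for some factor $W'_i$ of $F_{k-1}$ beginning with $F_{i-2}$. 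A counting argument using $|\mu(W'_i)| = |W'_i| + |W'_i|_a$ together with the known density of $a$'s in Fibonacci factors forces $|W'_i| \in \{f_{i-1}, f_{i-1}+1\}$. Applying the induction hypothesis to the length-$f_{i-1}$ prefix of $W'_i$ yields that this prefix is $F_{i-2}F_{i-3}$ or $F_{i-3}F_{i-2}$; passing through $\mu$ gives $F_{i-1}F_{i-2}$ or $F_{i-2}F_{i-1}$, each of length $f_i$. Since $W_i$ is the length-$f_i$ prefix of $\hat W_i$, we conclude $W_i \in \{F_{i-1}F_{i-2}, F_{i-2}F_{i-1}\}$.

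The main obstacle is tracking the right-boundary alignment under the inverse substitution: the right endpoint of $W_i$ in $F_k$ need not sit at a $\mu$-block boundary, so one must verify that an at-most-one-symbol extension suffices and that the resulting $W'_i$ is long enough to apply the induction hypothesis to its length-$f_{i-1}$ prefix. Fibonacci words are Sturmian, so the number of $a$'s in any length-$n$ factor of $F_k$ lies within a range of width one around $n/\varphi$ (with $\varphi$ the golden ratio); this density constraint pins $|W'_i|$ to $f_{i-1}$ or $f_{i-1}+1$ and rules out other lengths, closing the induction.
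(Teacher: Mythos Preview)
Your approach via the Fibonacci morphism $\mu$ is a natural idea and genuinely different from the paper's direct combinatorial induction, but it has a real gap at the step where you assert that $W'_i$ begins with $F_{i-2}$. That inference relies on ``$\mu(W'_i)$ begins with $\mu(F_{i-2})$, hence $W'_i$ begins with $F_{i-2}$,'' which is false for $\mu$: for instance $\mu(aab)=ababa$ begins with $\mu(ab)=aba$, yet $aab$ does not begin with $ab$. Concretely, take $k=6$, $i=5$, and $W_5 = F_6[4\dd 8]=ababa$, which indeed begins with $F_4=aba$. The $\mu$-block decomposition of $F_6=\mu(F_5)=\mu(abaab)$ is $ab\,|\,a\,|\,ab\,|\,ab\,|\,a$, so $W_5$ already starts and ends at block boundaries, $\hat W_5=W_5$, and $W'_5=F_5[3\dd 5]=aab$. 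This $W'_5$ does not begin with $F_3=ab$, so the induction hypothesis for $i-1=4$ cannot be applied to its length-$f_4$ prefix; your argument as written would produce only $W_5=F_4F_3=abaab$ and miss the legitimate form $W_5=F_3F_4=ababa$.

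The failure is structural rather than a detail of the counting: whenever $i$ is odd, $F_{i-1}$ ends in $a$ and its standalone $\mu$-decomposition ends with a singleton $a$-block, but inside $F_k$ that trailing $a$ may merge with a following $b$ into an $ab$-block, so the block boundary you need at position $f_{i-1}$ inside $W_i$ need not exist. You correctly flagged the right-boundary issue for $W_i$ itself, but the same issue arises at the right end of the $F_{i-1}$ \emph{prefix}, and that is what invalidates the claimed prefix $F_{i-2}$ of $W'_i$. The paper sidesteps this entirely by never passing to $\mu^{-1}$: it writes $W_{i+1}=F_{i-1}F_{i-2}G_{i-1}$ and applies the strong induction hypothesis directly to the length-$f_{i-1}$ factor of $F_k$ beginning at position $f_{i-1}+1$ (which starts with $F_{i-2}$), then applies it once more and rules out the remaining bad subcases using the fourth-power-freeness of Fibonacci words.
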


\begin{proof}
  We use strong induction on $i$. For the base cases $i=4$ and $i=5$,
  we use the substrings $bb$ and $aaa$ excluded by Lemma~\ref{lem:bbaaa}:
  If $i=4$, then $f_4 = 3$, and $F_3 = ab$. Then, any factor $W_4$ of $F_k$
  of length 3 that begins with $ab$ can only be $W_4 = aba = F_3F_2$.
  If $i=5$, then $f_5 = 5$, and $F_4 = aba$. Then, any factor $W_5$ of $F_k$
  of length 5 that begins with $aba$ can only be equal to $W_5 = abaab = F_4F_3$
  or $W_5 = ababa = F_3F_4$.

  Assume now by induction that, for all $i \ge 4$, every factor 
  $W_i$ of $F_k$ of length
  $f_i$ that begins with $F_{i-1}$ has only two possible forms,
  $W_i = F_{i-1}F_{i-2}$ or $W_i = F_{i-2}F_{i-1}$. We now prove that
  every factor $W_{i+1}$ of $F_k$, of length $f_{i+1}$ and beginning
  with $F_i$, has only two possible forms, $W_{i+1} = F_iF_{i-1}$ or
  $W_{i+1} = F_{i-1}F_i$.

  The factor $W_{i+1}$ is equal to $F_iG_{i-1}$, where $G_x$ will stand for
any string of length $f_x$. 
  Thus, $W_{i+1} = F_{i-1}F_{i-2}G_{i-1}$.
  Since $|F_{i-2}G_{i-1}| = f_i > f_{i-1}$, we can apply the induction 
hypothesis to the first $f_{i-1}$ symbols of this substring.
  Two outcomes are then possible: (i) $W_{i+1} = F_{i-1}F_{i-2}F_{i-3}G_{i-2}$
  or (ii) $W_{i+1} = F_{i-1}F_{i-3}F_{i-2}G_{i-2}$.

  Case (i) implies $W_{i+1} = F_{i-1}F_{i-1}G_{i-2}$.
  By the induction hypothesis, $F_{i-1}G_{i-2} = F_{i-1}F_{i-2}$ or
  $F_{i-1}G_{i-2} = F_{i-2}F_{i-1}$.
  This implies $W_{i+1} = F_{i-1}F_i$ or $W_{i+1} = F_iF_{i-1}$. Thus, $W_{i+1}$
  has the desired form.

  In case  (ii), the suffix $F_{i-2}G_{i-2}$ of $W_{i+1}$ has length over 
$f_{i-1}$ and starts with $F_{i-2}$, so we can apply the induction hypothesis
to obtain subcases (a) $W_{i+1} = F_{i-1}F_{i-3}F_{i-2}F_{i-3}G_{i-4}$
  or (b) $W_{i+1} = F_{i-1}F_{i-3}F_{i-3}F_{i-2}G_{i-4}$.
  We now show that neither subcase is possible.
  In case (a), by Def.~\ref{def:fibonacci}, it holds that
  \begin{equation*}
    \begin{split}
      W_{i+1} & = F_{i-1}F_{i-3}F_{i-2}F_{i-3}G_{i-4} \\
      & = F_{i-2}F_{i-3}F_{i-3}F_{i-2}F_{i-3}G_{i-4} \\
      & = F_{i-2}F_{i-3}F_{i-3}F_{i-3}F_{i-4}F_{i-3}G_{i-4}.
    \end{split}
  \end{equation*}
  If $i+1=6$ or $7$, then $F_{i-3} = a$ or $ab$, and there would be
  3 consecutive occurrences of $a$ or $ab$ in $F_k$, contradicting
  Lemma~\ref{lem:bbaaa}.
  If $i+1 \ge 8$, then by Lemma~\ref{lem:near},
  $F_{i-4}F_{i-3}$ begins with $F_{i-3}$, and
  then there would be 4 consecutive occurrences of
  $F_{i-3}$ within $F_k$, contradicting the fact that
  Fibonacci words do not contain 4 consecutive repetitions
  of the same substring~\cite{Kah83}.
  In case (b), by Def.~\ref{def:fibonacci}, it holds that
  \begin{equation*}
    \begin{split}
      W_{i+1} & = F_{i-1}F_{i-3}F_{i-3}F_{i-2}G_{i-4} \\
      & = F_{i-2}F_{i-3}F_{i-3}F_{i-3}F_{i-2}G_{i-4} \\
      & = F_{i-2}F_{i-3}F_{i-3}F_{i-3}F_{i-3}F_{i-4}G_{i-4},
    \end{split}
  \end{equation*}
which also contains 4 occurrences of $F_{i-3}$ within $F_k$,
a contradiction again~\cite{Kah83}.
\end{proof}

\medskip\noindent{\bf Theorem \ref{thm:vb}.} {\em There is an infinity family of strings over an alphabet of size 2 for which $v = \Omega(b\log n)$.}

\begin{proof} 
  Such a family is formed by the \emph{odd} Fibonacci words, where $b=O(1)$ by
Lemma~\ref{lem:bms}.
  Specifically, we prove that the number of phrases in the lex-parse
  of the odd Fibonacci words forms an arithmetic progression with step 1.

  Let $F_k$ be an odd Fibonacci word with $k \ge 9$.
  We first prove that the length $\ell_1 = \LCP[\ISA[1]]$
  (see Def.~\ref{def:lex}) of the first phrase of
  the lex-parse of $F_k$ is $f_{k-1}-2$.
  From Eq.~(\ref{eq1}), we have that $F_k = F_{k-2}F_{k-3}F_{k-2}$,
  and from Lemma~\ref{lem:near}, we have that
  $F_k = H_{k-1}baF_{k-2} = F_{k-2}H_{k-1}ab$.
  Additionally, $H_{k-1}ab$ is lexicographically smaller than
  $H_{k-1}ba$ and they have a common prefix of length $f_{k-1}-2$.
  Thus, $\ell_1 \ge f_{k-1}-2$.
  We prove that there are no common prefixes of length
  greater than $f_{k-1}-2$ between $F_k$ and any of its suffixes.
Assume the prefix $P_{k-1}$ of length $f_{k-1}-1$ of $F_{k-1}$ appears in $F_k$.
By the proof of Lemma~\ref{lem:bbaaa}, $F_k$ finishes with $baab$ and $F_{k-1}$
finishes with $baba$. Then $P_{k-1}$ finishes with $bab$ and $F_k$ finishes with
$aab$, so $P_{k-1}$ is not a suffix of $F_k$. Also, $b$ can only be followed by
$a$ within $F_k$, by Lemma~\ref{lem:bbaaa}. Hence, if there is an occurrence of
$P_{k-1}$ within $F_k$, then there is also an occurrence of $F_{k-1}$. Yet, the
only occurrence of $F_{k-1}$ in $F_k$ is at the beginning, by
Lemma~\ref{lem:noinside}. Therefore, it is also impossible to find an occurrence
of length $f_{k-1}$ or more.
  
  Next, we prove that the length $\ell_2 = \LCP[\ISA[f_{k-1}-1]]$
  of the second phrase of the lex-parse of $F_k$ is $f_{k-4}+2$.
  By Eq.~(\ref{eq1}), we have that $F_{k-2} = F_{k-4}F_{k-5}F_{k-4}$.
  Since $F_{k-5}$ finishes with $ba$, $baF_{k-4}$ is a prefix and a suffix of 
  $baF_{k-2}$.
Since the suffix is followed by \$, it is lexicographically
smaller than the prefix. Further, since the second phrase starts with the prefix
$baF_{k-4}$, we have $\ell_2 \ge f_{k-4}+2$. We now show that the second phrase
is not longer.

  By the characterization of the
  Fibonacci words of Mantaci et al.~\cite[Thm.~6]{MRS07},
  and the ordering of the cyclic rotations of the
  Fibonacci words stated in there~\cite[proof of Thm.~9]{MRS07},
  the lexicographically smallest cyclic rotation of $F_k$
  is the one that starts at position $x+1$, where
  $x < f_{k}$ is the unique solution to the congruence equation
  $f_{k-2} - 1 + xf_{k-2} \equiv 0 ~(\bmod f_k)$\footnote{Using the notation of Lemma~\ref{lem:smallest}, $R_{f_{k-2}-1}$ is the odd Fibonacci word $F_k$ of length $f_{k-1}+f_{k-2}$, and $R_0$ is the smallest cyclic rotation of $F_k$. 
  Thus, after $x$ applications of $\varrho$ starting at $f_{k-2}-1$,  
  we get the first symbol of $R_0$
  from the first symbol of $F_k$ (i.e., $\varrho^x(f_{k-2}-1) = 0$).}.
  Using Cassini's identity,
  $f_kf_{k-2} - f^2_{k-1} = 1$~\cite{GKP94},
we replace $f_k = f_{k-1}+f_{k-2}$ to get
$f_{k-1} f_{k-2} + f_{k-2}^2 - f_{k-1}^2 =
 f_{k-1} f_{k-2} + (f_{k-2}+f_{k-1})(f_{k-2}-f_{k-1}) = 
 f_{k-1} f_{k-2} + f_k(f_{k-2}-f_{k-1}) = 1$. This implies
  $f_{k-1}f_{k-2} \equiv 1 ~(\bmod f_k)$.
  Thus, $x$ is equal to $f_{k-1}-1$, and the
  the lexicographically smallest cyclic rotation of $F_k$
  starts at position $f_{k-1}$.

  This means that the second phrase of the lex-parse of $F_k$ starts
  one position before the lexicographically smallest cyclic rotation of $F_k$.
  So, now considering the terminator \$,
if a suffix $S$ of $F_k$ is lexicographically smaller than
  $F_{k}[f_{k-1}-1\dd ] = baF_{k-2}$ (i.e., the suffix that starts
  at the beginning of the second phrase of the lex-parse of $F_k$)
  and both share a common prefix $P$, then $S=P$ and $|S| < f_{k-2}+2$.
  Let us prove that $baF_{k-4}$ is the largest string that
  is a prefix and a suffix of $baF_{k-2}$.

  The string $F_{k-4}$ only occurs at positions $1, f_{k-4}+1,$ and
  $f_{k-3}+1$ within $F_{k-2}$:
  By Eq.~(\ref{eq2}), we have that
  $F_{k-2} = F_{k-4}F_{k-4}F_{k-7}F_{k-6}$.
  There are no occurrences of $F_{k-4}$ at positions 
$1 < p \le f_{k-4}$, by the same argument of Lemma~\ref{lem:noinside}.
  By Eq.~(\ref{eq3}), we also have that
  $F_{k-2} = F_{k-4}F_{k-5}F_{k-5}F_{k-6}$.
  There are no occurrences of $F_{k-4}$ at positions 
  $f_{k-4}+1 < p \le f_{k-3}$, because $F_{k-4} = F_{k-5}F_{k-6}$ and then
  $F_5$ would occur more than twice within $F_5F_5$, which is not possible
  again by the argument of Lemma~\ref{lem:noinside}.
  The last occurrence of $F_{k-4}$ within
  $F_{k-2} = F_{k-3} F_{k-4}$ must then be at position $f_{k-3}+1$.
  By Lemma~\ref{lem:near}, the only one of those three occurrences that is preceded
  by $ba$ is the last one.
  
  So the first two phrases of the lex-parse of $F_k$
  are of lengths $\ell_1 = f_{k-1}-2$, and $\ell_2 = f_{k-4}+2$,
  respectively.
  The rest $R_k$ of $F_k$ is then of length $f_{k-3}$.
  From Eq.~(\ref{eq2}), we have that
  $F_k = F_{k-2}F_{k-2}F_{k-5}F_{k-4}$, so
  $R_k = F_{k-5}F_{k-4} = H_{k-3}ab$, by Lemma~\ref{lem:near}.
Since $R_k$ starts with $H_{k-3}$, which starts with $F_{k-4}$ by 
Lemma~\ref{lem:near}, and it finishes with $F_{k-4}$, which is
the lexicographically smallest occurrence of $F_{k-4}$, we have
$\ell_3 \ge f_{k-4}$.

  By Lemma~\ref{lem:sub}, we have that all the
  suffixes of $F_k$ that start at position
  $1 \le p \le 2f_{n-2}$,
  and begin with $F_{k-4}$, also begin
  with $F_{k-4}F_{k-5} = H_{k-3}ba > R_k$, by Lemma~\ref{lem:near}, 
  or with $F_{k-5}F_{k-4} = R_k$.
  Since the suffix $R_k$ is followed by $\$$, 
  those suffixes are lexicographically larger than $R_k$.
  Also, $F_{k-4}$ occurs only at the beginning and at the end of $R_k = H_{k-3}ab$: $F_{k-4}$ only occurs at the beginning of $H_{k-3}$, by Lemmas~\ref{lem:near} and \ref{lem:noinside}, and because $R_k$ and $F_{k-4}$ both finish with $ab$,
  $F_{k-4}$ does not occur as a suffix of $H_{k-3}a$.
  So, the third phrase of the lex-parse of $F_k$ is of length
  $f_{k-4}$.
  
  The new rest $R'_k$ is of length $f_{k-5}$. Also, by Eq.~(\ref{eq2}),

  \begin{equation*}
    \begin{split}
      F_k & = F_{k-2}F_{k-2}F_{k-5}F_{k-4} \\
      & = F_{k-2}F_{k-2}F_{k-5}F_{k-5}F_{k-6} \\
      & = F_{k-2}F_{k-2}F_{k-5}F_{k-6}F_{k-7}F_{k-6} \\
      & = F_{k-2}F_{k-2}F_{k-4}F_{k-7}F_{k-6}.
    \end{split}
  \end{equation*}

  Then $R_{k-1} = F_{k-7}F_{k-6}$.
  Similarly as for $R_k$, by Lemma~\ref{lem:sub},
  all the occurrences of $F_{k-6}$ starting at
  positions $1\le p \le 2f_{k-2}+f_{k-4}$
  are lexicographically larger than $R_{k-1}$.
  Also, $F_{k-6}$ occurs only at the beginning and at the end of
  $F_{k-7}F_{k-6}$.
  We then have that the fourth phrase is of length
  $f_{k-6}$.

  The process continues in the same way up to $f_5$.
  At this point, the rest of $F_k$ is $aab$.
  We prove that the last three phrases of the lex-parse of $F_k$ are of length 1.
  First, the suffix $aab$ is the lexicographically smallest suffix
  of $F_k$ that begins with $a$, by Lemma~\ref{lem:bbaaa} and because $F_k$ is terminated in $\$$.
  Thus, the first $a$ of $aab$ is an explicit phrase of length 1.
  Then, the suffixes that are lexicographically smaller than $ab$
  begin with $aa$. Thus, the length of the next phrase is also 1.
  Finally, the suffix $b$ is the lexicographically smallest suffix of $F_k$ that begins with $b$.
  Thus, $b$ is an explicit phrase of length 1.
  
  Therefore, the lengths of the phrases of the lex-parse of $F_k$ are
  \[ f_{k-1}-1, f_{k-4}+2, f_{k-4}, f_{k-6}, \dots, f_5, 1, 1, 1 \]
  and the number of phrases is $5 + \frac{k-7}{2}$.
\end{proof}

\end{document}